\documentclass[a4paper,12pt]{article}
\usepackage[bbgreekl]{mathbbol}
\usepackage[small]{titlesec}
%\UseRawInputEncoding
\usepackage[utf8]{inputenc}
\usepackage[T1]{fontenc} 
\usepackage[]{mdframed}
\usepackage{lineno}
%\linenumbers
% -----------------------------------------
% For biblatex:
%\usepackage[backend=biber,style=numeric-comp,natbib=false,giveninits=false,doi=true,isbn=false,url=false,date=year,sorting=nyvt,maxcitenames=50]{biblatex}
\usepackage[backend=biber,style=alphabetic,natbib=false,giveninits=true,doi=true,isbn=false,url=false,date=year,maxbibnames=99,sorting=nyt,defernumbers=true]{biblatex}
\DeclareNameAlias{default}{family-given}

\DeclareFieldFormat[article]{title}{#1}
\renewbibmacro{in:}{}
\DeclareFieldFormat[article]{pages}{#1}
\DeclareFieldFormat{journal}{#1}
\renewcommand\bf\bfseries

\renewbibmacro*{volume+number+eid}{%
	\printfield{volume}%
	%	 \setunit*{\adddot}% DELETED
	\setunit*{\addnbspace}% NEW (optional); there's also \addnbthinspace
	\printfield{number}%
	\setunit{\addcomma\space}%
	\printfield{eid}}
\DeclareFieldFormat[article]{number}{\mkbibparens{#1}}
\DeclareFieldFormat[article]{volume}{\textbf{#1}}
\DeclareFieldFormat{year}{\mkbibparens{#1}}
\DeclareBibliographyDriver{article}{%
	\printnames{author}:%
	\newunit\newblock
	\newunit\newblock
    \printfield[titlecase]{title}
	\printfield{journaltitle}
	\newunit
	\iffieldundef{number}{\printfield{volume}}{\printfield{volume}\addspace\printfield{number}}
	\addcomma\addspace\printfield{pages}\addspace
	%\newunit
	\printfield{year}
	%\finentry
}
\addbibresource{fsw-mdw-counterexample1.bib}
\AtEveryBibitem{\clearfield{month}}
\AtEveryBibitem{\clearfield{day}}
% -----------------------------------------
\usepackage[english]{babel}
\usepackage[T1]{fontenc}
\usepackage{csquotes}
\usepackage{bbm}
\usepackage[leqno]{amsmath}
\usepackage{esint}
\usepackage{amsfonts,amsthm,amsbsy,amssymb,dsfont,stmaryrd}
\usepackage[dvipsnames]{xcolor}
\usepackage{braket}

\makeatletter
\newcommand{\leqnomode}{\tagsleft@true\let\veqno\@@leqno}
\newcommand{\reqnomode}{\tagsleft@false\let\veqno\@@eqno}
\makeatother

\usepackage{mathtools}
\usepackage[makeroom]{cancel}
%\mathtoolsset{showonlyrefs}

\numberwithin{equation}{section}

\newcommand\myshade{85}
\colorlet{mylinkcolor}{violet}
\colorlet{mycitecolor}{YellowOrange}
\colorlet{myurlcolor}{Aquamarine}

\usepackage[left=2.5cm,right=2.5cm,top=2.5cm,bottom=2.5cm]{geometry}
\usepackage[unicode=true,pdfusetitle,
bookmarks=true,bookmarksnumbered=false,bookmarksopen=false,
breaklinks=false,pdfborder={0 0 1},backref=false,
linkcolor = mylinkcolor!\myshade!black,
citecolor = mycitecolor!\myshade!black,
urlcolor  = myurlcolor!\myshade!black,
colorlinks = true,
]
{hyperref}
\usepackage[nameinlink]{cleveref}

%%%%%%%%%%%%%%%%%%%%%%%%%%

%%%%% GRAPHICS %%%%%%

\usepackage{tikz}
\usetikzlibrary{arrows}
\usetikzlibrary{arrows.meta}
\usetikzlibrary{intersections}
\usetikzlibrary{calc}
\usepackage{graphicx}
\usepackage{caption}
\usepackage{slashed}
\definecolor{ct_black}{HTML}{000000}
\definecolor{ct_orange}{HTML}{ED872D}
\definecolor{ct_purple}{HTML}{7A68A6}
\definecolor{ct_blue}{HTML}{348ABD}
\definecolor{ct_turquoise}{HTML}{188487}
\definecolor{ct_red}{HTML}{E32636}
\definecolor{ct_pink}{HTML}{CF4457}
\definecolor{ct_green}{HTML}{467821}

\definecolor{ct2_green}{HTML}{9FF781}
\definecolor{ct2_green_dark}{HTML}{088A08}

\theoremstyle{plain}
\newtheorem{thm}{\protect\theoremname}[section]
\theoremstyle{plain}
\newtheorem{lem}[thm]{\protect\lemmaname}
\theoremstyle{plain}

\theoremstyle{plain}
\newtheorem{prop}[thm]{\protect\propositionname}
\theoremstyle{plain}
\newtheorem{claim}[thm]{\protect\claimname}
\newtheorem{assumption}[thm]{\protect\assumptionname}

\theoremstyle{remark}
\newtheorem{rem}[thm]{\protect\remarkname}

\theoremstyle{definition}
\newtheorem{defn}[thm]{\protect\definitionname}
\theoremstyle{plain}

\providecommand{\assumptionname}{Assumption}
\providecommand{\conventionname}{Convention}
\providecommand{\claimname}{Claim}
\providecommand{\corollaryname}{Corollary}
\providecommand{\definitionname}{Definition}
\providecommand{\lemmaname}{Lemma}
\providecommand{\propositionname}{Proposition}
\providecommand{\remarkname}{Remark}
\providecommand{\theoremname}{Theorem}
\providecommand{\examplename}{Example}

\crefname{section}{Section}{Sections}
\crefname{appendix}{Appendix}{Appendices}
\crefname{figure}{Figure}{Figures}
\crefname{assumption}{Assumption}{Assumptions}
\crefname{thm}{Theorem}{Theorems}
\crefname{lem}{Lemma}{Lemmas}
\crefname{rem}{Remark}{Remarks}
\crefformat{equation}{(#2#1#3)}
\crefname{table}{Table}{Tables}

%\crefname{equation}{equation}{equations}
%\Crefname{equation}{Equation}{Equations}% For beginning \Cref
\crefrangelabelformat{equation}{(#3#1#4--#5#2#6)}

\crefmultiformat{equation}{(#2#1#3}{, #2#1#3)}{#2#1#3}{#2#1#3}
\Crefmultiformat{equation}{(#2#1#3}{, #2#1#3)}{#2#1#3}{#2#1#3}

\newtheorem*{lem*}{\protect\lemmaname}

% ############# MATH SHORTCUTS 

\newcommand{\ee}{\operatorname{e}}
\newcommand{\ii}{\operatorname{i}}

\newcommand{\vzero}{v^{\radial}}

\newcommand{\NN}{\mathbb{N}}
\newcommand{\RR}{\mathbb{R}}
\newcommand{\CC}{\mathbb{C}}

\newcommand{\calF}{\mathcal{F}}

\newcommand{\calO}{\mathcal{O}}
\newcommand{\calI}{\mathfrak{P}}

\newcommand{\calP}{\mathcal{P}}

%\newcommand{\trR}}{\widehat{R}}

%$\left(\widehat{R}_{ d }\psi\right)(x)$
 %{\eig}{-E}
%{\meig}{E}

%\newcommand{\norm}[1]{\|#1\|}
\newcommand\norm[1]{\left\lVert#1\right\rVert}

\newcommand\abs[1]{\left|#1\right|}

\newcommand{\ip}[2]{\langle #1, #2 \rangle}

\newcommand{\dif}{\operatorname{d}}

\renewcommand{\Re}[1]{\operatorname{\mathbb{R}\mathbbm{e}}\left\{#1\right\}}

\newcommand{\vf}{\varphi}
\newcommand{\Id}{\mathds{1}}
% Specific shortcuts here

%\newcommand{\dist}[1]{\mathrm{dist}(#1)}
\newcommand{\dist}{\mathrm{dist}}

%\newcommand{\BLO}{\mathcal{B}(\HH)}

% Macros from chiral-1d-loc paper

% Macros from chiral-1d-bec paper

\newcommand{\supp}{\operatorname{supp}}

% Macros from Floquet paper

\newcommand{\radial}{\circ}

\newcommand{\nc}{\normalcolor}

\usepackage{environ}

\NewEnviron{malign}{%
	\begin{align}\begin{split}
			\BODY
	\end{split}\end{align}
}

\newcommand{\eq}[1]{\begin{align*}#1\end{align*}}
\newcommand{\eql}[1]{\begin{align}#1\end{align}}

\newcommand{\br}[1]{\left(#1\right)}

% don't include subsections in ToC

\setcounter{tocdepth}{2} % set 2 to include subsections

%%%%%%%%%%%%%%%%%%%%

%%%%% METADATA %%%%%

\title{Magnetic double-wells: Absence of Tunneling}
\author{\href{cf@math.princeton.edu}{Charles L. Fefferman}, \href{mailto:jacobshapiro@princeton.edu}{Jacob Shapiro}\\
	{\footnotesize Department of Mathematics, Princeton University}\\
	 \href{miw2103@columbia.edu}{Michael I. Weinstein}\\
		\footnotesize{Department of Applied Physics and Applied Mathematics,}\\\footnotesize{and Department of Mathematics, Columbia University}
}

%%%%%%%%%%%%%%%%%%%%

\begin{document}
	\reqnomode
	
	\maketitle
	\begin{abstract}
		We present a magnetic double-well Hamiltonian where the tunneling between the two wells vanishes, as recently announced in \cite{FSW24}.
	\end{abstract}
%	\tableofcontents

\section{Introduction}

Quantum tunneling in double-well systems is a cornerstone phenomenon in mathematical physics. In a non-magnetic setting, a particle localized in one of two deep and well-separated potential wells has a nonzero probability of tunneling to the other well. The tunneling time is inversely proportional to the distance between the lowest two eigenvalues of the double-well Hamiltonian, a quantity
which is (typically) exponentially small in the well-depth and well-separation.
This paradigm, developed both heuristically and rigorously, goes back to the classical literature and underlies a great variety of effects in physics and chemistry; see, for example, \cite{Landau_Lifshitz_vol_3} in physics or \cite{Simon_1984_10.2307/2007072,Helffer_Sjostrand_1984} in mathematics. The latter two works derive a lower bound on the eigenvalue splitting assuming the single-well potential has a single non-degenerate minimum. In \cite{FLW17_doi:10.1002/cpa.21735} lower bounds are derived assuming rather that the single-well is compactly supported.

In contrast, adding a \emph{constant magnetic field} fundamentally alters the picture. Early works on the magnetic case resolved the case of weak magnetic fields, confirming agreement with the non-magnetic case \cite{Helffer_Sjostrand_1987_magnetic_ASNSP_1987_4_14_4_625_0}. For strong magnetic fields, magnetic translations endow localized states with nontrivial complex phases, yielding interference among tunneling paths. For example in \cite{FSW_22_doi:10.1137/21M1429412} we showed that for double-well magnetic systems whose single-well is radial, the magnetic field effect leads to an exponentially smaller tunneling amplitude. However, such systems with radial single-well potential were shown to still possess a strictly positive lower bound on tunneling \cite{FSW_22_doi:10.1137/21M1429412,HelfferKachmar2024,Morin2024}.

Recently we have announced \cite{FSW24} that the magnetic setting admits a phenomenon that has no analogue in the non-magnetic case: \emph{exact vanishing of tunneling} in a double-well system built from a suitable non-radial single-well potential. In this paper we present a complete proof. We construct a family of (non-radial) single-well potentials such that, in a strong constant magnetic field and for large well depth, the associated symmetric double-well has zero eigenvalue splitting between its two lowest levels, so that tunneling is entirely eliminated. Moreover, by varying parameters within this family one can flip the parity of the ground state (from even to odd), 
via a coalescence and re-emergence of two distinct eigenvalues. 

A recent preprint \cite{exner2025fleamagneticelephant} studies diminished tunneling due to asymmetry of the single-well Hamiltonian in a setting with ``magnetic wells'' and no electronic potential.

Let us make the above discussion more precise by setting up notation. We work with the Hilbert space $L^2(\RR^2)$ and on it consider magnetic Schr\"odinger operators of the form
\eq{
  (P-\frac12\lambda X^\perp)^2+V_\lambda(X)
} where $P\equiv-\ii\nabla$ is the momentum operator and $X$ is the position operator, with \eql{X^\perp\equiv(-X_2,X_1)\,;} $\lambda>0$ is a sufficiently large coupling constant which controls simultaneously the scaling of the magnetic field strength (which grows as $\lambda$) and the depth (and possibly shape) of the potential $V_\lambda:\RR^2\to(-\infty,0]$. Such Hamiltonians describe the dynamics of an electron bound to a two-dimensional plane and subject to a constant magnetic field pointing in the negative $3$-axis direction. We consider simultaneously two main operators of this type: the \emph{single-well} Hamiltonian $h_\lambda$ and \emph{double-well} Hamiltonian $H_\lambda$. They are defined as 
\eql{
  h_\lambda := (P-\frac12\lambda X^\perp)^2+v_\lambda(X)
} and 
\eql{\label{eq:double-well Hamiltonian}
  H_\lambda := (P-\frac12\lambda X^\perp)^2+v_\lambda(X+d)+v_\lambda(-X+d)\,.
} Here $v_\lambda:\RR^2\to(-\infty,0]$ is a smooth compactly supported (say within $B_a(0)$ for some $a>0$) single-well potential and $d\equiv(d_1,0)\in\RR^2\setminus\Set{0}$ is the displacement of each well taken, without loss of generality, to lie along the $1$-axis. The notation $v_\lambda$ indicates that $v_\lambda$ depends on $\lambda$. We emphasize that $x\mapsto v_\lambda(x+d)+v_\lambda(-x+d)$ is \emph{inversion} symmetric, and contrast it with the commonly studied case of $x\mapsto v_\lambda(x+d)+v_\lambda(x-d)$. The two cases agree if $v_\lambda$ itself is inversion symmetric, i.e. $v_\lambda(-x)=v_\lambda(x)$ for all $x\in\RR^2$. We generally allow for $\norm{v_\lambda}_\infty$ to have order of magnitude $\lambda^2$.

In the simplest scenario, e.g., that considered in \cite{FSW_22_doi:10.1137/21M1429412} we made the choice $v_\lambda := \lambda^2 v$ but here we shall be concerned with more general $\lambda-$ dependence of $v_\lambda$. Indeed, we shall be interested in potentials of the form
\eq{
  v_\lambda := \lambda^2 v^{\radial} + \tau_{\lambda,M,D}\sum_{\nu=1}^4 W_{\lambda,\nu}\ ,
} where $v^{\radial}:\RR^2\to[-1,0]$ is a smooth \emph{radial} potential which obeys all the hypotheses of \cite{FSW_22_doi:10.1137/21M1429412} (see \cref{sec:single-well-v0-h0} below), $\tau_{\lambda,M,D}$ is a strictly positive small parameter to be chosen depending on $\lambda$, and $W_{\lambda,\nu}$ are small perturbing potentials, whose shape also depends on $\lambda$.

Thanks to our assumptions on $v^{\radial}$ and the assumption that $\lambda$ is sufficiently large, $h_\lambda$ has ground state energy $e_{0,\lambda}=-\lambda^2+\mathrm{o}(\lambda^2)$ which has distance to the remainder of the spectrum with at least order of magnitude $c_{\mathrm{gap}}$, an order one constant. It follows that $E_{0,\lambda}\leq E_{1,\lambda}$, the two lowest eigenvalues of $H_\lambda$, are also at a distance of at least order $c_{\rm gap}$ from the remainder of the spectrum of $H_\lambda$. We are chiefly concerned with the \emph{double-well eigenvalue splitting} defined by \eq{
  \Delta_0(\lambda) := E_{1,\lambda} - E_{0,\lambda}\,.
} 
%\textcolor{red}{\bf Let's drop subscript "0" and write $\rho(\lambda)$, $\Delta(\lambda)$ and $\mathfrak{S}(\lambda)$ as in \cref{sec:roadmap7}, since we only discuss the ground state splitting.}

The Hamiltonian $H_\lambda$ defined in \cref{eq:double-well Hamiltonian} enjoys an inversion symmetry $[H_\lambda,\calI]=0$ where $\calI:\psi(x)\mapsto\psi\br{-x}$; not necessarily so for $h_\lambda$. As a result, we may decompose $L^2(\RR^2)$ according to the even and odd eigenspaces of $\calI$ \eql{\label{eq:Hilbert space decomposition into even and odd}
L^2(\RR^2) =: L^2_{\mathrm{even}}\oplus L^2_{\mathrm{odd}}
}
%\textcolor{red}{\bf maybe simply $L^2_{\rm even}$ and $L^2_{\rm odd}$?}
and $H_\lambda$ is diagonal w.r.t. that decomposition. We denote the ground state energy of $\left.H_\lambda\right|_{L^2_\sigma}$ for $\sigma=\mathrm{even},\mathrm{odd}$ by $E_{0,\sigma,\lambda}$. Thus a further quantity of interest is the \emph{signed} eigenvalue difference, \eql{\label{eq:signed eigenvalue difference}
\mathfrak{S}_0(\lambda) := E_{0,\mathrm{odd},\lambda}-E_{0,\mathrm{even},\lambda}\,.
} %\textcolor{red}{\bf remove subscript $0$ and write
 %$E_{\mathrm{odd},\lambda}-E_{\mathrm{even},\lambda}$ 
 %}\\
 We emphasize that $\mathfrak{S}_0(\lambda)$ need not be positive. If $\lambda$ is sufficiently large we shall see that \eq{
  \Set{E_{0,\lambda}, E_{1,\lambda}} = \Set{E_{0,\mathrm{odd},\lambda},E_{0,\mathrm{even},\lambda}}
} (\cref{prop:even-odd-eigs}) and the question is whether $E_{0,\lambda}=E_{0,\mathrm{even},\lambda}$ or $E_{0,\lambda}=E_{0,\mathrm{odd},\lambda}$. An emergent quantity here is the \emph{hopping coefficient} which will also play a role in the present analysis. Its definition (\cref{def:hopping})  is given by
\eql{\label{eq:hopping coefficient}
  \rho_0(\lambda) &:= \left\langle\widehat{R}^{-d}\vf_{0,\lambda}\br{H_\lambda-e_{0,\lambda}\Id}\calI\widehat{R}^{-d}\vf_{0,\lambda}\right\rangle\\
  &= \int_{x\in\RR^2} \overline{\vf_{0,\lambda}(x+d)}v_\lambda(x+d) \exp\br{\ii\lambda d_1 x_2}\vf_{0,\lambda}(-x+d)\dif{x}\nonumber
} Here $\vf_{0,\lambda}$ is the $L^2-$ normalized ground state of $h_\lambda$; $\widehat{R}^{ d }$ is Zak's magnetic translation operator \cite{Zak_1964_PhysRev.134.A1602}, given by \eql{\label{eq:magnetic translations}\widehat{R}^z := \exp\br{-\ii z\cdot \br{P+\frac12\lambda X^\perp}}\qquad(z\in\RR^2)\,.} Since $z\cdot x^\perp = -z^\perp\cdot x$,   \eql{
 (\widehat{R}^{ z }f)(x) \equiv \exp\br{\ii\frac\lambda2 x \cdot  z^\perp} f(x-z)\qquad(x,z\in\RR^2\,,f:\RR^2\to\CC)\,.
}

In the non-magnetic setting, one always has $\mathfrak{S}_0(\lambda)>0$ and $\rho_0(\lambda)<0$. In the present magnetic setting, however, as we shall see, $\rho_0(\lambda)$ remains real (thanks to the $\calI$ symmetry of $H_\lambda$), but can be positive, negative or zero due to an oscillating exponential in \cref{eq:hopping coefficient}. Thus, we now separately have 
\eq{
  \mathfrak{S}_0(\lambda) \approx -2\rho_0(\lambda)
} and \eq{
  \Delta_0(\lambda) = \abs{\mathfrak{S}_0(\lambda)}\,;
} see \cref{prop:Epm1-expand-Delta-expand} below.

Our main theorem is
\begin{thm}\label{thm:main theorem}
  Fix $v^{\radial}:\RR^2\to[-1,0]$, a radial single-well potential which obeys all the hypotheses of \cite{FSW_22_doi:10.1137/21M1429412} (spelled out below in \cref{sec:single-well-v0-h0}). 
  Let $d_1>0$ be given and sufficiently large. Then there exists some $\lambda_\star>0$ (depending on both $v^{\radial},d_1$) such that for all $\lambda\geq\lambda_\star$ there exists a family $\calF_\lambda$ of compactly supported smooth potentials $v_\lambda:\RR^2\to(-\infty,0]$ (dependent on $\lambda$) such that
  \begin{enumerate}
    \item There exists some $ C<\infty$ independent of $\lambda$, such that \eq{
      \sup_{v_\lambda\in\calF_\lambda}\left[\norm{\lambda^2v^{\radial}-v_\lambda}_{L^\infty(\RR^2)} + \mathrm{measure}\br{\Set{x\in\RR^2 | \lambda^2 v^{\radial}(x)\neq v_\lambda(x)}}\right] \leq \exp\br{-C\lambda} \,. 
    } 
    \item \emph{Absence of quantum tunneling}: There exists $v_\lambda\in\calF_\lambda$ such that $\Delta_0(\lambda)=0$. Consequently $H_\lambda$ has no tunneling from its left well to its right well at low energies.
    \item \emph{Vanishing of quantum hopping}: There exists $v_\lambda\in\calF_\lambda$ such that $\rho_0(\lambda)=0$.
    \item \emph{Ground state can take on even or odd parity}: By deforming within the family $\calF_\lambda$, the signed eigenvalue difference, $\mathfrak{S}_0(\lambda)$, can be made to change its sign.
Correspondingly, the ground state eigenfunction $\Phi_{0,\lambda}$ of $H_\lambda$ transitions between 
being symmetric and being anti-symmetric with respect to the origin. At the transition point, the ground state eigenspace of $H_\lambda$ is two-dimensional, spanned by one even and one odd eigenfunction.
  \end{enumerate}
\end{thm}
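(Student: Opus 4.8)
The plan is to reduce all four assertions to a single one: within a suitable one–parameter deformation inside $\calF_\lambda$, the hopping coefficient $\rho_0(\lambda)$ depends continuously on the parameter and changes sign. Granting this, items~1--4 follow essentially formally. By construction $\calF_\lambda$ will consist of potentials within $e^{-C\lambda}$ of $\lambda^2 v^{\radial}$ in the combined sense of item~1; an $L^\infty$-perturbation of that size changes every eigenvalue of $h_\lambda$ by at most $e^{-C\lambda}$ and does not affect its essential spectrum (determined by the behaviour at infinity), so $e_{0,\lambda}=-\lambda^2+\mathrm{o}(\lambda^2)$ with an order-one spectral gap survives for every $v_\lambda\in\calF_\lambda$, and \cref{prop:even-odd-eigs} and \cref{prop:Epm1-expand-Delta-expand} apply throughout: $\{E_{0,\lambda},E_{1,\lambda}\}=\{E_{0,\mathrm{even},\lambda},E_{0,\mathrm{odd},\lambda}\}$, $\Delta_0(\lambda)=\abs{\mathfrak S_0(\lambda)}$, and $\mathfrak S_0(\lambda)=-2\rho_0(\lambda)\,(1+\mathrm{o}(1))$. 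Since $v_\lambda$, hence $E_{0,\sigma,\lambda}$ and $\rho_0(\lambda)$, depend continuously (indeed real-analytically, the relevant ground states being uniformly spectrally isolated) on the parameter, a sign change of $\rho_0$ yields at once: a zero of $\rho_0$ (item~3); a sign change, hence a zero, of $\mathfrak S_0$, and therefore of $\Delta_0=\abs{\mathfrak S_0}$ (item~2); and, at such a zero, $E_{0,\lambda}=E_{1,\lambda}$ --- these being the ground energies of $H_\lambda|_{L^2_{\mathrm{even}}}$ and $H_\lambda|_{L^2_{\mathrm{odd}}}$, one from each sector, so the ground eigenspace of $H_\lambda$ is two-dimensional, spanned by one even and one odd eigenfunction, while on either side of the zero the ground state is simple and carries the parity of the lower sector (item~4). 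Item~1 then holds by the very construction of $\calF_\lambda$.

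To build $\calF_\lambda$ I would perturb $\lambda^2 v^{\radial}$ by four small, smooth, nonpositive bumps --- the $W_{\lambda,\nu}$ --- arranged in two reflection-symmetric pairs: an \emph{active} pair of common depth $A=A_\lambda$ and support of measure $s=s_\lambda$ centred near $(d_1,\pm b)$, and a mirrored \emph{balancing} pair near $(-d_1,\pm b)$, with $\max(A,s)\le e^{-C\lambda}$ for a small $C>0$ to be fixed, and $b\in[0,\pi/(\lambda d_1)]$ the deformation parameter (packaged into $M,D$). Placed near $(\pm d_1,0)$, the bumps sit, after forming $H_\lambda=(P-\tfrac12\lambda X^\perp)^2+v_\lambda(X+d)+v_\lambda(-X+d)$, near the origin $X=0$ midway between the two wells at $\pm d$ (and near $\pm 2d$, where they are harmless). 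The balancing pair makes $v_\lambda$ invariant under $x_1\mapsto -x_1$, so $\vf_{0,\lambda}$ stays centred at the origin and $e_{0,\lambda}$ barely moves; and all four bumps are symmetric about the $1$-axis, so $v_\lambda$ is even in $x_2$ and the perturbed $h_\lambda$ still commutes with the antiunitary $T\colon\psi(x_1,x_2)\mapsto\overline{\psi(x_1,-x_2)}$. Hence $\vf_{0,\lambda}$ may be taken $T$-symmetric, and it is $e^{-c\lambda}$-close --- in $L^2$ and, by elliptic regularity, locally uniformly --- to the real, radial, strictly positive unperturbed ground state. Inserting this into the exact identity $(H_\lambda-e_{0,\lambda})\,\calI\widehat R^{-d}\vf_{0,\lambda}=v_\lambda(X+d)\,\calI\widehat R^{-d}\vf_{0,\lambda}$ underlying \cref{eq:hopping coefficient}, and using that of the four bumps only the active pair lies both where $v_\lambda$ has changed and where $\overline{\vf_{0,\lambda}(y)}\,\vf_{0,\lambda}(2d-y)$ is not doubly-exponentially small, one obtains
\begin{equation*}
\rho_0(\lambda)=\rho_0^{\radial}(\lambda)\;-\;2A\,s\,\abs{\vf_{0,\lambda}(d)}^2\cos(\lambda d_1 b)\,\bigl(1+\mathrm{o}(1)\bigr)\;+\;(\text{terms of higher exponential order}),
\end{equation*}
where $\rho_0^{\radial}(\lambda)$ is the hopping coefficient of $\lambda^2 v^{\radial}$; the cosine is exactly the combination forced by the reality of $\rho_0$ (itself a consequence of the $\calI$-symmetry of $H_\lambda$).

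The decisive, genuinely \emph{magnetic} step is the comparison of the two main terms. The active bumps, sitting at the midpoint between the wells, contribute to $\rho_0$ with \emph{no} oscillatory cancellation and only like $\abs{\vf_{0,\lambda}(d)}^2\sim e^{-\lambda d_1^2/2}$. By the principal estimate of \cite{FSW_22_doi:10.1137/21M1429412}, however, $\rho_0^{\radial}(\lambda)$ is real of size $e^{-S_{\mathrm{mag}}\lambda}$ (up to a polynomially bounded prefactor) with $S_{\mathrm{mag}}$ \emph{strictly larger} than $d_1^2/2$: in a strong field the tunneling weight decays like $e^{-c\lambda d_1^2}$ rather than the non-magnetic $e^{-c\lambda d_1}$, and the oscillatory factor $e^{\ii\lambda d_1 y_2}$ in \cref{eq:hopping coefficient} exacts a further exponential cost. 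Fixing $C>0$ with $2C<S_{\mathrm{mag}}-d_1^2/2$ (any $C>0$ already preserves the single-well picture; this is the binding constraint) therefore makes $A\,s\,\abs{\vf_{0,\lambda}(d)}^2\sim e^{-(2C+d_1^2/2)\lambda}$ dominate $\abs{\rho_0^{\radial}(\lambda)}$ --- and likewise swamp the $\mathrm{o}(1)$-error in $\mathfrak S_0=-2\rho_0(1+\mathrm{o}(1))$ --- for all $\lambda\ge\lambda_\star$. Consequently $\rho_0(\lambda)<0$ at $b=0$ and $\rho_0(\lambda)>0$ at $b=\pi/(\lambda d_1)$, so continuity in $b$ supplies the sign change, and the reduction of the first paragraph closes the argument.

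Everything delicate sits in the last two paragraphs. First, one must establish that the single-well input borrowed from \cite{FSW_22_doi:10.1137/21M1429412} --- ground energy, spectral gap, and Agmon/WKB exponential decay of $\vf_{0,\lambda}$ both inside the well and in the classically forbidden region --- is stable under the $e^{-C\lambda}$ perturbation, so that the error terms displayed above are genuinely of higher exponential order; the $\delta v$-to-$\vf_{0,\lambda}$ feedback and the $e_{0,\lambda}$-shift must be shown to fall in this category. Second, and this is where the magnetic character is essential and \cite{FSW_22_doi:10.1137/21M1429412} must be invoked, one needs the quantitative facts $S_{\mathrm{mag}}>d_1^2/2$ together with a matching lower bound $\abs{\rho_0^{\radial}(\lambda)}\gtrsim e^{-S_{\mathrm{mag}}\lambda}$. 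The main obstacle is precisely the uniform-in-$\lambda$ bookkeeping of exponents that guarantees $A\,s\,\abs{\vf_{0,\lambda}(d)}^2$ outweighs $\rho_0^{\radial}$ and every competing error term simultaneously, for all $\lambda\ge\lambda_\star$.
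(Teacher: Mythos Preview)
Your proposal is essentially correct and follows a genuinely different geometric route from the paper. Both arguments reduce to showing that $\rho_0$, as a continuous function of a placement parameter, changes sign via a cosine arising from the magnetic phase $e^{\ii\lambda d_1 y_2}$; the LCAO/Lyapunov--Schmidt reduction and the subsequent intermediate-value step are the same. The difference is \emph{where} the sophons sit. You put the active pair at the midpoint between the wells, so that $|\vf_{0,\lambda}(d)|^2 \sim e^{-\lambda d_1^2/2}$ enters directly and the dominance reduces to the single inequality $S_{\mathrm{mag}}\approx d_1^2 > d_1^2/2$. The paper instead places the sophons at distance $D$ from each planet with $2d_1 = D^{3/2}\gg D$, keeps the two single-well supports well separated, and expands $\rho_0$ via the Landau resolvent kernel $K$ into planet--planet, planet--sophon, and sophon--sophon interactions $\Omega(\nu,\nu')$; the dominant balance is then a multiscale comparison among the exponents $D^3,\ D^{5/2},\ D^2$, enforced by the specific choice $\tau_{\lambda,M,D}=e^{-\frac{\lambda}{4}(2D^{5/2}-3D^2)}$. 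Your route is more direct, at the price that the left- and right-well sophons coincide at the origin, so $\supp v_\lambda^L$ and $\supp v_\lambda^R$ overlap; this does not spoil the LCAO analysis (the error $\|v_\lambda^R\vf_\lambda^L\|^2\sim A^2 s\,e^{-\lambda d_1^2/2}$ is still an $e^{-C\lambda}$ factor below the main term $As\,e^{-\lambda d_1^2/2}$), but it is worth flagging. One small correction: the oscillatory factor does \emph{not} push $S_{\mathrm{mag}}$ strictly above $d_1^2$ --- \cite{FSW_22_doi:10.1137/21M1429412} in fact proves the matching lower bound $|\rho_0^\circ|\gtrsim e^{-\lambda d_1^2 - C\lambda d_1}$ --- but you only need $S_{\mathrm{mag}}>d_1^2/2$, and that already follows from the Gaussian decay rate of $\vf_\lambda^\circ$.
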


In forthcoming papers we shall present the remainder of the results announced and conjectured in \cite{FSW24}. In particular, we plan to present
\begin{enumerate}
  \item A lower bound on the \emph{average} splitting, i.e., an upper bound on $-\log\br{\Delta_0}$ as well as hopping $-\log\br{\rho_0}$, when one averages over $\lambda$ in the interval, say $[\Lambda,2\Lambda]$ for $\Lambda>0$ sufficiently large. This shows that zero tunneling does not occur for generic potentials and $\lambda$. However, we will show in an upcoming paper that there is an open set of potentials $v$ in any reasonable topology such that there exists a $\lambda$ at which tunneling vanishes for the one-well potential $\lambda^2v$.
  \item An example of a construction for a double-well potential $V_\lambda:\RR^2\to(-\infty,0]$ for which $\rho_0(\lambda)=0$, and such that $\calI V_\lambda = V_\lambda$ \emph{need not hold}.
  \item An example of a potential $v:\RR^2\to(-\infty,0]$ such that the signed eigenvalue splitting $\mathfrak{S}_0$ and hopping coefficient $\rho_0$, arising from the double-well system associated to the single-well potential $\lambda^2v$ and magnetic field strength $\lambda$, change sign infinitely often as $\lambda\to\infty$.
  
  %An example of a construction for a sequence $\Set{\lambda_n}_{n\in\NN}$ diverging to infinity and a double-well potential $V:\RR^2\to(-\infty,0]$ (independent of the sequence) such that $\lambda\mapsto\rho_0(\lambda)$ and $\lambda\mapsto\Delta_0(\lambda)$ change signs on the given sequence $\Set{\lambda_n}_{n\in\NN}$.
\end{enumerate}

\paragraph{Intuitive explanation of the mechanism.}
Engineered small asymmetric deviations from radial symmetry in a single-well, as a mechanism for reducing and even completely eliminating 
tunneling between wells in quantum magnetic systems, was first explained - in brief - in \cite{FSW24}. We start with what we call a "planet": a radially-symmetric single-well potential $v^{\radial}$ whose tunneling effect is well-understood (since \cite{FSW_22_doi:10.1137/21M1429412}). On top of it we add so-called "sophons": perturbations $W$ which are placed nearby to $v^{\radial}$; their support is exponentially small and moreover they have a coupling constant which is itself also exponentially small. However, the key is where they are placed: thanks to the oscillations afforded by the magnetic translations, we can position the sophons so that: (1) some of the sophon-planet interaction terms dominate the planet-planet interaction term in a basic formula for $\rho_0$ and (2) the overall sign of (the real quantity) $\rho_0$ can be made positive or negative by slightly perturbing the position of the sophons (see \cref{eq:rho_cos} below). Since $\rho_0$ asymptotically controls $\mathfrak{S}_0$ as well as $\Delta_0$, this allows for the vanishing of both.
% \js{Add here: (1) Gaussian integral with oscillatory phase which vanishes? (2) Caricature calculation of $\rho_0$ where the cosine pops out?

% \paragraph{The emergence of the cosine in $\rho$.}
% TODO
% \paragraph{A caricature of the vanishing.} Let us pretend that $\vf_\lambda(x) = \sqrt{\lambda/2\pi}\exp\br{-\frac\lambda4\norm{x}^2}$ and choose \eq{
% v_\lambda(x) := -\lambda^2\exp\br{-\norm{x}^2}-\lambda^2\tau_{\lambda,M,D}\exp\br{-\norm{x-\eta}^2}
% } for some $\tau_{\lambda,M,D}>0$ and $\eta\in\RR^2\setminus\Set{0}$. Then we find 
% \eq{
% \rho(\lambda) &= -\lambda^2\frac{\lambda}{2\pi}\int_{x\in\RR^2}\dif{x}\exp\br{-\frac{\lambda}{4}\br{\norm{x}^2+\norm{x-2d}^2}-\ii\lambda d_1 x_2}\br{\exp\br{-\norm{x}^2}+\tau_{\lambda,M,D}\exp\br{-\norm{x-\eta}^2}}\\
% &=-\frac{\lambda^3}{\lambda+2}\,e^{-\lambda d_1^2}
% \left[\,1+\tau_{\lambda,M,D}\exp\!\left(-\frac{\lambda}{\lambda+2}\|\eta\|^2+\frac{2\lambda d_1}{\lambda+2}\,(\eta_1-\mathrm{i}\eta_2)\right)\right]
% }\,.
% Hence it is clear that if we choose \eq{
% \tau_{\lambda,M,D} := \exp\!\Bigl(\frac{\lambda}{\lambda+2}\norm{\eta}^2-\frac{2\lambda d_1}{\lambda+2}\,\eta_1\Bigr)>0\,
% } and \eq{
% \eta_2 := -\frac{(2k+1)\pi(\lambda+2)}{2\lambda d_1},\quad k\in\mathbb{Z}\,
% } we find $\rho(\lambda)=0$.
% }

\paragraph{Consequences and outlook.}
Beyond its intrinsic spectral interest, a robust handle on the sign and magnitude of $\mathfrak{S}_0$ has implications for engineered quantum systems. In periodic settings, suppressed nearest-neighbor hopping leads to dispersionless (flat) bands near the atomic energy, a regime associated with enhanced interaction, nonlinearities, and correlation effects. %\cite{PhysRevLett.127.126405}. 
Our construction suggests a possible route to nearly flat magnetic bands in 2D tight-binding models obtained from continuum magnetic Schr\"odinger operators, with quite general (not necessarily translation invariant) underlying atomic lattice, by strong-binding reduction; see the discussion in \cite{ShapWein22}. It is also natural to ask about extensions to three dimensions, other magnetic field orientations, and the interplay with additional (e.g.\ spin or spin--orbit) couplings.

\paragraph{Organization of this paper.} In \cref{sec:results-roadmap} we sketch the roadmap for the proof of \cref{thm:main theorem} and divide it into several main steps. In \cref{sec:implement_the_roadmap} we implement this roadmap and provide any necessary estimates. The appendices recall basic facts about the resolvent of the Landau Hamiltonian, as well as Gaussian decay of magnetic bound states.

\paragraph{Notation and conventions.}
\begin{enumerate}
  \item We set $\hbar=2m=1$ throughout.
  \item The magnetic momentum shall be denoted by 
  \eql{
    \calP_\lambda := P -\frac12\lambda X^\perp\,.
  } Hence $\calP_\lambda^2$ is the Landau Hamiltonian with magnetic field strength $\lambda$ in the symmetric gauge.
  \item Lower case generally refers to single-well objects, e.g. potentials, Hamiltonians, eigenvalues and functions, whereas upper case refers to double-well objects.
  \item Since we are almost exclusively concerned with ground states and their energies it shall be tiresome to carry a subscript zero on all objects, and hence we drop it and assume all states and their associated energies correspond to ground states unless otherwise noted. 
  \item The circle superscript, for  example as in $v^\circ$,  shall refer to the \emph{radial, unperturbed} system, single- or double-well.
    \item Throughout, $x^\perp\equiv(-x_2,x_1)$ and $x\wedge y\equiv x_1y_2-x_2y_1$.
  \item Legend for the list of symbols that shall appear below:
  \begin{enumerate}
\item  $\vf_\lambda$  denotes the $L^2(\RR^2)-$ normalized ground state. Its ground state energy is denoted $e_\lambda$.
     \item $H_\lambda$ is the double-well Hamiltonian (one copy of the single-well \emph{translated} to $-d$, another copy \emph{reflected} and translated to $+d$). Its ground state is denoted $\Phi_\lambda$ and its two lowest energies are $E_{0,\lambda}\le E_{1,\lambda}$. 
     \item $v^{\radial}:\RR^2\to[-1,0]$ denotes the radial single-well potential, the associated  Hamiltonian is $h^{\radial}_{\lambda}\equiv\calP_\lambda^2+\lambda^2v^{\radial}(X)$. The ground state is then $\vf^{\radial}_{\lambda}$ and its ground state energy $e^{\radial}_{\lambda}$. 
     %The tunneling of the double-well system %each of whose single-well contributions is a translate of  $v^{\radial}$, 
     %is controlled by the hopping coefficient $\rho(\lambda)$.
     \item Objects appearing with superscript $L$ are \emph{translated} to $-d$. Objects appearing with superscript $R$ are reflected and translated to $d$. Hence $v_\lambda^L$ is the single-well potential centered at $-d$, but $v_\lambda^R$ is the reflected single-well potential, centered at $d$. Moreover, $v^{\radial,L},v^{\radial,R}$ is the \emph{radial} potential $v^{\radial}$ translated to $-d$ or reflected and translated to $+d$ respectively.
     \item $\rho(\lambda)$ is the hopping coefficient, whereas $\rho^\radial$ is the hopping coefficient of the double-well system whose single-wells are radial and given by $v^\radial$.
     \item $\Delta(\lambda)$ is the double-well eigenvalue splitting.
     \item $\mathfrak{S}(\lambda)$ is the double-well signed eigenvalue splitting: it is the difference of the ground state in the odd sector with the ground state in the even sector.     \item $x\mapsto K^\circ(x):=\br{H^{\rm SHO}_\lambda-e^{\radial}_{\lambda}\Id}^{-1}(x,0)$ is the fundamental solution with source at $x=0$ of the harmonic oscillator Hamiltonian:
\eql{
  (P^2+\frac{\lambda^2}{4}X^2-e^{\radial}_{\lambda}\Id)K^\circ\equiv\delta\,.
} Here $\delta$ is the Dirac delta distribution. Likewise, \eq{K:=\br{H^{\rm SHO}_\lambda-e_{\lambda}\Id}^{-1}(\cdot,0)\,.} 
  \end{enumerate}
  \item Various parameters used throughout:
  \begin{enumerate}
        \item $M$ is sufficiently large.
      \item $D$ is sufficiently large dependent on $M$.
      \item $\lambda$ is sufficiently large dependent on $D$.      
      \item $d_1= \frac12D^{3/2}$
      \item $\delta_{\lambda,M,D}=\exp\br{-M \lambda D^{3/2}}$.
      \item $\tau_{\lambda,M,D}=\exp\br{-\frac\lambda4\br{2D^{5/2}-3D^2}}$
      \item $\operatorname{diam}(\supp(v_\lambda))\sim D$.
  \end{enumerate}
\end{enumerate}
%We set $\hbar=2m=1$ and write $\langle x\rangle=(1+\norm{x}^2)^{1/2}$. The vector potential is taken in symmetric gauge, $A(x)=\tfrac{B}{2}(-x_2,x_1)$, and norms are in $L^2(\mathbb{R}^2)$ unless indicated otherwise. For $f\in L^2$ and $y\in\mathbb{R}^2$, $\tau_y f$ denotes the magnetic translation of $f$ by $y$ (see \cite{Zak1964}). Constants $C,c>0$ may change from line to line and dependence on parameters is indicated when relevant.
\subsection*{Acknowledgements}
 CF was supported in part by NSF grant DMS-1700180. MIW was supported in part by NSF grants DMS-1908657, DMS-1937254, DMS-2510769  and Simons Foundation Math + X Investigator Award \# 376319 (MIW). Part of this research was carried out during the 2023-24 academic year, when MIW was a Visiting Member in the School of Mathematics - Institute of Advanced Study, Princeton, supported by the Charles Simonyi Endowment, and a Visiting Fellow in the Department of Mathematics at Princeton University.
The authors wish to thank Antonio Cordoba and David Huse for stimulating discussions. We thank P.A. Deift and J. Lu for their careful reading of the \cite{FSW24} manuscript, and insightful comments and questions. 
 
\section{The proof of \cref{thm:main theorem}}\label{sec:results-roadmap}

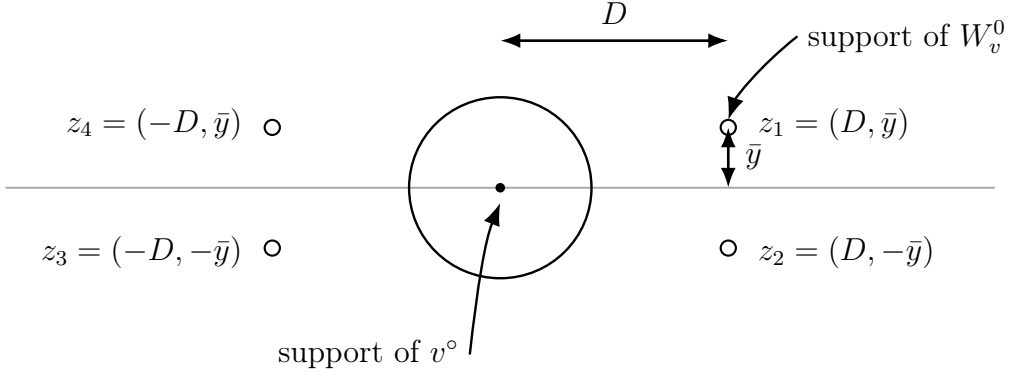
\begin{figure}
 \begin{center}
   \begin{tikzpicture}[>=Latex, line cap=round, line join=round]
 % ---------- tunables ----------
 \def\R{1.2}     % radius of the well
 \def\D{3.0}     % x-position of satellites (\norm{x} = D)
 \def\ybar{0.8}    % y-position of satellites (|y| = \bar y)
 \def\axisHalf{6.5}  % half-length for the horizontal axis
 \def\ringsize{0.10} % radius of the red satellites
 % ------------------------------

 % horizontal axis through the center (two-sided)
 \draw[gray!70, line width=0.7pt] (-\axisHalf,0) -- (\axisHalf,0);

 % the circular well and its center
 \draw[line width=0.9pt] (0,0) circle (\R);
 \fill (0,0) circle (1.8pt);

 % satellites (red small open circles)
 \foreach \sx/\sy in { \D/\ybar, \D/-\ybar, -\D/\ybar, -\D/-\ybar }{
  \draw[line width=0.8pt] (\sx,\sy) circle (\ringsize);
 }

 % labels for z_i
 \node[anchor=west] at (\D+0.25, \ybar+0.05) {$z_{1}=(D,\bar y)$};
 \node[anchor=west] at (\D+0.25,-\ybar-0.05) {$z_{2}=(D,-\bar y)$};
 \node[anchor=east] at (-\D-0.25,-\ybar-0.05) {$z_{3}=(-D,-\bar y)$};
 \node[anchor=east] at (-\D-0.25, \ybar+0.05) {$z_{4}=(-D,\bar y)$};

 % red horizontal measurement arrow for D (above axis)
 \draw[<->, line width=1.0pt]
  (0, \R+0.75) -- (\D, \R+0.75)
   node[midway, above=2pt] {$D$};

 % red vertical measurement arrow for \bar y (at right)
 \draw[<->, line width=1.0pt]
  (\D, 0) -- (\D, \ybar)
   node[midway, right=2pt] {$\bar y$};

 % annotation: support of v^{\radial} (arrow to the well)
 \draw[-{Latex[length=3mm]}, line width=0.9pt]
  (-0.4,-2.2) .. controls (-0.3,-1.3) and (-0.2,-0.7) .. (0,-0.2);
 \node[anchor=east] at (-0.4,-2.2) {$\text{support of }v^{\circ}$};

 % annotation: support of W_v^{\radial} (arrow to a right satellite)
 \draw[-{Latex[length=3mm]}, line width=0.9pt]
  (3.9,2.0) .. controls (3.4,1.6) and (3.1,1.2) .. (\D,\ybar+\ringsize);
 \node[anchor=west] at (3.9,2.0) {$\text{support of }W_{v}^{0}$};
\end{tikzpicture}
 \end{center}
  \caption{Support of the  “atomic well”, $v_\lambda$. 
One radially symmetric planet with four ``sophons''; see \cref{sec:single-well-plus-sophons}.}
  \label{fig:schematic_single-well}
\end{figure}

We first provide a detailed roadmap for the proof of \Cref{thm:main theorem} and then, in \Cref{sec:implement_the_roadmap}, we implement the steps of the roadmap, providing the technical arguments.

\subsection{Roadmap, Step 1:
The single-well radial potential, $\lambda^2\vzero(x)$, the magnetic atomic Hamiltonian, $h^{\radial}_{\lambda}$, and its ground state $\vf^{\radial}_{\lambda}(x)$}\label{sec:single-well-v0-h0}

Fix a smooth potential $\vzero:\RR^2\to\RR$.
Assume constants $C_0, C_1, C_2,c_{\rm gap}$ and $c_2 $, depending on $\vzero$, such that
\begin{itemize}
  \item[(V1)] $-C_0\le \vzero\le0$
  \item[(V2)] $\supp \vzero\subset B_1(0)$
  \item[(V3)] \emph{Radial ground state assumption}: Introduce \eql{
 h^{\radial}_{\lambda}\equiv\  \calP_\lambda^2+\lambda^2 \vzero(X)\ , \label{eq:mag-atomic}
 }
 the magnetic atomic Hamiltonian with radial potential $\vzero$.
 \begin{assumption}[Radial ground state of $h^{\radial}_{\lambda}$]\label{assume:radial-phi0}
 For all $\lambda\ge C_1$, the ground state eigenfunction $\vf^{\radial}_{\lambda}$ of $h^{\radial}_{\lambda}$ is radial, i.e. $\vf^{\radial}_{\lambda}(x)=\vf^{\radial}_{\lambda}(y)$ if $\norm{x}=\norm{y}$; we may choose $\vf^{\radial}_{\lambda}\geq0$.
 \end{assumption}

\item[(V4)] The ground state eigenvalue $e^{\radial}_{\lambda}$ of $h^{\radial}_{\lambda}$ 
  satisfies
\eql{\label{eq:gs-large_lam} 
-C_2\lambda^2 \le e^{\radial}_{\lambda}\le -c_2\lambda^2,\quad \textrm{for $\lambda\ge C_1$}.
}
\item[(V5)] \emph{Spectral gap assumption}:
%For $\lambda\ge C_1$, $h^{\radial}_{\lambda}$ has no spectrum within the interval $(e^{\radial}_{\lambda},e^{\radial}_{\lambda}+1]$.%
\begin{assumption}[Spectral gap of $h^{\radial}_{\lambda}$]\label{assume:gap}
 %
% \footnote{\textcolor{red}{\bf indicate, in a short remark, conditions under which \Cref{assume:radial-phi0} and \Cref{assume:gap} can be verified. For example, $v^{\radial}(r)$ monotone increasing on its support.}}
 %
 $h_0^\lambda$ has a spectral gap which is uniform in $\lambda$. Specifically, there is a constant $c_{\rm gap}>0$, such that for all $\lambda\ge C_1$
 \begin{equation}\label{eq:spectral-gap} {\rm dist}\Big(e^{\radial}_{\lambda},\sigma(h^{\radial}_{\lambda})\setminus\{e^{\radial}_{\lambda}\}\Big)\ge c_{\rm gap}.\end{equation}
\end{assumption}
\end{itemize}

\begin{rem}
  The assumptions above may be verified for potentials $v^{\radial}$ which have a unique non-degenerate minimum so that the approximation by the quantum harmonic oscillator holds, see e.g. \cite{Matsumoto_1994} for \cref {assume:gap} and \cite{HelfferKachmar2024} for \cref{assume:radial-phi0}.
\end{rem}

We adopt a convention where constants: $c, C, C^\prime$, etc. are determined by $C_0,C_1,c_2,C_2,c_{\rm gap}$ in (V1)-(V5). These symbols may denote different constants in different occurrences. For two functions $f$ and $g\geq0$ of $\lambda$, we write $f=\calO(g)$ to denote $|f(\lambda)|\leq C g(\lambda)$ for all $\lambda\geq\lambda_{\rm min}$.

\begin{prop}\label{eq:phi0-K} Let $\vf^{\radial}_{\lambda}$ denote the (radial) ground state of $h^{\radial}_{\lambda}$. Then, 
\[ \vf^{\radial}_{\lambda}(x) = \Gamma K^\circ(x),\quad \norm{x}>1,\] 
where \eql{\label{eq:SHO f solution}
x\mapsto K^\circ(x):=\br{H^{\rm SHO}_\lambda-e^{\radial}_{\lambda}\Id}^{-1}(x,0)} is the fundamental solution with source at $x=0$ of the harmonic oscillator Hamiltonian:
\eql{
  (P^2+\frac{\lambda^2}{4}X^2-e^{\radial}_{\lambda}\Id)K^\radial\equiv\delta
} where $\delta$ is the Dirac delta distribution and $\Gamma\equiv\Gamma_\lambda$ is a positive constant satisfying upper and lower bounds:
\eql{ \exp\left(-C\lambda\right) \ \le\ \Gamma\ \le\ \exp\left(C\lambda\right)\,.
\label{eq:Gamma-bounds}
}
\end{prop}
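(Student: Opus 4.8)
\emph{Overview.} The plan is: (i) show $\vf^\radial_\lambda$ solves the harmonic‑oscillator eigenvalue equation outside the well; (ii) identify it with a multiple of $K^\radial$ there by a one‑dimensional ODE‑uniqueness argument; and (iii) pin down the constant $\Gamma$ two‑sidedly using explicit estimates on the oscillator (equivalently Landau) resolvent kernel together with crude pointwise bounds on $\vf^\radial_\lambda$. For (i), expand $\calP_\lambda^2 = P^2 - \lambda L_3 + \tfrac{\lambda^2}{4}X^2 = H^{\rm SHO}_\lambda - \lambda L_3$ with $L_3 = X_1P_2 - X_2P_1$. By \cref{assume:radial-phi0} the ground state is radial, so $L_3\vf^\radial_\lambda = 0$ and $\calP_\lambda^2\vf^\radial_\lambda = H^{\rm SHO}_\lambda\vf^\radial_\lambda$; by (V2), $v^\radial$ vanishes off $B_1(0)$, so $(h^\radial_\lambda - e^\radial_\lambda)\vf^\radial_\lambda = 0$ reads $(H^{\rm SHO}_\lambda - e^\radial_\lambda)\vf^\radial_\lambda = 0$ on $\norm x > 1$. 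The same identity applied to the radial distribution $\delta$ (at the origin) gives $K^\radial = (H^{\rm SHO}_\lambda - e^\radial_\lambda)^{-1}\delta = (\calP_\lambda^2 - e^\radial_\lambda)^{-1}\delta$ — the inverses existing since $e^\radial_\lambda\le -c_2\lambda^2 < \lambda = \inf\sigma(\calP_\lambda^2) = \inf\sigma(H^{\rm SHO}_\lambda)$ by (V4) — so $K^\radial$ is radial and solves $(H^{\rm SHO}_\lambda - e^\radial_\lambda)K^\radial = 0$ off the origin.

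\emph{Proportionality and sign.} In polar coordinates both $\vf^\radial_\lambda$ and $K^\radial$ satisfy, for $r > 1$, the ODE $-u'' - r^{-1}u' + (\tfrac{\lambda^2}{4}r^2 - e^\radial_\lambda)u = 0$, whose independent solutions behave like $e^{\pm\lambda r^2/4}$ at infinity; hence the space of solutions lying in $L^2$ near infinity is one‑dimensional, and since both $\vf^\radial_\lambda$ and $K^\radial$ lie in $L^2$ there, $\vf^\radial_\lambda = \Gamma K^\radial$ on $\norm x > 1$ for a scalar $\Gamma$. For the sign: $H^{\rm SHO}_\lambda$ is an ordinary (field‑free) Schr\"odinger operator with nonnegative potential, so its Mehler heat kernel is strictly positive, hence $(H^{\rm SHO}_\lambda - e^\radial_\lambda)^{-1} = \int_0^\infty e^{e^\radial_\lambda t}e^{-tH^{\rm SHO}_\lambda}\dif t$ is positivity improving and $K^\radial > 0$; combined with $\vf^\radial_\lambda\ge 0$ (\cref{assume:radial-phi0}) and $\vf^\radial_\lambda\not\equiv 0$ on $\norm x > 1$ (unique continuation), this forces $\Gamma > 0$.

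\emph{Bounds on $\Gamma$.} Evaluate at a fixed $x_0$ with $\norm{x_0} = 1$: $\Gamma = \vf^\radial_\lambda(x_0)/K^\radial(x_0)$, so it is enough to sandwich numerator and denominator between $e^{-C\lambda}$ and $e^{C\lambda}$. For $K^\radial(x_0)$, the Mehler representation of $(H^{\rm SHO}_\lambda - e^\radial_\lambda)^{-1}(\cdot,0)$ — equivalently the Landau‑resolvent estimates recalled in the appendix — estimated by Laplace's method (the Gaussian exponent is of order $\lambda$ at separation of order one when $|e^\radial_\lambda|\asymp\lambda^2$), gives $e^{-C\lambda}\le K^\radial(x_0)\le e^{C\lambda}$. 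For $\vf^\radial_\lambda(x_0)$, rewrite the eigenvalue equation as $\vf^\radial_\lambda = \lambda^2(\calP_\lambda^2 - e^\radial_\lambda)^{-1}\big((-v^\radial)\vf^\radial_\lambda\big)$; since $(-v^\radial)\vf^\radial_\lambda$ is radial and supported in $B_1(0)$ this equals $\lambda^2\int_{B_1(0)}(H^{\rm SHO}_\lambda - e^\radial_\lambda)^{-1}(x_0,y)\,(-v^\radial(y))\,\vf^\radial_\lambda(y)\,\dif y$, with nonnegative integrand. The upper bound follows from $\norm{\vf^\radial_\lambda}_\infty\le C\lambda$ (same representation, the diamagnetic bound $\norm{(\calP_\lambda^2 - e^\radial_\lambda)^{-1}}_{L^2\to L^\infty}\lesssim|e^\radial_\lambda|^{-1/2}\asymp\lambda^{-1}$, and $\norm{v^\radial}_\infty\le C_0$); the lower bound follows by bounding the kernel below, $(H^{\rm SHO}_\lambda - e^\radial_\lambda)^{-1}(x_0,y)\ge e^{-C\lambda}$ for $y\in\overline{B_1(0)}$, and bounding the rest via the Rayleigh quotient, $-e^\radial_\lambda\le\lambda^2\int(-v^\radial)|\vf^\radial_\lambda|^2\le\lambda^2\norm{\vf^\radial_\lambda}_\infty\int(-v^\radial)\vf^\radial_\lambda$, which with $\norm{\vf^\radial_\lambda}_\infty\le C\lambda$ and (V4) yields $\int(-v^\radial)\vf^\radial_\lambda\ge c/\lambda\ge e^{-C\lambda}$. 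Multiplying, $e^{-C\lambda}\le\Gamma\le e^{C\lambda}$.

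\emph{Main obstacle.} The one genuinely delicate estimate is the lower bound $\vf^\radial_\lambda(x_0)\ge e^{-C\lambda}$ at fixed positive radius — that the ground state leaks out of its well by at least an exponentially small amount \emph{at the claimed rate}, not faster. The representation formula reduces this to a lower bound on the harmonic‑oscillator (= Landau) resolvent kernel with the correct $\lambda$‑dependence, which is precisely what I would take from the appendix; an Agmon/Harnack‑chaining argument is an alternative, but there one has to track the $\lambda$‑degradation of the Harnack constant caused by the size‑$\lambda^2$ zeroth‑order term, which again produces the rate $e^{C\lambda}$. Everything else — the $\calP_\lambda^2 = H^{\rm SHO}_\lambda - \lambda L_3$ identity on radial functions, the one‑dimensionality of the recessive solution space, positivity of the field‑free resolvent, and the polynomial $L^\infty$ bound on $\vf^\radial_\lambda$ — is standard.
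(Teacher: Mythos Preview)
Your argument is correct and follows the paper's route: the same ODE/one-dimensional-decaying-solution argument for the proportionality $\vf^\radial_\lambda=\Gamma K^\radial$ on $\norm{x}>1$, and the same strategy of writing $\Gamma=\vf^\radial_\lambda(x_0)/K^\radial(x_0)$ at a fixed point and sandwiching numerator and denominator between $e^{\pm C\lambda}$. The only difference is in how the two-sided bound on $\vf^\radial_\lambda(x_0)$ is obtained: the paper simply cites \cite[Theorem~2.3]{FSW_22_doi:10.1137/21M1429412}, whereas you rederive it from the resolvent representation $\vf^\radial_\lambda=\lambda^2(H^{\rm SHO}_\lambda-e^\radial_\lambda)^{-1}\big((-v^\radial)\vf^\radial_\lambda\big)$ on radial functions, combined with the diamagnetic $L^2\to L^\infty$ bound (giving $\norm{\vf^\radial_\lambda}_\infty\le C\lambda$) and the Rayleigh-quotient inequality $\lambda^2\int(-v^\radial)|\vf^\radial_\lambda|^2\ge -e^\radial_\lambda\ge c_2\lambda^2$ (giving $\int(-v^\radial)\vf^\radial_\lambda\ge c/\lambda$). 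This buys self-containment at the cost of a slightly longer argument; otherwise the two proofs coincide. One cosmetic point: you evaluate at $\norm{x_0}=1$, the boundary of $\supp(v^\radial)$, which is fine by continuity, but the paper takes $\norm{z}>1$ so that the proportionality holds there directly.
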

\begin{proof}[Proof of \Cref{eq:phi0-K}]
  In the region $r=\norm{x}>1$, outside of the support of $\vzero$, $\vf^{\radial}_{\lambda}$ and $K^\radial$, considered as functions of the radial coordinate, both satisfy the equation: 
  \[ -\partial_r^2 f(r) - \frac{1}{r} \partial_r f(r) + \frac{\lambda^2}{4}r^2f(r) = e^{\radial}_{\lambda} f(r),\]
  which has a one dimensional subspace of solutions tending to zero as $r\to\infty$, spanned by $K^\radial$. It follows that $\vf^{\radial}_{\lambda}=\Gamma K^\radial$ for some constant $\Gamma$ which is positive since both functions $\vf^{\radial}_{\lambda}$ and $K^\radial$ are positive.

  To bound $\Gamma$, pick some $z\in\RR^2$ with $\norm{z}=:C>1$. Then by \cite[Theorem 2.3]{FSW_22_doi:10.1137/21M1429412}, we have
  \eq{
    \exp(-C_z\lambda)\leq \vf^{\radial}_{\lambda}(z) \leq \exp(-c_z\lambda)
  } for two constants $C_z,c_z\in(0,\infty)$ independent of $\lambda$, if $\lambda$ is sufficiently large. The estimates derived in \cref{lem:bounds on the SHO Green's function} (below in the appendix) imply the same for $K^\radial$:
  \eq{
    \exp(-\widetilde{C_z}\lambda)\leq K^\radial(z) \leq \exp(-\widetilde{c_z}\lambda)
  } for two constants $\widetilde{C_z},\widetilde{c_z}\in(0,\infty)$ independent of $\lambda$, if $\lambda$ is sufficiently large. Invoking $\Gamma = \frac{\vf^{\radial}_{\lambda}(z)}{K^\radial(z)}$ we then find
  \eq{
  \exp\br{-\br{C_z-\widetilde{c_z}}\lambda}\leq \Gamma \leq \exp\br{\br{\widetilde{C_z}-c_z}\lambda}\,.
  } 
\end{proof}
\bigskip

\begin{prop}[Gaussian decay of $\vf^{\radial}_{\lambda}$]\label{prop:gs0-decay} There exists some $C<\infty$ independent of $\lambda$ such that for all $x\in\RR^2\setminus\supp(v^\circ)$ we have
\eql{\label{eq:gs0-decay1a}
|\vf^{\radial}_{\lambda}(x)|
\le C\lambda^2\ \exp\left(-\frac{\lambda}{4}[{\rm dist}(x,\supp(\vzero))]^2\right)\,.
}
\end{prop}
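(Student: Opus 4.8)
The plan is to reduce the bound, away from a $\lambda^{-1/2}$-neighborhood of $\supp(\vzero)$, to Gaussian estimates on the scalar harmonic-oscillator kernel $K^\circ$ via \cref{eq:phi0-K}, using the radiality of $\vzero$ essentially. First I would record an a priori $L^\infty$ bound: from $\calP_\lambda^2\vf^{\radial}_{\lambda}=(e^{\radial}_{\lambda}-\lambda^2\vzero)\vf^{\radial}_{\lambda}$ and $\abs{e^{\radial}_{\lambda}}+\norm{\lambda^2\vzero}_{L^\infty}=\Ord{\lambda^2}$ we get $\norm{(\calP_\lambda^2+\Id)\vf^{\radial}_{\lambda}}_{L^2(\RR^2)}=\Ord{\lambda^2}$; since the diamagnetic (Kato) inequality gives $\abs{(\calP_\lambda^2+\Id)^{-1}f}\le(-\Delta+\Id)^{-1}\abs{f}$ pointwise and $(-\Delta+\Id)^{-1}\colon L^2(\RR^2)\to L^\infty(\RR^2)$ is bounded in two dimensions, it follows that $\norm{\vf^{\radial}_{\lambda}}_{L^\infty(\RR^2)}\le C\lambda^2$. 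This already settles the claim for $x$ with $\dist(x,\supp(\vzero))\le\lambda^{-1/2}$, since there $\exp(-\tfrac\lambda4\dist(x,\supp(\vzero))^2)\ge e^{-1/4}$.

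For the remaining $x$ I would use that the radial-ODE argument proving \cref{eq:phi0-K} in fact gives $\vf^{\radial}_{\lambda}=\Gamma K^\circ$ throughout $\Set{\norm y>r_\circ}$, where $\supp(\vzero)\subseteq\overline{B_{r_\circ}(0)}$, $r_\circ<1$, by (V2); equivalently, by radiality, $\vf^{\radial}_{\lambda}(x)=\vf^{\radial}_{\lambda}(x_*)\,K^\circ(x)/K^\circ(x_*)$ with $x_*:=r_\circ\,x/\norm x\in\partial B_{r_\circ}(0)$. I would then feed in the appendix bounds on $K^\circ$ (\cref{lem:bounds on the SHO Green's function}), which follow from the Mehler representation $K^\circ=\int_0^\infty e^{t e^{\radial}_{\lambda}}\,e^{-tH^{\rm SHO}_\lambda}(\cdot,0)\,\dif t$ and exhibit, for $\norm y\ge r_\circ$, a profile $K^\circ(y)\asymp P(\lambda,\norm y)\,\norm y^{-\lambda}\,e^{-\frac\lambda4\norm y^2}$ with $P$ polynomial --- in particular decaying strictly faster than $e^{-\frac\lambda4\norm y^2}$, with $\norm y\mapsto P(\lambda,\norm y)\norm y^{-\lambda}$ decreasing once $\lambda$ is large. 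Hence $K^\circ(x)/K^\circ(x_*)\le C\,e^{-\frac\lambda4(\norm x^2-r_\circ^2)}$ and, using $\abs{\vf^{\radial}_{\lambda}(x_*)}\le C\lambda^2$, we obtain $\abs{\vf^{\radial}_{\lambda}(x)}\le C\lambda^2\,e^{-\frac\lambda4(\norm x^2-r_\circ^2)}$. Finally, since $\vzero$ is radial with minimum at the origin (as the hypotheses guarantee), $0\in\supp(\vzero)$, so $\dist(x,\supp(\vzero))\le\norm x$, and in fact $\dist(x,\supp(\vzero))^2\le\norm x^2-r_\circ^2$ once $\norm x$ exceeds a fixed threshold (as $\supp(\vzero)$ contains a fixed ball about $0$); therefore $e^{-\frac\lambda4(\norm x^2-r_\circ^2)}\le e^{-\frac\lambda4\dist(x,\supp(\vzero))^2}$ there. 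The leftover bounded range of $\norm x$, together with any $x$ lying in a bounded component of $\RR^2\setminus\supp(\vzero)$ (possible only if $\supp(\vzero)$ is disconnected), satisfy $\dist(x,\supp(\vzero))=\Ord{1}$ and are covered either by the $L^\infty$ bound above or by a standard magnetic Agmon/Combes--Thomas estimate on that compact region: the effective potential there exceeds $\abs{e^{\radial}_{\lambda}}\gtrsim\lambda^2$, yielding a rate $\gtrsim\lambda\,\dist(x,\supp(\vzero))$ that dominates $\tfrac\lambda4\dist(x,\supp(\vzero))^2$ on a bounded range.

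I expect the main obstacle to be the bookkeeping in the middle step. A soft energy/Agmon argument does not suffice: it degrades the Gaussian rate to $(1-\ve)\tfrac\lambda4$ and typically leaves a non-polynomial factor $e^{c\lambda^{1/2}\norm x}$, whereas the statement demands the exact rate $\tfrac\lambda4$ with a polynomial prefactor. The route above threads this because the matching constant $\Gamma$ --- which \cref{eq:phi0-K} controls only to within $e^{\pm C\lambda}$ --- never appears in isolation: it is tied to the boundary value $\vf^{\radial}_{\lambda}(x_*)$, bounded by the crude $L^\infty$ estimate, and the super-Gaussian profile of $K^\circ$ renders the prefactor ratio harmless. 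Radiality is used twice and is essential: it places a point of $\supp(\vzero)$ at the origin, and it annihilates the angular-momentum term $\lambda L_z$ in $\calP_\lambda^2$, so that $h^{\radial}_{\lambda}$ reduces on radial functions to $H^{\rm SHO}_\lambda+\lambda^2\vzero$ and the comparison with the scalar kernel $K^\circ$ becomes available.
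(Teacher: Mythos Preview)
Your approach works in outline, but it is far more elaborate than the paper's and leans on radiality where none is needed. The paper simply defers to \cref{prop:Gaussian decay of bound states for compactly supported potentials}: rewrite the eigenvalue equation as $\br{\calP_\lambda^2-e^{\radial}_{\lambda}\Id}\vf^{\radial}_{\lambda}=-\lambda^2 v^\circ(X)\vf^{\radial}_{\lambda}$, invert the Landau operator (legitimate since $e^{\radial}_{\lambda}<0<\lambda$), insert the Gaussian upper bound on the Landau resolvent kernel from \cref{lem:bounds on the SHO Green's function} via \cref{eq:relation between Landau and SHO Green's functions}, and apply Cauchy--Schwarz against $\|\vf^{\radial}_{\lambda}\|_{L^2}=1$ over $\supp(v^\circ)$. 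This one-line argument never touches radiality and is exactly what the paper reuses in \cref{prop:Gauss-bound-phi_lambda} for the \emph{non-radial} perturbed ground state $\vf_\lambda$; so your repeated assertion that radiality is ``essential'' is incorrect. Your route---$L^\infty$ bound, boundary matching $\vf^{\radial}_{\lambda}=\vf^{\radial}_{\lambda}(x_*)K^\circ(x)/K^\circ(x_*)$, then a ratio estimate on $K^\circ$---can be completed, but not via the asymptotic profile you quote: the appendix bounds do \emph{not} give $K^\circ(y)\asymp P(\lambda,\|y\|)\,\|y\|^{-\lambda}e^{-\frac{\lambda}{4}\|y\|^2}$ with $P$ polynomial (the lower bound carries a factor $(1+\tfrac{\lambda}{2}\|y\|^2)^{-\frac12(1-e^{\radial}_{\lambda}/\lambda)}$ with exponent of order $\lambda$, so the upper/lower ratio is $e^{c\lambda\log\lambda}$, not polynomial). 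What does hold, directly from the Mehler representation, is that $r\mapsto K^\circ(r)e^{\frac{\lambda}{4}r^2}$ is monotone decreasing (since $\coth s>1$), and that alone yields $K^\circ(x)/K^\circ(x_*)\le e^{-\frac{\lambda}{4}(\|x\|^2-r_\circ^2)}$. With that fix your argument closes; its only advantage over the paper's is that it makes the connection to the exact exterior formula $\vf^{\radial}_{\lambda}=\Gamma K^\circ$ explicit, at the cost of extra case analysis (near-support, far-support, possible inner components of $\RR^2\setminus\supp(v^\circ)$) that the resolvent method handles uniformly.
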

For the proof, see \cref{prop:Gaussian decay of bound states for compactly supported potentials}.

\subsection{Constants and parameters}\label{sec:parameters} We shall require constants $M, D_{\rm min}, \lambda_{\rm min}$ to be chosen later such that 
\begin{itemize}
  \item[(M)] $M$ is a large enough constant determined by $C_0, C_1, c_2, C_2, c_{\rm gap}$ in (V1)-(V5) above
  \item[(D)] $D\ge D_{\rm min}$ is a large enough constant determined by $C_0, C_1, c_2, C_2, c_{\rm gap}$ and $M$ in (V1)-(V5) and (M) above.
  \item[($\bar{\rm Y}$)] 
  $\bar{y}\in[0,1]$.
  \item[($\lambda_{\rm min}$)] $\lambda_{\rm min}$ is a large enough constant determined by $C_0, C_1, c_2, C_2, c_{\rm gap}, M$ and $D$ in (V1)-(V5), (M) and (D) above.
\end{itemize}

We let 
\eql{ \lambda\ge \lambda_{\rm min}.
\label{eq:lam_ge_lam_min}
}
Further, we introduce parameters
$\delta$ and $\tau$, which specify a single-well configuration; see the schematic in \Cref{fig:schematic_single-well}:
\begin{subequations}
\label{eq:delta&tau}
\begin{align} 
\delta &= \delta_{\lambda,M,D}\equiv \exp\left(-M\lambda D^{3/2}\right)\label{eq:delta}\\
\tau_{\lambda,M,D}&= \tau_{\lambda,M,D} \equiv 
\exp\left(-\frac{\lambda}{4
}(2D^{5/2}-3D^2)\right) \ .\label{eq:tau}\
\end{align}
\end{subequations}
The choice of $\delta$ is explained in \Cref{sec:Omega0nu0-expand}, and that of $\tau$ in \Cref{sec:tau-determined}.

\subsection{Roadmap, Step 2: The radial single-well potential perturbed by ``sophons''}\label{sec:single-well-plus-sophons}

We shall perturb the single-well radial potential $\lambda^2\vzero(x)$ by disjointly supported potentials with small amplitude and support. 
With the above definitions, \Cref{eq:delta&tau}, of $\delta=\delta_{\lambda,M,D}$ and $\tau=\tau_{\lambda,M,D}$ in \Cref{eq:delta&tau}, we next build up the {\it perturbed single-well potential}.

Fix a \emph{radial} smooth function $W_{0,\lambda}:\RR^2\to\RR$ satisfying:
\begin{itemize}
  \item[(S1)] $-1\le W_{0,\lambda}\le0$
  \item[(S2)] ${\rm supp}\ W_{0,\lambda} \subset B_{\delta}(0)$
  \item[(S3)] $\fint_{B_\delta(0)} W_{0,\lambda} \equiv \frac{1}{\pi\delta^2}\int_{B_\delta(0)}W_{0,\lambda} = -c$, for some $c>0$.
\end{itemize}

A finite sum of translates of $W_\lambda$ will be added to $\lambda^2\vzero$. These will be centered at points $\{\zeta_\nu\}_{\nu=1,\cdots,\nu_{\rm max}}\subset \RR^2$, with 
\begin{itemize}
  \item[(S4)] $\Big|\ \norm{\zeta_\nu}-D\ \Big|\le1,\quad \nu=1,\dots, \nu_{\rm max}$, where 
\item[(S5)] $\nu_{\rm max}\le C$. Below we will take $\nu_{\rm max}\le 4$; see \Cref{fig:schematic_single-well}.
\end{itemize}
The {\it perturbed single-well potential}, which will serve as our single-well potential of our
 double-well construction, is given by:
\eq{
v_\lambda = \lambda^2 \vzero + \tau_{\lambda,M,D}W,\quad \big( W=W_\lambda \big),
}
where 
\eql{\label{eq:W-sophons}
 W =  \sum_{\nu=1}^{\nu_{\rm max}} W^0_\nu,\quad W^0_\nu(x) = W_{0,\lambda}(x-\zeta_\nu)
}
 Hence, $W= W_\lambda$ is supported in $\Set{x:\norm{x}\ge D-3}$. We emphasize that we allow the perturbing sophon potential, $W(x)$, to depend on $\lambda$. Note also that for a general collection of centers, $\{\zeta_\nu\}$, $W(x)\ne W(-x)$ and hence 
 $v_\lambda(-x)\ne v_\lambda(x)$. The double-well potential, $V_\lambda(x)$, constructed below will have even symmetry with respect to $x=0$, and for our particular choice of centers $\zeta_\nu$ below, $v_\lambda$ will in fact have inversion symmetry too.

 \paragraph{Terminology} We shall refer to 
$\lambda^2 \vzero(x)$ as ``the planet''
 and the terms $\tau_{\lambda,M,D}W_\nu^0(\cdot)$ as ``the sophons''. By \eqref{eq:tau}, (S1)-(S3), the sophons are a very small perturbation of the planet, e.g. their $C^{100}$ norm can be taken to be exponentially small in $\lambda\gg1$. Nevertheless, they have large effect
 \cite{liu2014three}. 
 
\begin{rem}
    The superscript zero, as in  $W_\nu^0$ in \cref{eq:W-sophons}, indicates that this collection of "sophons" is centered about the origin. This is in contrast to the left and right translates below, which are indicated with superscripts $L,R$.
\end{rem}

The {\it perturbed single-well Hamiltonian} or {\it single-well Hamiltonian with sophons},  is the operator:
 \eq{
    h_\lambda= \calP_\lambda^2 + v_\lambda(X) = h^{\radial}_{\lambda} + \tau_{\lambda,M,D}W_\lambda(X),
 }
 with $\tau=\tau(\lambda)$ as in \Cref{eq:delta&tau}.
Its ground state eigenpair is denoted $(e_\lambda,\vf_\lambda)$, where $e_\lambda<0$:
\begin{align}\label{eq:gs-single-well}
 h_\lambda\ \vf_\lambda(x) = e_\lambda\ \vf_\lambda(x),\quad \vf_\lambda\in L^2(\RR^2).
\end{align}
The eigenpair $(e_\lambda,\vf_\lambda)$ is a small perturbation of the ground state eigenpair $(e^{\radial}_{\lambda},\vf^{\radial}_{\lambda})$ of the unperturbed Hamiltonian $h^{\radial}_{\lambda}$. The following Gaussian decay bound holds for 
 $\vf_\lambda(x)$ as well; see \cref{prop:Gaussian decay of bound states for compactly supported potentials}.
\begin{prop}[Gaussian decay of $\vf_\lambda$]\label{prop:Gauss-bound-phi_lambda}
The ground state of $ h_\lambda$ satisfies a Gaussian decay bound with the identical rate as the corresponding bound
for $\vf^{\radial}_{\lambda}$.\\
There exist some constant $C\in(0,\infty)$ independent of $\lambda$ such that for all $x\in\RR^2\setminus\supp(v_\lambda)$,
\eql{
|\vf_\lambda(x)|
\le C\lambda^2\ \exp\left(-\frac{\lambda}{4}\Bigg[{\rm dist}\big(x,\supp\ v_\lambda \big)\ \Bigg]^2\right)\,.
\label{eq:gs-decay1a}}
\end{prop}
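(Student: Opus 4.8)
\textbf{Proof proposal for \Cref{prop:Gauss-bound-phi_lambda}.}

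The plan is to reduce the Gaussian decay estimate for $\vf_\lambda$ to the already-established decay for $\vf^{\radial}_{\lambda}$ (\Cref{prop:gs0-decay}) by treating $h_\lambda = h^{\radial}_{\lambda} + \tau_{\lambda,M,D}W_\lambda(X)$ as a perturbation of $h^{\radial}_{\lambda}$, and then appealing to the general Agmon-type mechanism cited in the paper (\cref{prop:Gaussian decay of bound states for compactly supported potentials}) rather than rerunning the argument from scratch. The key observation is that the hypotheses feeding that general decay result — a compactly supported potential of size $\Ord{\lambda^2}$, a ground state energy of order $-\lambda^2$ (concretely $e_\lambda \le -c_2\lambda^2 + \Ord{\tau_{\lambda,M,D}}$), and a uniform spectral gap — are stable under the sophon perturbation. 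So the bulk of the work is to check that $(e_\lambda, \vf_\lambda)$ inherits these structural features from $(e^{\radial}_{\lambda},\vf^{\radial}_{\lambda})$, with constants independent of $\lambda$, and that the effective support enlargement from $\supp v^\circ$ to $\supp v_\lambda$ does not degrade the rate $\tfrac{\lambda}{4}$.

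First I would establish the perturbative control of the eigenpair. Since $\norm{\tau_{\lambda,M,D}W_\lambda}_\infty \le \tau_{\lambda,M,D} = \exp(-\tfrac{\lambda}{4}(2D^{5/2}-3D^2))$ is exponentially small, while $h^{\radial}_{\lambda}$ has its ground state energy $e^{\radial}_{\lambda}$ separated from the rest of its spectrum by $c_{\rm gap}$ (\Cref{assume:gap}), standard analytic perturbation theory (e.g.\ the Riesz-projection/resolvent bound) gives that $h_\lambda$ has a unique eigenvalue $e_\lambda$ within $\Ord{\tau_{\lambda,M,D}}$ of $e^{\radial}_{\lambda}$, still isolated by a gap $\ge \tfrac12 c_{\rm gap}$, with normalized ground state $\vf_\lambda$ satisfying $\norm{\vf_\lambda - \vf^{\radial}_{\lambda}}_{L^2} = \Ord{\tau_{\lambda,M,D}}$. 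In particular $e_\lambda \le -c_2\lambda^2/2$ for $\lambda$ large, so the combination $\tfrac{\lambda^2}{4}\norm{x}^2 - e_\lambda$ that drives the WKB/Agmon decay is controlled below exactly as for the radial problem. This is the step I expect to be essentially routine given the assumptions already in place.

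Next, I would invoke \cref{prop:Gaussian decay of bound states for compactly supported potentials}: applied to $h_\lambda$ with potential $v_\lambda$ (compactly supported within $\Set{\norm{x}\le D}$ roughly, smooth, bounded below, of size $\Ord{\lambda^2}$) and eigenvalue $e_\lambda = -\Ord{\lambda^2}$, it yields
\eq{
|\vf_\lambda(x)| \le C\lambda^2 \exp\left(-\frac{\lambda}{4}\left[\dist(x,\supp v_\lambda)\right]^2\right), \qquad x \notin \supp v_\lambda,
}
with $C$ depending only on the constants in (V1)--(V5) and (M), (D) — crucially not on $\lambda$, because the Agmon weight's rate $\lambda/4$ comes from the coefficient $\lambda^2/4$ of the Landau confining term in the symmetric gauge, which is untouched by the perturbation, and the prefactor is controlled by the same elliptic-regularity bootstrap used for $\vf^{\radial}_{\lambda}$. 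The main obstacle, and the point requiring care, is verifying that the constant $C$ is genuinely uniform in $\lambda$ despite two competing effects: the support of $v_\lambda$ grows like $D$ (so $\dist(x,\supp v_\lambda)$ is measured from a larger set than $\supp v^\circ$), and $\norm{v_\lambda}_\infty$ may be as large as $\Ord{\lambda^2}$. The first is harmless because $D$ is a fixed constant (chosen before $\lambda$), so it only affects $C$, not the rate; the second is exactly the regime \cref{prop:Gaussian decay of bound states for compactly supported potentials} is designed for. I would close by noting that since $\dist(x,\supp v_\lambda) \le \dist(x, \supp v^\circ)$ in general but the two differ only on a fixed bounded region, the stated bound is equivalent (up to adjusting $C$) to the one for $\vf^{\radial}_{\lambda}$, which is the assertion that the rate is ``identical''.
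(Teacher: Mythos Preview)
Your proposal is correct and lands on the same key step as the paper: both proofs consist of directly invoking \cref{prop:Gaussian decay of bound states for compactly supported potentials} with the potential $v_\lambda$. The paper's ``proof'' is in fact just that one-line citation.

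Two minor comments. First, your perturbative prelude (analytic perturbation of the eigenpair, spectral gap stability) is unnecessary here: the only hypothesis the appendix proposition needs is $e_\lambda<\lambda$ (and $e_\lambda\le 0$ for the uniform constant), which is immediate from $e_\lambda\le e^{\radial}_{\lambda}\le -c_2\lambda^2$ since $\tau_{\lambda,M,D}W\le 0$. Second, you describe the mechanism as ``Agmon-type'' with an ``elliptic-regularity bootstrap''; the appendix proposition is actually neither --- it is a resolvent representation $\vf_\lambda = -(\calP_\lambda^2-e_\lambda\Id)^{-1}v_\lambda\vf_\lambda$ combined with the Gaussian pointwise bound on the Landau/SHO Green's function from \cref{lem:bounds on the SHO Green's function}. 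This doesn't affect the validity of your argument, but it is worth knowing where the rate $\lambda/4$ really comes from.
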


\subsection{Road map, Step 3: The double-well Hamiltonian, $H_{\lambda}$}\label{sec;gs_perturbed_onewell}

We construct our double-well potential, which is inversion symmetric with respect to $x=0$. First left-translate $ v_\lambda$ (\Cref{sec:single-well-plus-sophons}), and center it at
\eql{ - d := \Big(-\frac{D^{3/2}}{2},0\Big)\ . \label{eq:xi-def}
}
Call this translated single-well potential $v_\lambda^L $.
To $v_\lambda^L $ we add $v_\lambda^R:=\calI v_\lambda^L$, its inversion with respect to $x=0$ to obtain $V_\lambda$, the double-well potential.
 
In explicit detail, we introduce the left well
\begin{align}
 v_\lambda^L(x) &\equiv v_\lambda(x+ d ) \label{eq:V_L} \\
 % &= \lambda^2 \vzero(x+ d ) + \tau_{\lambda,M,D}W(x+ d ) \nonumber\\
 & =\lambda^2 \vzero(x+ d ) + \tau\ \sum_{\nu=1}^{\nu_{\rm max}} W^0\big(x-(\zeta_\nu - d )\big) \nonumber\\
 & \equiv\lambda^2 v^{\radial,L} + \tau\ \sum_{\nu=1}^{\nu_{\rm max}} W_\nu^L(x) \nonumber\\
 &\equiv \lambda^2 v^{\radial,L} + \tau_{\lambda,M,D}W^L(x).
 \nonumber \end{align}
 To build up the right well,
 first we take 
 \[ v^{\radial,R}(x) \equiv v^{\radial,L}(-x).\]
 Note $v^{\radial,R}(x)=\vzero(-x+ d )=\vzero(x- d )$, a right translate of $\vzero(x)$, since $\vzero$ is radially symmetric. Next we define
 \eql{
 W_\nu^R(x) \equiv W_\nu^L(-x)\quad {\rm and}\quad 
 W^R(x) \equiv \sum_{\nu=1}^{\nu_{\rm max}} W_\nu^R(x),
 \label{eq:WRLnu}}
 and finally
 \eql{\label{eq:V_R}
   v^R_\lambda(x) \equiv v^L_\lambda(-x)\ =\ \lambda^2 v^{\radial,R}(x) + \tau_{\lambda,M,D}W^R(x)\,.
 }

 %\begin{align}
 % v^R_\lambda _\lambda(x) &\equivv_\lambda(x- d ) \label{eq:V_R} \\
% &= \lambda^2 \vzero(x- d ) + \tau_{\lambda,M,D}W(x- d ) \nonumber\\
% & =\lambda^2 \vzero(x- d ) + \sum_{\nu=1}^{\nu_{\rm max}} W^0(x-\zeta_\nu- d ) \nonumber\\
% &\equiv \lambda^2 v^{\radial,R}(x) + \tau_{\lambda,M,D}W^R(x)
% \nonumber \end{align}
 % where we have used that both $\vzero$ and $W_{0,\lambda}$ are radially symmetric functions of their argument.
 %We shall use the compressed notation for the perturbation of the left and right radially symmetric wells:
 %
% \footnote{\textcolor{red}{{\bf FIXED} line 859 note that $W^L_\nu(x)\ne W^R_\nu(-x)$ in the convention of eqs 7.14. Do we care?}}
 %
 % \begin{subequations}
 %\label{eq:WLR}
 %\begin{align}
 %  W^L(x) &= \sum_{\nu=1}^{\nu_{\rm max}} W_{0,\lambda}(x-\zeta_\nu + d ) \equiv \sum_{\nu=1}^{\nu_{\rm max}} W^L_\nu(x) \label{eq:WL}\\
 %  W^R(x) &= \sum_{\nu=1}^{\nu_{\rm max}} W_{0,\lambda}(x+\zeta_\nu - d ) \equiv 
 %   \sum_{\nu=1}^{\nu_{\rm max}} %W^R_\nu(x).\label{eq:WR}
 %\end{align}
 %\end{subequations}
 %
% \begin{align}\label{eq:V_R}
%  v^R_\lambda _\lambda(x) &= % v^L_\lambda _\lambda(-x) \\
% &= \lambda^2 v^{\radial,R}(x) + \tau_{\lambda,M,D}%W_\lambda^R(x)\nonumber\\
% &= \lambda^2 \vzero(x- d ) + \tau_{\lambda,M,D}W_\lambda(x- d ) \nonumber
%\end{align}
%
The double-well potential, $V_\lambda$, is the sum of the left and right single-well potentials:
\eql{ V_\lambda = v_\lambda^L + v^R_\lambda.
\label{eq:W2well}
}
By construction
\[ V_\lambda(-x) = V_\lambda(x).\]
\Cref{fig:schematic_2WELL} is a schematic of the support of $ V_\lambda$; two radially symmetric ``planets'', $\lambda^2 v^{\radial,L}$ and $\lambda^2 v^{\radial,R}$, each surrounded by four symmetrically placed ``sophons'',
 $W^L_\nu$ and $W^R_\nu$, where $\nu=1,2,3,4$.
 In this example, $ v_\lambda(-x)= v_\lambda(x)$ and hence
 $v_\lambda^R(x)= v_\lambda(x- d )$.

 \begin{rem}
     A minimal example of the phenomenon presently described would employ only \emph{one} sophon, so that $v_\lambda^R$ is merely an inversion of $v_\lambda^L$. However, here we have elected to use four sophons precisely to guarantee that $v_\lambda^R$ is a translate (and not merely an inversion) of $v_\lambda^L$. This is a more realistic setup when one considers a whole lattice of wells.
 \end{rem}

\begin{figure}[h!]
 \begin{center}

 \begin{tikzpicture}[>=latex, line cap=round, line join=round]
 % tunables
 \def\R{1.1}     % radius of each well
 \def\Sep{7.0}    % distance between well centers
 \def\gap{0.8}    % small break in the guide lines
 \def\h{0.28}    % half-spacing between the two gray lines

 % title
 %\node[font=\Large\itshape] at (\Sep/2, 3.0) {The double-well Configuration};

 % parallel gray guide lines (with a small central gap)
 \draw[gray!70, line width=0.6pt] (-2,0) -- (\Sep/2-\gap/2,0);
 \draw[gray!70, line width=0.6pt] (\Sep/2+\gap/2,0) -- (\Sep+2,0);
 %\draw[gray!70, line width=0.6pt] (-2,-\h) -- (\Sep/2-\gap/2,-\h);
 %\draw[gray!70, line width=0.6pt] (\Sep/2+\gap/2,-\h) -- (\Sep+2,-\h);

 % wells
 \foreach \x in {0,\Sep} {
  \draw[black, line width=0.8pt] (\x,0) circle (\R);
  \fill (\x,0) circle (1.6pt);
 }

 % small red rings around each well
 \foreach \c/\x in {L/0,R/\Sep} {
  \foreach \sx in {-1,1} {
   \draw[line width=0.7pt] ({\x+\sx*(\R+0.55)}, 0.55) circle (0.09);
   \draw[line width=0.7pt] ({\x+\sx*(\R+0.55)},-0.55) circle (0.09);
  }
 }

 % red measurement arrows
 % vertical y-tilde on the far left
 \draw[<->, line width=0.9pt] (-2,0) -- (-2,{0.55})
  node[midway,left=3pt] {$\tilde y$};
 % D across the left well
 \draw[<->, line width=0.9pt] (0,{\R+0.35}) -- ({0+(\R+0.55)},{\R+0.35})
  node[midway,above=2pt] {$D$};
 % long D^{3/2} arrow across both wells
 \draw[<->, line width=1.0pt] (-0.2,-1.8) -- (\Sep+0.2,-1.8)
  node[midway,below=2pt] {$D^{3/2}$};
\end{tikzpicture}
\end{center}
  \caption{Double-well configuration consisting of two atomic wells (\cref{fig:schematic_single-well}) centered at the points $x=\pm d= (\pm D^{3/2}/2,0)$.}
  \label{fig:schematic_2WELL}
\end{figure}
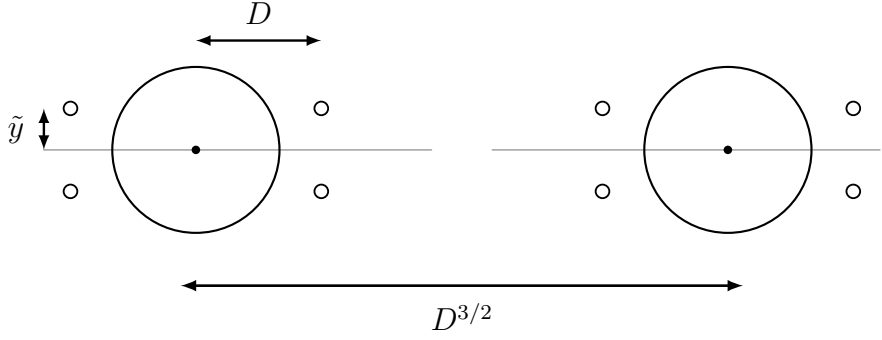
 
 By \cref{eq:xi-def}, the distance between the centers of two atomic (single-well) configurations is:
 \eql{ 2\norm{d} \equiv D^{3/2}. \label{eq:xiD}}
 By taking $D\ge D_{\rm min}\gg1$ in hypothesis (D) above, we arrange that this distance is much larger than the horizontal distance between the sophons 
 and their respective planets:
 \eql{ D^{3/2}\gg D. \label{eq:xi_vs_D}}

Due to our symmetric placement of the sophons, the right well is a translate of the left well.

The double-well Hamiltonian is given by 
\eq{
 H_\lambda = 
 \calP^2_\lambda
+ V_\lambda, 
\label{eq:H2well}
}
where $ V_\lambda$ is defined in \Cref{eq:V_L}, \Cref{eq:V_R}, \Cref{eq:W2well}.

\begin{prop}\label{prop:H2well-low_energy}
The lowest part of the spectrum of $ H_\lambda $ consists of two real eigenvalues $E_{0,\lambda}\le E_{1,\lambda}$ in a neighborhood of the single-well ground state energy $e_\lambda$ and 
 \eq{\dist\br{\Set{E_0,E_1}, \sigma\br{ H_\lambda }\setminus\Set{E_0,E_1}}\ge \frac{1}{2}c_{\rm gap} \ .
 }
\end{prop}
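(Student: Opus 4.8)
The plan is to compare $H_\lambda$ with the pair of decoupled single-well Hamiltonians and run a standard Agmon/Combes--Thomas-type localization argument, exploiting that the single-well ground state $e_\lambda$ is separated from the rest of $\sigma(h_\lambda)$ by roughly $c_{\rm gap}$ (which itself follows from the spectral gap assumption \cref{assume:gap} for $h^\radial_\lambda$ together with the fact, established in \cref{prop:Gauss-bound-phi_lambda} and the surrounding discussion, that the sophon perturbation $\tau W$ is exponentially small in every $C^k$ norm, so $e_\lambda = e^\radial_\lambda + \mathrm{o}(1)$ and the gap survives). First I would record that $h_\lambda^L := \calP_\lambda^2 + v_\lambda^L$ and $h_\lambda^R := \calP_\lambda^2 + v_\lambda^R$ are each unitarily equivalent to $h_\lambda$ via a magnetic translation $\widehat R^{\pm d}$ (for $h_\lambda^R$ one also conjugates by the inversion $\calI$), so each has a simple ground state $\vf_\lambda^{L}, \vf_\lambda^{R}$ at energy $e_\lambda$ with a gap of size $\ge c_{\rm gap} - \mathrm{o}(1) \ge \tfrac34 c_{\rm gap}$ above.

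Next I would build an approximate spectral projection for $H_\lambda$ onto a $2$-dimensional space. Let $\mathcal{W} := \szpan\{\vf_\lambda^L, \vf_\lambda^R\}$. Using the Gaussian decay bounds of \cref{prop:Gauss-bound-phi_lambda} and the large separation $2\norm{d} = D^{3/2}$ of the two wells (versus the $\Ord{D}$ size of each single-well support), one shows $\norm{(H_\lambda - e_\lambda)\vf_\lambda^{L/R}} \le \varepsilon_\lambda$ and $\abs{\ip{\vf_\lambda^L}{\vf_\lambda^R}} \le \varepsilon_\lambda$ with $\varepsilon_\lambda = \Ord{\exp(-c\lambda D^2)}$, say; the cross terms are small because applying $(H_\lambda - e_\lambda)$ to $\vf_\lambda^L$ produces only the ``wrong-well'' potential $v_\lambda^R$ times $\vf_\lambda^L$, which is exponentially small on $\supp v_\lambda^R$. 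Hence $\mathcal{W}$ is a genuine (nearly orthonormal) approximate eigenspace: the Temple/Kato--Rellich machinery, or more directly a resolvent estimate $\norm{(H_\lambda - z)^{-1}} \le 2/\dist(z,\{e_\lambda\})$ for $z$ on a circle of radius $\tfrac12 c_{\rm gap}$ about $e_\lambda$ — proved via the IMS localization formula with a partition of unity subordinate to ``near left well / near right well / far'' — shows that $H_\lambda$ has exactly two eigenvalues (counted with multiplicity) inside that circle, and none in the annulus between radius $\varepsilon_\lambda^{1/2}$ and $\tfrac12 c_{\rm gap}$, with the rest of $\sigma(H_\lambda)$ outside. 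This simultaneously gives $E_{0,\lambda}, E_{1,\lambda}$ in an $\Ord{\varepsilon_\lambda}$-neighborhood of $e_\lambda$ and the gap $\dist(\{E_0,E_1\}, \sigma(H_\lambda)\setminus\{E_0,E_1\}) \ge \tfrac12 c_{\rm gap}$. Reality of $E_{0,\lambda},E_{1,\lambda}$ is automatic since $H_\lambda$ is self-adjoint.

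The IMS/partition-of-unity lower bound needs the complementary fact that $H_\lambda$ restricted (in form sense) away from both well-supports is bounded below by something like $-\lambda^2 v^\radial_{\min}$ shifted up — i.e. on the ``far'' region $V_\lambda$ is small and $\calP_\lambda^2 \ge \lambda$ (the bottom Landau level), while on each near-well region $H_\lambda$ agrees with the corresponding $h_\lambda^{L/R}$ up to the exponentially small wrong-well tail, so its quadratic form is $\ge (e_\lambda + c_{\rm gap})$ minus $\Ord{\varepsilon_\lambda}$ on the orthocomplement of $\mathcal{W}$. I expect the main obstacle to be the bookkeeping in this IMS estimate: one must choose the partition of unity at the right length scale (width $\sim D$, so that $\abs{\nabla \chi}^2 = \Ord{D^{-2}}$ is negligible against $c_{\rm gap}$ but the cutoffs still separate the wells), and one must control the magnetic cross term $\Re{\ip{\chi \psi}{\calP_\lambda^2 \chi' \psi}}$ carefully — although since $H_\lambda$ is a real magnetic Schrödinger operator in symmetric gauge the usual IMS identity $\sum_j \chi_j H \chi_j = H + \sum_j \abs{\nabla\chi_j}^2$ goes through verbatim, so this is technical rather than deep. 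All the exponential-smallness inputs are already packaged in \cref{prop:gs0-decay,prop:Gauss-bound-phi_lambda} and the parameter choices of \cref{sec:parameters}.
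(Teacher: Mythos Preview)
Your sketch is correct and is essentially the standard double-well argument. The paper itself does not prove this proposition in-text; it simply cites \cite[Theorem 6.3]{fefferman2025lowerboundsquantumtunneling}, and the key energy estimate you derive via IMS localization (that $\ip{\psi}{(H_\lambda-e_\lambda)\psi}\ge\tfrac12 c_{\rm gap}\norm{\psi}^2$ for $\psi\perp\vf_\lambda^L,\vf_\lambda^R$) is exactly \cref{eq:energy-estimate}, which the paper likewise attributes to that reference. So you have filled in what the paper outsources, using the same ingredients (Gaussian decay of $\vf_\lambda^{L/R}$, the relation $(H_\lambda-e_\lambda)\vf_\lambda^L=v_\lambda^R\vf_\lambda^L$, and a partition-of-unity lower bound) that appear elsewhere in the paper.
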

The proof of this proposition may be found, e.g., in \cite[Theorem 6.3]{fefferman2025lowerboundsquantumtunneling}

Introduce the left and right atomic Hamiltonians: $ h_\lambda^L$ and $ h_\lambda^R$ and their ground state eigenpairs:
\[ (e_\lambda,\vf_\lambda^L)\quad\textrm{and}\quad\ ( e_\lambda,\vf_\lambda^R),\quad \textrm{which satisfy}\]

\begin{subequations}
\label{eq:LRwells}
\begin{align}
 h_\lambda^L\ \vf_\lambda^L &\equiv\br{ \calP_\lambda^2 + v_\lambda^L(X)} \vf_\lambda^L = e_\lambda\ \vf_\lambda^L,
\label{eq:left-well-evp}\\
 h_\lambda^R\ \vf_\lambda^R &\equiv \br{\calP_\lambda^2 + v^R_\lambda(X)} \vf_\lambda^R = e_\lambda\ \vf_\lambda^R\ .
\label{eq:right-well-evp}
\end{align}
\end{subequations}

Since $ v^R_\lambda (x)= v^L_\lambda (-x)$, we have (after possibly multiplying the 
ground states by constant factors of absolute value equal to one):
\eql{\label{eq:phiRphiL}
  \vf_\lambda^R(x)=\vf_\lambda^L(-x).
}
The states $\vf_\lambda^L$ and $\vf_\lambda^R$ can be expressed in terms of $\vf_\lambda$, the ground state of $ h_\lambda$, whose potential is centered at $x=0$; see \cref{eq:gs-single-well}. Using the magnetic translation operator introduced in \cref{eq:magnetic translations} we have: 
\eql{ \vf_\lambda^L(x) = \widehat{R}^{- d }\vf_\lambda(x) =
 \ee^{-\ii\frac{\lambda}{2}x\cdot d^\perp }\vf_\lambda(x+ d ) = \ee^{\ii\frac{\lambda}{2}x\wedge d }\vf_\lambda(x+ d )\label{eq:phiL}
 }
 and, by \Cref{eq:phiRphiL},
 \eql{
 \vf_\lambda^R(x) = 
 \ee^{+\ii\frac{\lambda}{2}x\cdot d^\perp }\vf_\lambda(-x+ d )=\ee^{-\ii\frac{\lambda}{2}x\wedge d}\vf_\lambda(-x+ d )\,.
 \label{eq:phiRR}}
Here $x\wedge y\equiv x_1 y_2-x_2y_1$.
 
 %e^{-i\frac{\lambda}{2}x\wedge % d }\vf^{\radial}_{\lambda}(x- d ).
 %The latter equality holds by our assumption %that the ground state, $\vf^{\radial}_{\lambda}$, of %$h^{\radial}_{\lambda}$ is radial.

\bigskip

By \cref{eq:left-well-evp} we have 
\begin{align}
\br{\calP_\lambda^2 - e_\lambda\Id } \vf_\lambda^L = -v_\lambda^L(X) \vf_\lambda^L . \label{eq:left-well-evp-1}
\end{align}

% eq:SHO f solution

From \Cref{sec:HO&Landau}, in particular \Cref{eq:relation between Landau and SHO Green's functions}, we have that \eq{
\RR^2\times\RR^2\ni(x,y)\mapsto K(x-y) \ee^{-\ii\frac{\lambda}2 x\wedge y}} is the fundamental solution of $\calP_\lambda^2 - e_\lambda\Id$ with source at $y\in\mathbb R^2$; we use the notation $K$ from \cref{eq:SHO f solution}.
In particular from \Cref{eq:left-well-evp-1} it follows that
 \begin{align}
  \vf_\lambda^L(x) = -\int_{y\in\RR^2} K(x-y)\ \ee^{-\ii\frac{\lambda}2 x\wedge y}\ v_\lambda^L(y)\vf_\lambda^L(y)\dif{y}.
  \label{eq:phi_L_rep}
\end{align}

\subsection{Roadmap, Step 4: The hopping coefficient for the double-well Hamiltonian}

\begin{defn}\label{def:hopping}
  The magnetic hopping coefficient is given by 
  \eql{
  \rho = \rho(\lambda) = \Big\langle \vf_\lambda^L,\br{H_\lambda -e_\lambda\Id}\vf_\lambda^R\Big\rangle,
  \label{eq:hopping_coefficient}
  }
  where the inner product is taken in $L^2(\RR^2)$.
\end{defn}

A consequence of the inversion symmetry of the double-well potential, $ V_\lambda$, is

\begin{prop}\label{prop:rho-real}
\begin{enumerate} 
\item The hopping coefficient $\rho(\lambda)$ is real-valued.
\item 
$ \rho(\lambda)=\left\langle \vf_\lambda^L, v^R_\lambda\vf_\lambda^R\right\rangle
= \left\langle v^L_\lambda\vf_\lambda^L,\vf_\lambda^R\right\rangle$.
\end{enumerate}
\end{prop}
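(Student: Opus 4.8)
The plan is to exploit the two equivalent "boundary-term" representations of $\rho$ that come from moving the operator $\calP_\lambda^2 - e_\lambda\Id$ onto one factor or the other, together with the inversion symmetry $\calI V_\lambda = V_\lambda$ and the relation $\vf_\lambda^R(x) = \vf_\lambda^L(-x)$.

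First I would prove part (2). Starting from $\rho = \langle \vf_\lambda^L, (H_\lambda - e_\lambda\Id)\vf_\lambda^R\rangle$, write $H_\lambda - e_\lambda\Id = (\calP_\lambda^2 - e_\lambda\Id) + V_\lambda$ and use the right-well eigenvalue equation in the form $(\calP_\lambda^2 - e_\lambda\Id)\vf_\lambda^R = -v_\lambda^R \vf_\lambda^R$ (the analogue of \cref{eq:left-well-evp-1}). Then $(H_\lambda - e_\lambda\Id)\vf_\lambda^R = (V_\lambda - v_\lambda^R)\vf_\lambda^R = v_\lambda^L \vf_\lambda^R$, since $V_\lambda = v_\lambda^L + v_\lambda^R$. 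Hence $\rho = \langle \vf_\lambda^L, v_\lambda^L \vf_\lambda^R\rangle = \langle v_\lambda^L \vf_\lambda^L, \vf_\lambda^R\rangle$ (the potential is real, so it passes across the inner product). Symmetrically, moving the operator onto the left factor: $\calP_\lambda^2$ is self-adjoint, so $\rho = \langle (\calP_\lambda^2 - e_\lambda\Id)\vf_\lambda^L, \vf_\lambda^R\rangle + \langle \vf_\lambda^L, V_\lambda \vf_\lambda^R\rangle$, and using $(\calP_\lambda^2 - e_\lambda\Id)\vf_\lambda^L = -v_\lambda^L\vf_\lambda^L$ together with $V_\lambda = v_\lambda^L + v_\lambda^R$ gives $\rho = \langle \vf_\lambda^L, v_\lambda^R \vf_\lambda^R\rangle$. (One should note the integrations by parts are justified by the Gaussian decay of $\vf_\lambda^L,\vf_\lambda^R$ from \cref{prop:Gauss-bound-phi_lambda}, so no boundary terms appear and $\calP_\lambda^2$ genuinely moves across.) This proves (2) and incidentally shows the two expressions there are equal.

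For part (1), I would use the substitution $x \mapsto -x$ in one of the integral expressions from (2). Writing $\rho = \int \overline{\vf_\lambda^L(x)}\, v_\lambda^R(x)\, \vf_\lambda^R(x)\,\dif x$ and substituting $x\to -x$, then using $\vf_\lambda^R(x) = \vf_\lambda^L(-x)$ (hence $\vf_\lambda^L(x) = \vf_\lambda^R(-x)$) and $v_\lambda^R(x) = v_\lambda^L(-x)$, the integral becomes $\int \overline{\vf_\lambda^R(x)}\, v_\lambda^L(x)\, \vf_\lambda^L(x)\,\dif x = \langle v_\lambda^L\vf_\lambda^L, \vf_\lambda^R\rangle = \overline{\langle \vf_\lambda^R, v_\lambda^L\vf_\lambda^L\rangle}$. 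But by the second equality in (2) this last quantity is $\overline{\rho}$ after one more application of (2) to identify $\langle v_\lambda^L\vf_\lambda^L,\vf_\lambda^R\rangle$ with $\rho$; so $\rho = \overline{\rho}$. Equivalently and perhaps more cleanly: the inversion operator $\calI$ is anti-unitarily compatible here in the sense that $\langle \calI f, \calI g\rangle = \overline{\langle f,g\rangle} = \langle g,f\rangle$, $\calI$ commutes with $H_\lambda$ (since $\calI V_\lambda = V_\lambda$ and $\calI \calP_\lambda^2 \calI = \calP_\lambda^2$), and $\calI \vf_\lambda^L = \vf_\lambda^R$ up to the phase already fixed in \cref{eq:phiRphiL}; therefore $\rho = \langle \vf_\lambda^L, (H_\lambda - e_\lambda)\vf_\lambda^R\rangle = \langle \calI\vf_\lambda^R, \calI(H_\lambda-e_\lambda)\calI \cdot \calI\vf_\lambda^L\rangle = \langle \vf_\lambda^R, (H_\lambda - e_\lambda)\vf_\lambda^L\rangle = \overline{\langle \vf_\lambda^L, (H_\lambda-e_\lambda)\vf_\lambda^R\rangle} = \overline{\rho}$.

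I expect the only real subtlety to be bookkeeping of the phase factors: the identity $\vf_\lambda^R(x) = \vf_\lambda^L(-x)$ holds only after the ground states are multiplied by suitable unimodular constants (as flagged around \cref{eq:phiRphiL}), and one must check that the particular normalization used is consistent across both appearances of $\vf_\lambda^R$ in the two expressions of part (2) — otherwise an unwanted phase survives and $\rho$ is only real up to that phase. Since $\rho$ and both expressions in (2) transform the same way under rephasing $\vf_\lambda^R \mapsto e^{i\theta}\vf_\lambda^R$, fixing the phase as in \cref{eq:phiRphiL} makes all three genuinely equal and real simultaneously; I would state this explicitly. The integration-by-parts justification (Gaussian decay) is routine given \cref{prop:Gauss-bound-phi_lambda}.
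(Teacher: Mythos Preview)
Your proof is correct and follows essentially the same route as the paper: Part 2 uses the identities $(H_\lambda-e_\lambda)\vf_\lambda^L=v_\lambda^R\vf_\lambda^L$ and $(H_\lambda-e_\lambda)\vf_\lambda^R=v_\lambda^L\vf_\lambda^R$ (these are exactly the paper's \cref{eq:HphiLR}), and your second argument for Part 1 is the paper's argument verbatim---unitarity of $\calI$, $[\calI,H_\lambda]=0$, $\calI\vf_\lambda^L=\vf_\lambda^R$, then conjugate symmetry of the inner product together with self-adjointness of $H_\lambda-e_\lambda$.

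One correction worth making: the operator $\calI:\psi(x)\mapsto\psi(-x)$ is \emph{unitary}, not anti-unitary, so $\langle\calI f,\calI g\rangle=\langle f,g\rangle$, not $\overline{\langle f,g\rangle}$. Your displayed chain of equalities in fact uses unitarity correctly (you pass from $\langle\calI\vf_\lambda^R,\calI(\cdots)\rangle$ to $\langle\vf_\lambda^R,(\cdots)\rangle$ without a conjugate), so the argument is sound once the verbal slip is fixed; had you actually used anti-unitarity there you would obtain only the tautology $\rho=\rho$. Likewise, in your first argument for Part~1 the line $\int\overline{\vf_\lambda^R}\,v_\lambda^L\vf_\lambda^L=\langle v_\lambda^L\vf_\lambda^L,\vf_\lambda^R\rangle$ should read $\langle\vf_\lambda^R,v_\lambda^L\vf_\lambda^L\rangle$; the conclusion $\rho=\overline{\rho}$ then follows immediately from Part~2.
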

\begin{proof} Note that $ H_\lambda -e_\lambda\Id$ is self-adjoint. Further, $[\calI, H_\lambda ]=0$,
 where $\calI f(x)=f(-x)$ and hence $\calI\vf_\lambda^L(x)=\vf_\lambda^L(-x)=\vf_R(x)$. From these properties we have:
\begin{align*}
\rho &= \Big\langle ( H_\lambda -e_\lambda)\vf_\lambda^L,\vf_\lambda^R\Big\rangle
 = \Big\langle \vf_\lambda^L, ( H_\lambda -e_\lambda)\vf_\lambda^R\Big\rangle\\
 &=\Big\langle \calI\vf_\lambda^L, \calI( H_\lambda -e_\lambda)\vf_\lambda^R\Big\rangle=
 \Big\langle \calI\vf_\lambda^L, ( H_\lambda -e_\lambda) \calI\vf_\lambda^R\Big\rangle\\
 &= \Big\langle \vf_\lambda^R, ( H_\lambda -e_\lambda) \vf_\lambda^L\Big\rangle =
 \overline{\Big\langle ( H_\lambda -e_\lambda) \vf_\lambda^L,\vf_\lambda^R\Big\rangle} = \overline{\rho},
\end{align*}
completing the proof of Part 1. Part 2 follows 
from the relations
\begin{subequations}
\label{eq:HphiLR}
\begin{align}
( H_\lambda -e_\lambda)\vf_\lambda^L(x) &= v^R_\lambda (x)\vf_\lambda^L(x),\\
( H_\lambda -e_\lambda)\vf_\lambda^R(x) &= v^L_\lambda (x)\vf_\lambda^R(x),
\end{align}
\end{subequations}
which are consequences of \cref{eq:LRwells}.
\end{proof}

\begin{rem}[Non-magnetic case]\label{rem:nonmag-rho-neg}
 Note that in the non-magnetic case, $\rho(\lambda)$ is negative. Indeed, then the non-magnetic ground states may be chosen positive whereas $v\leq0$.
\end{rem}

\subsection{Roadmap, Step 5: Decomposition of the double-well hopping coefficient into a sum of interaction terms}
\label{sec:RM-Step5}
We next seek to express the hopping coefficient, \Cref{eq:hopping_coefficient}, in terms of interactions among "planets" and "sophons". Thanks to \Cref{eq:phi_L_rep} and \cref{prop:rho-real} we have
\eql{\label{eq:hopping_rep}
-\rho = \int_{\RR^2} \dif{x} \int_{\RR^2} \dif{y}\ 
 v^R_\lambda (x) \overline{\vf_\lambda^R(x)}\ K(x-y)\ \ee^{-\ii\frac{\lambda}2 x\wedge y}\ v_\lambda^L(y) \vf_\lambda^L(y).
}

% \footnote{\textcolor{red}{Previous text being modified a little:} Thanks to \Cref{eq:phi_L_rep} and \cref{eq:HphiLR} we have 
% \[
% ( H_\lambda -e_\lambda)\vf_\lambda^L(x) =
%  - v^R_\lambda (x)\int_{y\in\RR^2}   K(x-y)\ \ee^{-\ii\frac{\lambda}2 x\wedge y}\ v_\lambda^L(y)\vf_\lambda^L(y)\dif{y}.
% \]
% A representation of $\rho(\lambda)$ is obtained by now taking the inner product with $\vf_\lambda^R$:
% \eql{\label{eq:hopping_rep}
% -\rho = \int_{\RR^2} \dif{x} \int_{\RR^2} \dif{y}\ 
%  v^R_\lambda (x) \overline{\vf_\lambda^R(x)}\ K(x-y)\ \ee^{\ii\frac{\lambda}2 x\wedge y}\ v_\lambda^L(y) \vf_\lambda^L(y).
% }
% }
The fundamental solution $K(x)$ is displayed in \Cref{sec:HO&Landau}. 

Using now the expressions in \Cref{eq:V_L} and \Cref{eq:V_R} for $ v^L_\lambda (x)$, $ v^R_\lambda (x)$ in \Cref{eq:hopping_rep}
 yields an expansion of the magnetic hopping coefficient:
\eql{\label{eq:rho-expand}
-\rho=:\Omega(0,0) + \sum_{\nu=1}^{\nu_{\rm max}}\Big(\ \Omega(\nu,0)\ +\ \Omega(0,\nu)\ \Big)\ +\ \sum_{\nu,\nu^\prime=1}^{\nu_{\rm max}} \Omega(\nu,\nu^\prime).
}
Each term in \Cref{eq:rho-expand} represents an interaction between a planet or sophon of the left atomic configuration ($P^L$ or $S^L_{\nu^\prime}$) and a planet or sophon of the right atomic configuration ($P^R$ or $S^R_{\nu^\prime}$).
The terms, $\Omega(\nu,\nu^\prime)$ in this expansion are, for $\nu,\nu^\prime=1,\dots,\nu_{\rm max}$\ :
% Omega(\nu,\nu') defined on page 73
% Omega(0,0), Omega(\nu,\nu') estimated %on Part 1, p67
{\footnotesize{
\begin{subequations}
\label{eq:Omegas}
  \begin{align}
&P^R \leftrightarrow P^L: \ \ \Omega(0,0) \equiv \int_{\RR^2\times\RR^2} 
 \lambda^2 v^{\radial,R}(x) 
 \overline{\vf_\lambda^R(x)}\ \ee^{-i\frac{\lambda}2 x\wedge y}\ K(x-y)\  \lambda^2 v^{\radial,L}(y) \vf_\lambda^L(y)\ \dif{x} \dif{y}\qquad\qquad  %\label{Omega00}
 \\
& P^R \leftrightarrow S^L_{\nu}: \ \ \Omega(0,\nu) \equiv \int_{\RR^2\times\RR^2} 
 \lambda^2 v^{\radial,R}(x) \overline{\vf_\lambda^R(x)}\ \ee^{-i\frac{\lambda}2 x\wedge y}\ K(x-y)\ \tau_{\lambda,M,D}W^L_\nu(y) \vf_\lambda^L(y)\ \dif{x} \dif{y} %\label{Omega0mu}
 \\
&S^R_{\nu}\leftrightarrow P^L: \ \ \Omega(\nu,0) \equiv \int_{\RR^2\times\RR^2} 
 \tau_{\lambda,M,D}W^R_\nu(x) \overline{\vf_\lambda^R(x)}\ \ee^{-i\frac{\lambda}2 x\wedge y}\ K(x-y)\ \lambda^2 v^{\radial,L}(y) \vf_\lambda^L(y)\ \dif{x} \dif{y} %\label{Omegamu0}
 \\
&S^R_{\nu^\prime}\leftrightarrow S^L_\nu: \ \  \Omega(\nu^\prime,\nu) \equiv \int_{\RR^2\times\RR^2} 
 \tau_{\lambda,M,D}W^R_{\nu^\prime}(x) \overline{\vf_\lambda^R(x)}\ \ee^{-i\frac{\lambda}2 x\wedge y}\ K(x-y)\ \tau_{\lambda,M,D}W^L_\nu(y) \vf_\lambda^L(y)\ \dif{x} \dif{y}\ . 
 %\label{Omegamumup}
 \end{align}
\end{subequations}
}}

\subsubsection{Strategy}\label{sec:strategy}
 In order to clarify our strategy, we pause to recapitulate and to make a number of observations about the expansion 
\Cref{eq:rho-expand}-\Cref{eq:Omegas}: 
\begin{enumerate}
  \item The wells $v^{\radial,R}$ and $v^{\radial,L}$ (planets) are radially symmetric functions and supported on discs of radius one about $- d $ and $+ d $ respectively
  \item The wells $\tau_{\lambda,M,D}W^L_\nu$ and $\tau_{\lambda,M,D}W^R_\nu$ (sophons), for $\nu=1,\dots,\nu_{\rm max}$, are of depth $\tau$, and are supported on discs of radius $\delta$. 
  \item Each integrand in \cref{eq:Omegas} is supported in a product of discs. %All integrals in \Cref{eq:Omegas} reduce to a finite sum of integrals, each over a product of discs.
  \item Our atomic configurations, whose associated double-well potentials have vanishing eigenvalue splitting, will be built by adding to the radial atomic well a few small perturbations of small support. Thus $\delta$ and $\tau$ are chosen to be small, and are governed by the following considerations. 
  \item Note that the three expressions in the expansion of $-\rho$ in \Cref{eq:rho-expand} correspond to:
  \begin{align*}
  &\textrm{$\mathcal{O}(\tau^0)$:\quad Planet $\leftrightarrow $ Planet interaction},\\
  &\textrm{$\mathcal{O}(\tau^1)$:\quad Planet $\leftrightarrow $ Sophon interactions,\ and } \\
   &\textrm{$\mathcal{O}(\tau^2)$:\quad Sophon $\leftrightarrow $ Sophon interactions.}
  \end{align*}
  \item Consider the double-well configuration indicated in the schematic of \Cref{fig:schematic_2WELL}. There are two additional parameters that fix the double-well potential:
   $D$ and $\bar{y}$:\\
$\bullet$\quad $D$ is a fixed large enough number (independent of $\lambda$) which encodes the horizontal distance between planet centers, $2| d |=D^{3/2}$ as well as the horizonal displacement of each sophon relative to its planet, $\pm D$, and \\
 $\bullet$\quad  $\bar{y}$ encodes the vertical displacement, $\pm \bar{y}$, of the sophon centers from the axis through the planet centers.\\
  Hence, the atomic configurations of the type sketched in \Cref{fig:schematic_2WELL} depend on the four parameters:
  \[ \delta,\ \tau, D\quad \textrm{and}\quad \bar{y}.\]
  Here, $\delta=\delta_{\lambda,M,D}$ and 
  $\tau=\tau_{\lambda,M,D}$.
  \item Fix $D_{\rm min}$ sufficiently large. Then, choose $\lambda_{\rm min}$ sufficiently large (possibly depending on $D_{\rm min}$. 
  For any $\lambda\ge\lambda_{\rm min}$, our choices of 
   $\delta_{\lambda,M,D}$ and $\tau_{\lambda,M,D}$ are designated so that
   the dominant term in \Cref{eq:rho-expand} 
   comes from the strongest of the 
   (formally $\mathcal{O}(\tau)$) Planet $\leftrightarrow$ Sophon interactions. Thus, we require
   \begin{subequations}
   \label{eq:PS-PP_SS}
   \begin{align}
    &\Big|\textrm{Planet $\leftrightarrow $ Sophon intrxns}\Big| \gg \Big|\textrm{Planet $\leftrightarrow $ Planet intrxn}\Big| \label{eq:PS-PP}\\
    &\Big|\textrm{Planet $\leftrightarrow $ Sophon intrxns}\Big| \gg \Big|\textrm{Sophon $\leftrightarrow $ Sophon intrxns}\Big|\nonumber \\
  & {\ }\ \label{eq:PS-SS} \end{align}
  \end{subequations}
  The constraint \Cref{eq:PS-PP} imposes a lower bound on $\tau\delta^2$ and 
  the constraint \Cref{eq:PS-SS} imposes
  an upper bound on $\tau\delta^2$.
   These constraints will be shown (below in \Cref{sec:tau-determined} and \Cref{rem:delta-rationale}) to be satisfied 
   with the choices (recall \cref{eq:delta&tau})
   \[ \tau_{\lambda,M,D} =\exp\left(-\frac{\lambda}{4}(2D^{5/2}-3D^2)\right)\quad {\rm and}\quad 
\delta_{\lambda,M,D}= \ee^{-M\lambda D^{3/2}},\]
where $D\ge D_{\rm min}$ and $\lambda\ge\lambda_{\rm min}(D)$.
\end{enumerate}
 
 \bigskip

\subsubsection{Symmetries of interaction terms $\Omega(\nu,\nu^\prime)$}
We conclude this subsection with a remark on the symmetries of $\Omega(\nu,\nu^\prime)$. 
\begin{prop}\label{prop:Omega-nu0-cc-Omega-0nu} Assume $\nu, \nu^\prime\in\{1,\dots,\nu_{\rm max}\}$. Then, 
  $\overline{\Omega(\nu,0)} = \Omega(0,\nu) $. Further, $\Omega(\nu,\nu)$ is real for $\nu=0,1,\dots,\nu_{max}$ and 
  if $\nu\ne\nu^\prime$, then $\overline{\Omega(\nu,\nu^\prime)}= \Omega(\nu^\prime,\nu)$
\end{prop}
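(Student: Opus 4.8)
The plan is to reduce all four identities to a single change of variables in the integral representation \eqref{eq:Omegas}, exploiting the inversion symmetry relating the left and right atomic data. First I would write each of the four terms in the common form
\[
\Omega(\alpha,\beta)=\int_{\RR^2\times\RR^2} U^R_\alpha(x)\,\overline{\vf_\lambda^R(x)}\,\ee^{-\ii\frac\lambda2 x\wedge y}\,K(x-y)\,U^L_\beta(y)\,\vf_\lambda^L(y)\,\dif x\,\dif y ,
\]
where for a label $\alpha\in\{0,1,\dots,\nu_{\rm max}\}$ I set $U^R_0:=\lambda^2 v^{\radial,R}$ and $U^R_\alpha:=\tau_{\lambda,M,D}W^R_\alpha$ for $\alpha\ge1$, and similarly $U^L_\alpha$. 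The three inputs I will use are: (i) all these potentials are real-valued; (ii) by \eqref{eq:WRLnu}, \eqref{eq:V_R} and the radiality of $v^{\radial}$ one has $U^R_\alpha(-z)=U^L_\alpha(z)$ for every label $\alpha$ and $z\in\RR^2$, and likewise $\vf_\lambda^R(-z)=\vf_\lambda^L(z)$ by \eqref{eq:phiRphiL}; (iii) the kernel $K(\cdot)=(H^{\rm SHO}_\lambda-e_\lambda\Id)^{-1}(\cdot,0)$ is real-valued.

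Granting (i)--(iii), I would take the complex conjugate of the displayed integral: the real factors $U^R_\alpha,U^L_\beta,K$ are untouched, the two ground-state factors exchange which one carries the bar, and the phase flips sign, giving
\[
\overline{\Omega(\alpha,\beta)}=\int_{\RR^2\times\RR^2} U^R_\alpha(x)\,\vf_\lambda^R(x)\,\ee^{+\ii\frac\lambda2 x\wedge y}\,K(x-y)\,U^L_\beta(y)\,\overline{\vf_\lambda^L(y)}\,\dif x\,\dif y .
\]
Then I would substitute $x=-\ty$, $y=-\tx$ (unit Jacobian). Bilinearity gives $x\wedge y=(-\ty)\wedge(-\tx)=\ty\wedge\tx=-\,\tx\wedge\ty$, so the phase returns to $\ee^{-\ii\frac\lambda2\tx\wedge\ty}$; the difference is unchanged, $x-y=\tx-\ty$, so $K(x-y)=K(\tx-\ty)$; and by (ii) the $R$-type factors turn into $L$-type and conversely, $U^R_\alpha(-\ty)=U^L_\alpha(\ty)$, $U^L_\beta(-\tx)=U^R_\beta(\tx)$, $\vf_\lambda^R(-\ty)=\vf_\lambda^L(\ty)$, $\overline{\vf_\lambda^L(-\tx)}=\overline{\vf_\lambda^R(\tx)}$. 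After re-ordering the scalar factors this is precisely the integral defining $\Omega(\beta,\alpha)$, so $\overline{\Omega(\alpha,\beta)}=\Omega(\beta,\alpha)$ for all labels. Specializing: $\alpha=\nu,\beta=0$ yields $\overline{\Omega(\nu,0)}=\Omega(0,\nu)$; $\alpha=\beta$ yields reality of $\Omega(\nu,\nu)$ (and of $\Omega(0,0)$); and $\alpha=\nu,\beta=\nu'$ yields $\overline{\Omega(\nu,\nu')}=\Omega(\nu',\nu)$, the restriction $\nu\ne\nu'$ not actually being needed.

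The only step that is not pure bookkeeping is (iii), the reality of $K$. I would establish it from the eigenfunction expansion of the harmonic-oscillator resolvent: $H^{\rm SHO}_\lambda=P^2+\tfrac{\lambda^2}{4}X^2$ has a real orthonormal eigenbasis of Hermite-type functions with eigenvalues in $\{\lambda(m+n+1):m,n\ge0\}\subset(0,\infty)$, while $e_\lambda<0$ lies in its resolvent set, so $K$ is a real (in fact radial) sum of positive multiples of products of real eigenfunctions; alternatively one reads this off the explicit formula for $K$ recalled later in the paper. This is morally the same self-adjointness-plus-inversion argument already used for \cref{prop:rho-real}; the present statement merely refines it by tracking the individual planet/sophon interaction terms instead of their sum, and I anticipate no genuine obstacle.
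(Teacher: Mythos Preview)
Your proposal is correct and follows essentially the same route as the paper: take the complex conjugate, perform the change of variables $\tilde x=-y$, $\tilde y=-x$, and use the inversion relations $W^R_\nu(-z)=W^L_\nu(z)$, $v^{\radial,R}(-z)=v^{\radial,L}(z)$, $\vf_\lambda^R(-z)=\vf_\lambda^L(z)$ together with the reality of $K$. The paper carries out the case $\overline{\Omega(\nu,0)}=\Omega(0,\nu)$ explicitly and declares the remaining cases analogous, whereas your unified notation $U^{L/R}_\alpha$ dispatches all cases at once; otherwise the arguments are identical.
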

\begin{proof}[Proof of \Cref{prop:Omega-nu0-cc-Omega-0nu}] We verify that $\overline{\Omega(\nu,0)} = \Omega(0,\nu) $. The other cases are similar. 

 Let $\tilde x= -y$ and $\tilde y = -x$. Thus, $x\wedge y= \tilde y \wedge \tilde x$ and $x-y=\tilde x - \tilde y$.
 By \cref{eq:WRLnu} and \cref{eq:phiRphiL}:
\begin{align*}
 \overline{\Omega(\nu,0)}
 &= \int_{\RR^2\times\RR^2} 
 \tau_{\lambda,M,D}W^R_\nu(x) \vf_\lambda^R(x)\ \ee^{i\frac{\lambda}2 x\wedge y}\ K(x-y)\ \lambda^2 v^{\radial,L}(y) \overline{\vf_\lambda^L(y)}\ \dif{x} \dif{y}\\
 &= \int_{\RR^2\times\RR^2} 
 \tau_{\lambda,M,D}W^R_\nu(-\tilde y) \vf_\lambda^R(-\tilde y)\ \ee^{i\frac{\lambda}2 \tilde y\wedge \tilde x}\ K(\tilde x- \tilde y)\ \lambda^2 v^{\radial,L}(-\tilde x) \overline{\vf_\lambda^L(-\tilde x)}\ d\tilde x d\tilde y\\
&= \int_{\RR^2\times\RR^2} 
 \tau_{\lambda,M,D}W^L_\nu(\tilde{y}) \vf_\lambda^L(\tilde{y})\ \ee^{-i\frac{\lambda}2 \tilde{x}\wedge\tilde{y}}\ K(\tilde{x}-\tilde{y})\ \lambda^2 v^{\radial,R}(\tilde{x}) \overline{\vf_\lambda^R(\tilde{x})}\ d\tilde{x} d\tilde{y}\\
 &= \int_{\RR^2\times\RR^2} 
 \lambda^2 v^{\radial,R}(\tilde{x}) \overline{\vf_\lambda^R(\tilde{x})} \ee^{-i\frac{\lambda}2 \tilde{x}\wedge\tilde{y}}\ 
 K(\tilde{x}-\tilde{y})\ \tau_{\lambda,M,D}W^L_\nu(\tilde{y}) \vf_\lambda^L(\tilde{y})\ d\tilde{x} d\tilde{y}\ =\ \Omega(0,\nu).
 \end{align*}
  
\end{proof}

\subsection{Roadmap, Step 6: Expansion of the magnetic hopping coefficient, $\rho$}
\label{sec:Roadmap-Step6}
We study the expansion \Cref{eq:rho-expand} for double-well potential, $ V_\lambda$, of the type given schematically in \Cref{fig:schematic_2WELL}, where $\nu_{\rm max}=4$ is the number of sophons surrounding each radially symmetric central well; see \cref{fig:schematic_single-well}.
\begin{prop}[Hopping coefficient expansion]
\label{prop:hopping_leading+error} 
Let $0\le \bar{y}\le 1$, $\lambda_{\rm min}$, $D$, $\delta$ and $\tau_{\lambda,M,D}$ be given as in \cref{sec:parameters}. There exists a positive number ${\rm FAC}_\rho$ satisfying
\eq{
 &  \exp(-CM\lambda D^{3/2}) \le {\rm FAC}_\rho\le \exp(CM\lambda D^{3/2}),\ 
} such that:
%\footnote{\textcolor{red}{\tiny this statement is based on clf's notes. the presentation below perhaps yields something a little different.}}
  \begin{enumerate}
    \item 
We have

\begin{align*}
  \Big|-\rho-\left[\ (\Omega(1,0)+\Omega(0,1) + \Omega(2,0) + \Omega(0,2)  \right]\ \Big|
  \le \ee^{-\frac{M}{4}\lambda D^{3/2}}\cdot {\rm FAC}_\rho\cdot \ee^{-\frac{1}{4}\lambda \left(D^{3}-D^2)\right)},
\end{align*}
\item The hopping coefficient satisfies the expansion
\begin{align}
  \rho &= -\Bigg(\ 
  \cos\left( \frac{\lambda}{2} D^{3/2}\bar{y} \right) + {\rm ERR}_\rho\ \Bigg)\cdot {\rm FAC}_\rho\cdot 
  \ee^{ -\frac{\lambda}{4}(D^3-D^2) },
\label{eq:rho_cos}\end{align}
where
\begin{align*}
& \Big| {\rm ERR}_\rho \Big| \le C \exp(-\frac{M}{4}\lambda D^{3/2})\,.
\end{align*} 
  \end{enumerate}
\end{prop}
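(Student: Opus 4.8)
The plan is to estimate each of the nine terms $\Omega(\nu,\nu')$ in the expansion \eqref{eq:rho-expand} individually, identify the dominant ones, and sum them up. The main tool throughout is the combination of (i) the Gaussian decay bounds for $\vf_\lambda^L,\vf_\lambda^R$ (Propositions \ref{prop:Gauss-bound-phi_lambda}, \ref{prop:Gaussian decay of bound states for compactly supported potentials}), (ii) the representation $\vf^{\radial}_\lambda=\Gamma K^\circ$ outside the support of $v^\circ$ together with the $\Gamma$-bounds \eqref{eq:Gamma-bounds}, and (iii) the pointwise bounds on the harmonic-oscillator Green's function $K$ from \cref{lem:bounds on the SHO Green's function}. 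Since each integrand in \eqref{eq:Omegas} is supported in a product of small discs (of radius $1$ for planets, radius $\delta$ for sophons) centered at explicitly known points, each $\Omega(\nu,\nu')$ is, up to subexponential ($\lambda$-polynomial and $\delta$-measure) factors, of size $\tau^{\#\text{sophons involved}}\cdot\exp(-\tfrac\lambda4\,\Phi(\text{geometry}))$, where $\Phi$ is a sum of squared distances coming from the three Gaussian/Green's-function factors. So the first step is a bookkeeping lemma: for a left object centered near $p_L$ and a right object centered near $p_R$, one has
\[
|\Omega|\ \le\ (\text{prefactor})\cdot\tau^{k}\cdot \exp\!\Big(-\tfrac{\lambda}{4}\big[\dist(p_L,\supp v^{\radial,L})^2+\dist(p_R,\supp v^{\radial,R})^2+|p_L-p_R|^2\big]\Big),
\]
with a matching lower bound for the leading terms (using positivity of $v^\circ$, $\vf^\circ$, and $K$ to rule out cancellation within a single term). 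Plugging in the coordinates $\pm d=(\mp D^{3/2}/2,0)$ and $\zeta_\nu=(\pm D,\pm\bar y)$ and using $d_1=\tfrac12 D^{3/2}$, one computes these exponents explicitly: the planet–planet term carries $\exp(-\tfrac\lambda4\cdot(D^{3/2})^2)=\exp(-\tfrac\lambda4 D^3)$ with no $\tau$; the planet–sophon terms $\Omega(\nu,0),\Omega(0,\nu)$ carry one power of $\tau$ and, for the two sophons nearest the opposite planet ($\nu=1,2$), a reduced exponent that together with $\tau_{\lambda,M,D}=\exp(-\tfrac\lambda4(2D^{5/2}-3D^2))$ reconstitutes the common factor $\exp(-\tfrac\lambda4(D^3-D^2))$; the remaining planet–sophon terms ($\nu=3,4$) and all sophon–sophon terms are smaller by a factor $\exp(-\tfrac M4\lambda D^{3/2})$ or better, thanks to the choice of $\delta$ and the geometry. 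This is exactly the content of Part 1, with ${\rm FAC}_\rho$ absorbing the common $\lambda$-polynomial prefactors and the $\Gamma$, $K$-amplitude factors (hence the bound $\exp(\pm CM\lambda D^{3/2})$ on ${\rm FAC}_\rho$, the $M$ entering through $\delta^2$).

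For Part 2 I would compute the four leading terms $\Omega(1,0)+\Omega(0,1)+\Omega(2,0)+\Omega(0,2)$ precisely enough to see the phase. By Proposition \ref{prop:Omega-nu0-cc-Omega-0nu}, $\Omega(0,\nu)=\overline{\Omega(\nu,0)}$, so the sum is $2\,\Re\{\Omega(1,0)+\Omega(2,0)\}$. In each $\Omega(\nu,0)$ one freezes the slowly varying amplitudes ($K$, the Gaussians, $v^{\radial,L}$) at the centers — controlling the error by the smallness of the disc radii and the Lipschitz bounds on these factors — so that the only genuinely oscillatory ingredient is the magnetic phase $\ee^{-\ii\frac\lambda2 x\wedge y}$ together with the phases $\ee^{\pm\ii\frac\lambda2 x\wedge d}$ hidden inside $\vf_\lambda^{L},\vf_\lambda^{R}$ via \eqref{eq:phiL}, \eqref{eq:phiRR}. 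Evaluating the total phase at the centers: for $\Omega(1,0)$ the relevant points are $x\approx\zeta_1-\text{(shift)}$ near the right planet region and $y$ near $-d$, giving a net phase $+\tfrac\lambda2 D^{3/2}\bar y$; for $\Omega(2,0)$, with $\zeta_2=(D,-\bar y)$, the net phase is $-\tfrac\lambda2 D^{3/2}\bar y$. Hence $\Omega(1,0)+\Omega(2,0)$ has, to leading order, real part proportional to $\cos(\tfrac\lambda2 D^{3/2}\bar y)$ times a common positive amplitude, which is exactly ${\rm FAC}_\rho\cdot\exp(-\tfrac\lambda4(D^3-D^2))$ after the sign $-\rho=\dots$ is tracked. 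The remaining discrepancy — between freezing amplitudes at centers and the true integral, plus the subleading terms from Part 1 — is collected into ${\rm ERR}_\rho$ and bounded by $C\exp(-\tfrac M4\lambda D^{3/2})$, which is exactly the slack between the $\exp(-\tfrac M4\lambda D^{3/2})$ error factor of Part 1 and the $O(1)$ size of the leading $\cos$ term.

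The main obstacle I anticipate is the careful tracking of the magnetic phases — making sure that the phase $x\wedge y$ in $K(x-y)\ee^{-\ii\frac\lambda2 x\wedge y}$ combines correctly with the phases inside $\vf_\lambda^L$ and $\vf_\lambda^R$ (which themselves come from magnetic translations and an inversion, cf.\ \eqref{eq:phiL}--\eqref{eq:phiRR} and \eqref{eq:phiRphiL}), so that the net oscillation really is $\tfrac\lambda2 D^{3/2}\bar y$ and the gauge-dependent pieces cancel as they must (they do, because $\rho$ is gauge-invariant and real by Proposition \ref{prop:rho-real}). A secondary technical point is verifying that freezing the amplitude factors at the disc centers is legitimate at this level of precision: since the phase varies by $O(\lambda\delta D^{3/2})=O(\lambda D^{3/2}\ee^{-M\lambda D^{3/2}})$ over a sophon disc — which is not small! — one cannot simply Taylor-expand the exponential; instead one must choose the sophon centers $\zeta_\nu$ so that the phase is \emph{stationary} (or exactly linear with the designed slope) at the center, and then the $\delta$-disc contributes its full measure $\pi\delta^2$ times the frozen amplitude with only a relative error governed by the curvature of the amplitude, not the phase. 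Getting this stationary-phase bookkeeping exactly right, and confirming it forces the stated positions $\zeta_\nu=(\pm D,\pm\bar y)$, is where the real work lies; everything else is the Gaussian-exponent arithmetic already previewed in \cref{sec:strategy}.
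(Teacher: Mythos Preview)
Your overall architecture---bookkeeping of Gaussian exponents for each $\Omega(\nu,\nu')$, then extracting the phase from the four dominant planet--sophon terms---matches the paper's, but there is a genuine gap at the heart of Part~2, and your closing paragraph misidentifies where it lies.

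The phase variation over a \emph{sophon} disc is $O(\lambda\delta D^{3/2})=O(\lambda D^{3/2}\ee^{-M\lambda D^{3/2}})$, which is exponentially small, not large; this is precisely why $\delta$ was chosen as it was (see \cref{rem:delta-rationale}), and no stationary-phase argument is needed there. The real obstruction is the \emph{planet} integral. In $\Omega(0,\nu)$, the variable $x$ ranges over the unit disc $|x-d|\le 1$ while $y$ sits near the left sophon at $\zeta_\nu-d$; after combining the phase $\ee^{-\ii\frac{\lambda}{2}x\wedge y}$ with the magnetic-translation phases in $\vf_\lambda^L,\vf_\lambda^R$, the residual phase in the $x$-integral still oscillates by $O(\lambda D^{3/2})$ across the unit disc. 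Freezing at the center is therefore not legitimate, and nothing forces the phase to be stationary there.

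The paper does not attempt to control this oscillatory integral directly. Instead, after a change of variables it recognizes the planet integral as (a translate of) the \emph{radial} hopping integral $\tilde\rho_{0,\lambda}(Y)$ defined in \cref{eq:tilde-rho}. The crucial \cref{lem:main-overint} shows that $\tilde\rho_{0,\lambda}(Y)$ is \emph{real and negative} (because $\Gamma\,\tilde\rho_{0,\lambda}$ equals the hopping coefficient for a purely radial double-well, whose sign was established in \cite{FSW_22_doi:10.1137/21M1429412}) and obeys two-sided Gaussian bounds. This kills the oscillation in one stroke: once the single global phase $\ee^{-\ii\lambda\,d\wedge\zeta_\nu}$ is factored out, the remaining amplitude $\ee^{\ii\lambda d\wedge\zeta_\nu}\Omega^\circ_{\rm main}(0,\nu)$ is a \emph{positive} integral of positive factors, and the $\cos(\tfrac{\lambda}{2}D^{3/2}\bar y)$ then drops out trivially from $\ee^{-\ii\lambda d\wedge\zeta_1}+\ee^{-\ii\lambda d\wedge\zeta_2}$. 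Your proposal lacks any substitute for this step.

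A secondary point: the paper first replaces $\vf_\lambda$ by the radial $\vf^\circ_\lambda$ (defining $\Omega^\circ(\nu,\nu')$ and $\rho^\circ$), carries out the entire analysis for $\rho^\circ$, and only at the end bounds $\rho-\rho^\circ$ via \cref{prop:gs-bounds}. This is not cosmetic---the positivity arguments (for $\vf^\circ$, for $\tilde\rho_{0,\lambda}$) that give the \emph{lower} bounds on the leading terms are only available for the unperturbed radial objects. Working directly with $\vf_\lambda$ as you propose would leave you without a lower bound on $|\Omega(0,\nu)|$ and hence without the two-sided control on ${\rm FAC}_\rho$.
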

The proof of this proposition is delayed to \cref{sec:from_rho0_2_rho} below.
\begin{thm}[Change in sign of $\rho(\lambda,\bar{y})$, the magnetic hopping]\label{thm:vanishing_hopping}
 Fix $\lambda\ge\lambda_{\rm min}$. Fix parameters: $0\le \bar{y}\le 1$, $\lambda_{\rm min}$, $D$, $\delta$, $\tau_{\lambda,M,D}$ and $M$ as in \cref{sec:parameters}. 
  By varying the vertical displacement parameter, $\bar{y}$, the magnetic hopping coefficient $\rho=\rho(\lambda;\bar{y})$ can be made to take on positive values, negative values or to vanish.
\end{thm}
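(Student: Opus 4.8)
The plan is to read off the conclusion directly from the expansion in \Cref{prop:hopping_leading+error}(2), which gives
\[
\rho(\lambda;\bar{y}) = -\br{\cos\br{\tfrac{\lambda}{2}D^{3/2}\bar{y}} + {\rm ERR}_\rho}\cdot {\rm FAC}_\rho\cdot \ee^{-\frac{\lambda}{4}(D^3-D^2)},
\]
with ${\rm FAC}_\rho>0$ and $\abs{{\rm ERR}_\rho}\le C\ee^{-\frac{M}{4}\lambda D^{3/2}}$. Since ${\rm FAC}_\rho$ and $\ee^{-\frac{\lambda}{4}(D^3-D^2)}$ are strictly positive (whatever their precise dependence on $\bar y$), we have $\sgn\rho(\lambda;\bar{y}) = -\sgn\br{\cos(\tfrac{\lambda}{2}D^{3/2}\bar{y}) + {\rm ERR}_\rho}$; and because $M$, $D$, $\lambda$ are large in the hierarchy of \Cref{sec:parameters}, the error satisfies $\abs{{\rm ERR}_\rho}\le\tfrac12$, so the sign of $\rho$ is pinned down by the leading cosine whenever the latter is bounded away from $0$.

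First I would exhibit the two signs. At $\bar{y}=0$ the cosine equals $1$, so the bracket is $\ge 1-C\ee^{-\frac{M}{4}\lambda D^{3/2}}>0$, hence $\rho(\lambda;0)<0$. Next set $\bar{y}_\ast := 2\pi/(\lambda D^{3/2})$; since $\lambda\ge\lambda_{\rm min}$ and $D\ge D_{\rm min}$ are large we have $\tfrac{\lambda}{2}D^{3/2}\gg 2\pi$, so $\bar{y}_\ast\in(0,1)$ lies in the admissible range, while $\cos(\tfrac{\lambda}{2}D^{3/2}\bar{y}_\ast)=\cos\pi=-1$ forces the bracket to be $\le -1+C\ee^{-\frac{M}{4}\lambda D^{3/2}}<0$, hence $\rho(\lambda;\bar{y}_\ast)>0$.

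It remains to produce a zero. For this I would observe that $\bar{y}\mapsto\rho(\lambda;\bar{y})$ is continuous on $[0,1]$: the sophon centers $\zeta_\nu$ move continuously with $\bar{y}$, hence so do $v_\lambda^L,v_\lambda^R,V_\lambda$ and $H_\lambda$ (as perturbations in the potential), and by the uniform spectral gap of \Cref{prop:H2well-low_energy} together with standard analytic perturbation theory the eigenpairs $(e_\lambda,\vf_\lambda^L)$ and $(e_\lambda,\vf_\lambda^R)$ depend continuously on $\bar{y}$; therefore so does $\rho=\ip{\vf_\lambda^L}{(H_\lambda-e_\lambda)\vf_\lambda^R}$. (Equivalently, one may note that the quantities ${\rm FAC}_\rho$ and ${\rm ERR}_\rho$ produced in the proof of \Cref{prop:hopping_leading+error} are themselves continuous in $\bar{y}$.) Applying the intermediate value theorem on $[0,\bar{y}_\ast]$ to the continuous function $\bar y \mapsto \rho(\lambda;\bar y)$, which is negative at $0$ and positive at $\bar{y}_\ast$, yields some $\bar{y}_\star\in(0,\bar{y}_\ast)\subset(0,1)$ with $\rho(\lambda;\bar{y}_\star)=0$.

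I expect essentially no genuine obstacle here: all of the analytic content is already packaged into \Cref{prop:hopping_leading+error}, whose proof is deferred to \cref{sec:from_rho0_2_rho}, and the present statement is then a one-line consequence of the intermediate value theorem plus the elementary inequality $\tfrac{\lambda}{2}D^{3/2}>\pi$. The only point worth a sentence of justification is the continuity of $\rho$ in $\bar{y}$ needed for the vanishing assertion, and that is routine given the uniform gap; no estimate beyond those already in place is required, since the parameter hierarchy $M,D,\lambda$ large already forces $\abs{{\rm ERR}_\rho}$ to be far smaller than the oscillation amplitude $1$ of the leading cosine.
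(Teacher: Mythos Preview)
Your proposal is correct and follows essentially the same approach as the paper: both invoke the expansion of \Cref{prop:hopping_leading+error}, observe that the cosine term dominates the error, and then use continuity of $\bar y\mapsto\rho(\lambda;\bar y)$ together with the intermediate value theorem. The only cosmetic difference is that the paper exhibits the sign change near the zeros $\bar y^*_n$ of the cosine (where $\tfrac{\lambda}{2}D^{3/2}\bar y^*_n=(n+\tfrac12)\pi$), whereas you pick the extrema $\bar y=0$ and $\bar y_\ast=2\pi/(\lambda D^{3/2})$; either choice works.
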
 
\begin{proof}
  Fix constants as above with, in particular, $\lambda_{\min}$ sufficiently large. Now fix $\lambda\ge\lambda_{\rm min}$. 
  Consider the finite set of values $\bar{y}^*_n(\lambda)$, where $n\ge1$, satisfying 
  \[ \frac{\lambda}{2}D^{3/2}\bar{y}^*_n=(n+1/2)\pi, \quad {\rm and }\quad 0<\bar{y}^*_n<1. \]

  Then, the real and continuous function $\bar{y}\mapsto\rho(\lambda,\bar{y})$ changes sign as $\bar{y}$ varies near $\bar{y}^*_n$. 
\end{proof}

\subsection{Roadmap, Step 7: Eigenvalue splitting distance, $\Delta(\lambda)$, and signed splitting distance $\mathfrak{S}(\lambda)$}\label{sec:roadmap7}

We continue with the double-well potential configuration sketched in \Cref{fig:schematic_2WELL} and given by the expressions \Cref{eq:W2well}, \Cref{eq:V_L}, \Cref{eq:V_R}. The double-well Hamiltonian is given by
\[
H_\lambda = \calP_\lambda^2 + V_\lambda,
\]
where
\begin{align} V_\lambda(x) &= v^{\radial,L} +\tau_{\lambda,M,D}W^L(x) + v^{\radial,R}(x)+\tau_{\lambda,M,D}W^R(x)\\
v^{\radial,R}(x) &= v^{\radial,L}(-x)\quad {\rm and}\quad 
W^R(x) = W^L(-x).
\end{align}

Recall from the introduction that since $[\calI,H_\lambda]=0$,
 where $\calI f(x)=f(-x)$, it is natural to decompose $L^2(\RR^2)$ as in \cref{eq:Hilbert space decomposition into even and odd}. 
%  into 
% into an orthogonal sum of the eigenspaces of $\calI$, corresponding to the two distinct eigenvalues, $\pm1$, of $\calI$: are $\pm1$:
%  \eq{
%    L^2(\RR^2) = L^2_{+1}(\RR^2)\oplus 
%    L^2_{-1}(\RR^2).
%  }
%  where
% \[ L^2_\sigma(\RR^2)=\Big\{f\in L^2(\RR^2)\ :\ \calI f=\sigma f\Big\}. \]
% The subspaces $L^2_{+1}$ and $L^2_{-1}$ are the subspaces of even and odd functions, respectively.
% Hence, we write \[ L^2_{+1}\equiv L^2_{\rm even}\quad {\rm and}\quad L^2_{-1}\equiv L^2_{\rm odd}.\]
% %
We can determine the spectrum of $ H_\lambda$ on $L^2(\RR^2)$ by considering $ H_\lambda$ on these subspaces:
\[
L^2(\RR^2)-{\rm spec}( H_\lambda) = 
 \Big(L^2_{\rm even}-{\rm spec}( H_\lambda)\Big)\
 \cup\ \Big(L^2_{\rm odd}-{\rm spec}( H_\lambda)\Big).
\]

Recall that $E_{0,\lambda}\le E_{1,\lambda}$ denote the lowest eigenvalues of $ H_\lambda$ acting in $L^2(\RR^2)$; see \cref{prop:H2well-low_energy}. Additionally, we have introduced in the introduction $E_{{\rm even},\lambda}$, the ground state of $ H_\lambda$ acting in $L^2_{\rm even}$, and $E_{{\rm odd},\lambda}$, the ground state of $ H_\lambda$ acting in $L^2_{\rm odd}$.

\begin{prop}\label{prop:even-odd-eigs} 
 For $\lambda\ge \lambda_{\min}$ sufficiently large, we have
\[ \{E_{0,\lambda},E_{1,\lambda}\} = \{E_{{\rm even},\lambda}, E_{{\rm odd},\lambda}\}. \]
\end{prop}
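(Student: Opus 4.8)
The plan is to combine the $\calI$-symmetry with a trial-function (min--max) argument fed by the Gaussian decay of the single-well ground states. Fix the window $I := \big(e_\lambda-\tfrac14 c_{\rm gap},\ e_\lambda+\tfrac14 c_{\rm gap}\big)$. By \cref{prop:H2well-low_energy} the two eigenvalues $E_{0,\lambda}\le E_{1,\lambda}$ lie within an exponentially small neighborhood of $e_\lambda$, while $\sigma(H_\lambda)\setminus\{E_{0,\lambda},E_{1,\lambda}\}$ stays at distance $\ge\tfrac12 c_{\rm gap}$ from them; hence, for $\lambda\ge\lambda_{\min}$ large, $\sigma(H_\lambda)\cap I=\{E_{0,\lambda},E_{1,\lambda}\}$ and the associated spectral subspace $\calV:=\operatorname{Ran}\big(\mathds 1_I(H_\lambda)\big)$ is genuinely two-dimensional (as an eigenspace, even in the coincident case $E_{0,\lambda}=E_{1,\lambda}$). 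Since $[\calI,H_\lambda]=0$ and $I$ is an interval, $\mathds 1_I(H_\lambda)$ commutes with $\calI$, so $\calV$ is $\calI$-invariant and splits as $\calV=(\calV\cap L^2_{\rm even})\oplus(\calV\cap L^2_{\rm odd})$.

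Next I produce, in each parity sector, a trial state with energy in $I$. Put $\psi_+:=\vf_\lambda^L+\vf_\lambda^R$ and $\psi_-:=\vf_\lambda^L-\vf_\lambda^R$; by \cref{eq:phiRphiL} we have $\calI\psi_\pm=\pm\psi_\pm$, so $\psi_+\in L^2_{\rm even}$ and $\psi_-\in L^2_{\rm odd}$. Magnetic translations being unitary (\cref{eq:phiL,eq:phiRR}), $\|\vf_\lambda^L\|=\|\vf_\lambda^R\|=1$, and the overlap $\langle\vf_\lambda^L,\vf_\lambda^R\rangle$ is exponentially small in $\lambda$ by the Gaussian bound of \cref{prop:Gauss-bound-phi_lambda} (the two wells being separated by a distance $\gtrsim D^{3/2}$, cf.\ \cref{eq:xi_vs_D}), whence $\|\psi_\pm\|^2=2\pm\Ord{e^{-c\lambda}}\ge1$. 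Using \cref{eq:HphiLR},
\eq{(H_\lambda-e_\lambda\Id)\psi_\pm \;=\; v^R_\lambda\,\vf_\lambda^L \;\pm\; v^L_\lambda\,\vf_\lambda^R,}
and because $\operatorname{supp}v^R_\lambda$ and $\operatorname{supp}v^L_\lambda$ are each at distance $\gtrsim D^{3/2}\gg1$ from the opposite well, \cref{prop:Gauss-bound-phi_lambda} gives $\|v^R_\lambda\vf_\lambda^L\|+\|v^L_\lambda\vf_\lambda^R\|\le e^{-c\lambda}$ for $\lambda$ large. Hence the Rayleigh quotients satisfy $\langle\psi_\pm,H_\lambda\psi_\pm\rangle/\|\psi_\pm\|^2=e_\lambda+\Ord{e^{-c\lambda}}$, and by min--max $E_{{\rm even},\lambda},E_{{\rm odd},\lambda}\le e_\lambda+\Ord{e^{-c\lambda}}$. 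Together with $E_{{\rm even},\lambda},E_{{\rm odd},\lambda}\ge\inf\sigma(H_\lambda)=E_{0,\lambda}\ge e_\lambda-\Ord{e^{-c\lambda}}$, this places $E_{{\rm even},\lambda},E_{{\rm odd},\lambda}\in I$ for $\lambda$ large.

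To finish: since $L^2_{\rm even}$ reduces $H_\lambda$ and $\sigma(H_\lambda)\cap I$ is finite, $E_{{\rm even},\lambda}=\inf\sigma(H_\lambda|_{L^2_{\rm even}})$ is in fact an eigenvalue of $H_\lambda$ lying in $I$, i.e.\ $E_{{\rm even},\lambda}\in\{E_{0,\lambda},E_{1,\lambda}\}$, with an even eigenfunction $u\in\calV$; likewise $E_{{\rm odd},\lambda}\in\{E_{0,\lambda},E_{1,\lambda}\}$ with an odd eigenfunction $w\in\calV$. As $u$ and $w$ have opposite parity they are orthogonal, hence linearly independent, so $\{u,w\}$ is a basis of the two-dimensional space $\calV$ consisting of $H_\lambda$-eigenvectors with eigenvalues $E_{{\rm even},\lambda}$ and $E_{{\rm odd},\lambda}$. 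But the eigenvalues of $H_\lambda|_\calV$, listed with multiplicity, are by construction $E_{0,\lambda},E_{1,\lambda}$. Therefore $\{E_{0,\lambda},E_{1,\lambda}\}=\{E_{{\rm even},\lambda},E_{{\rm odd},\lambda}\}$, which is the assertion. The argument never uses $E_{0,\lambda}\ne E_{1,\lambda}$; in the coincident case $\calV$ is a genuine two-dimensional eigenspace spanned by one even and one odd eigenfunction, exactly as in part 4 of \cref{thm:main theorem}.

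The only real work is the quantitative trial-function step: extracting the exponential smallness of $\|v^R_\lambda\vf_\lambda^L\|$, $\|v^L_\lambda\vf_\lambda^R\|$ and of $\langle\vf_\lambda^L,\vf_\lambda^R\rangle$ from \cref{prop:Gauss-bound-phi_lambda} (using that the inter-well distance $D^{3/2}$ dominates the sophon displacement $D$) and checking the resulting error beats $\tfrac14 c_{\rm gap}$; everything else is bookkeeping with the $\calI$-invariant low-lying spectral subspace. The one point deserving care is to phrase the last step so that it survives the degenerate case $E_{0,\lambda}=E_{1,\lambda}$, which is why I argue via the two-dimensionality of $\calV$ and orthogonality of opposite-parity eigenfunctions rather than via ``$E_{{\rm even},\lambda}\ne E_{{\rm odd},\lambda}$''.
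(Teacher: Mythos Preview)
Your proof is correct and follows essentially the same approach as the paper: both use the parity-adapted trial states $\psi_\pm=\vf_\lambda^L\pm\vf_\lambda^R$, the Gaussian decay bounds to show their Rayleigh quotients are $e_\lambda+\Ord{e^{-c\lambda}}$, and the $\calI$-invariant splitting of the low-lying spectral subspace. The only cosmetic difference is in how uniqueness in each sector is obtained: the paper shows directly that any $\psi\in L^2_{\rm even}$ orthogonal to $\vf_{\rm even}$ (hence to both $\vf^L$ and $\vf^R$) satisfies the energy estimate $\langle\psi,(H_\lambda-e_\lambda)\psi\rangle\ge\tfrac12 c_{\rm gap}\|\psi\|^2$, so each sector has \emph{at most} one eigenvalue in the window; you instead invoke \cref{prop:H2well-low_energy} for the global count of two and then supply one eigenvalue per sector via min--max and dimension counting. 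Both routes rest on the same underlying energy estimate (which is what drives \cref{prop:H2well-low_energy}), so this is a difference of packaging rather than substance.
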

\begin{proof}[Sketch of the proof of \cref{prop:even-odd-eigs}] Suppose $\psi\in L^2_{\rm even}$ is orthogonal to $\vf_{\rm even}\equiv \vf^L+\vf^R\in L^2_{\rm even}$.  Then, $\psi$ is orthogonal to both $\vf^L$ and $\vf^R$, since the even function $\psi$ is orthogonal to the odd function $\vf^L-\vf^R$. An energy estimate (see, for example, \cite[Theorem 6.3]{fefferman2025lowerboundsquantumtunneling}) then implies, for $\lambda\ge\lambda_{\rm min}$, that 
\begin{equation}\label{eq:energy-estimate}
    \ip{\psi}{\br{H_\lambda-e_\lambda\Id} \psi} \ge \frac{1}{2}c_{\rm gap}\norm{\psi}^2\, .
\end{equation}
 Hence, the spectrum of $H_\lambda\Big|_{L^2_{\rm even}}$ below $e_\lambda+\frac12 c_{\rm gap}$ consists, at most, of a single eigenvalue. 
On the other hand, one easily verifies that 
\[
\|\vf_{\rm even}\|^2= 2+\mathcal{O}(e^{-c\lambda})\quad {\rm and}\quad 
\ip{\vf_{\rm even}}{\br{H_\lambda-e_\lambda\Id} \vf_{\rm even}} = \mathcal{O}(e^{-c\lambda}),
\]
and  so $H_\lambda\Big|_{L^2_{\rm even}}$ has spectrum below $e_\lambda +  Ce^{-c\lambda} \le e_\lambda+\frac12 c_{\rm gap}$. Therefore, the spectrum of $H_\lambda\Big|_{L^2_{\rm even}}$ below $e_\lambda+\frac12 c_{\rm gap}$ consists of a single eigenvalue $E_{{\rm even},\lambda}$.  A similar argument, with $\vf_{\rm odd}\equiv \vf_L-\vf_R$ in place of $\vf_{\rm even}$, shows that the spectrum of  $H_\lambda\Big|_{L^2_{\rm odd}}$ below 
$e_\lambda+\frac12 c_{\rm gap}$ consists of a single eigenvalue $E_{{\rm odd},\lambda}$. Since $L^2=L^2_{\rm even}\oplus L^2_{\rm odd}$, the 
assertion of \cref{prop:even-odd-eigs} follows.
\end{proof}
%

% \begin{defn}[Splitting]
% We introduce two measures of low energy splitting:
%  \begin{enumerate}
%    \item Low energy splitting distance: 
%    \[ \Delta(\lambda) \equiv E_{1,\lambda}-E_{0,\lambda}.\]
%    By definition of $E_{0,\lambda}$ and $E_{1,\lambda}$, we always have $\Delta(\lambda)\ge0$.
%    \item Signed low energy splitting distance:
%    \eql{\label{eq:signed-split}
%    \mathfrak{S}(\lambda)\equiv \ee^{(-1)}_\lambda-E^{(+1)}_\lambda\equiv
%    E_{{\rm odd},\lambda}-E_{{\rm even},\lambda}\ .
%    }
%  \end{enumerate}
% \end{defn}

\begin{rem}[Magnetic vs. Non-magnetic splitting]
 The sign of $\mathfrak{S}(\lambda)$ determines whether the ground state is even (positive sign) or odd (negative sign). 
  In non-magnetic systems, we always have: 
\[ \Delta(\lambda)= E_{{\rm odd},\lambda}-E_{{\rm even},\lambda} \equiv \mathfrak{S}(\lambda)>0 .\]
In this section we display examples in magnetic systems, where $\Delta(\lambda)=0$, and examples where 
 $\Delta(\lambda)>0$ but $\mathfrak{S}(\lambda)<0$.
\end{rem}

The following is a direct consequence of \Cref{prop:even-odd-eigs}: 
\begin{prop}\label{prop:split}
Both $\Delta(\lambda)$ and $\mathfrak{S}(\lambda)$ are determined by the ground state energies, $E_{{\rm even},\lambda}$ and $E_{{\rm odd},\lambda}$, for the spaces $L^2_{\rm even}$ and $L^2_{\rm odd}$.
 In particular, we have
\[ \Delta(\lambda) = \Big|E_{{\rm odd},\lambda} - E_{{\rm even},\lambda}\Big| = \big|\mathfrak{S}(\lambda)\big|. \]
\end{prop}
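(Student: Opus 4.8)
\section*{Proof proposal for \Cref{prop:split}}

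The plan is to deduce both assertions directly from the set equality
$\{E_{0,\lambda},E_{1,\lambda}\}=\{E_{{\rm even},\lambda},E_{{\rm odd},\lambda}\}$ supplied by \Cref{prop:even-odd-eigs} (valid for $\lambda\ge\lambda_{\min}$ sufficiently large), together with the ordering convention $E_{0,\lambda}\le E_{1,\lambda}$ and the definitions $\Delta(\lambda)=E_{1,\lambda}-E_{0,\lambda}$ and $\mathfrak{S}(\lambda)=E_{{\rm odd},\lambda}-E_{{\rm even},\lambda}$. First I would note that, by \Cref{prop:H2well-low_energy} combined with \Cref{prop:even-odd-eigs}, the two sector ground states $E_{{\rm even},\lambda}$ and $E_{{\rm odd},\lambda}$ are genuinely the two lowest eigenvalues of $H_\lambda$ on all of $L^2(\RR^2)$, so no eigenvalue below them is missed and the set equality above is meaningful.

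Next I would split into the two cases dictated by the ordering of $E_{{\rm even},\lambda}$ and $E_{{\rm odd},\lambda}$. If $E_{{\rm even},\lambda}\le E_{{\rm odd},\lambda}$, then since $\{E_{0,\lambda},E_{1,\lambda}\}$ coincides with this pair and $E_{0,\lambda}\le E_{1,\lambda}$, we must have $E_{0,\lambda}=E_{{\rm even},\lambda}$ and $E_{1,\lambda}=E_{{\rm odd},\lambda}$, hence $\Delta(\lambda)=E_{{\rm odd},\lambda}-E_{{\rm even},\lambda}=\mathfrak{S}(\lambda)\ge 0$. In the complementary case $E_{{\rm odd},\lambda}<E_{{\rm even},\lambda}$ the roles reverse: $E_{0,\lambda}=E_{{\rm odd},\lambda}$, $E_{1,\lambda}=E_{{\rm even},\lambda}$, so $\Delta(\lambda)=E_{{\rm even},\lambda}-E_{{\rm odd},\lambda}=-\mathfrak{S}(\lambda)\ge 0$. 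In both cases $\Delta(\lambda)=|E_{{\rm odd},\lambda}-E_{{\rm even},\lambda}|=|\mathfrak{S}(\lambda)|$, which is the displayed identity; the degenerate subcase $E_{{\rm even},\lambda}=E_{{\rm odd},\lambda}$ is subsumed and gives $\Delta(\lambda)=0=|\mathfrak{S}(\lambda)|$.

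Finally, the "determined by" clause is then immediate: $\mathfrak{S}(\lambda)=E_{{\rm odd},\lambda}-E_{{\rm even},\lambda}$ by definition and $\Delta(\lambda)=|\mathfrak{S}(\lambda)|$ by the above, so both quantities are explicit functions of the ordered pair $(E_{{\rm even},\lambda},E_{{\rm odd},\lambda})$ alone. There is no real obstacle in this argument; the only points requiring care are that \Cref{prop:even-odd-eigs} is invoked in its regime of validity (so \Cref{prop:split} inherits the hypothesis $\lambda\ge\lambda_{\min}$), and that one correctly matches the ordered pair $E_{0,\lambda}\le E_{1,\lambda}$ to the unordered pair of sector ground-state energies — which is exactly the bookkeeping done in the case split above.
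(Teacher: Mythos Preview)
Your proposal is correct and matches the paper's approach: the paper simply states that \Cref{prop:split} is a direct consequence of \Cref{prop:even-odd-eigs}, and your case-split argument is precisely the routine verification of that consequence.
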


\subsection{Roadmap, Step 8: Calculation of $E_{{\rm even},\lambda}$ and $E_{{\rm odd},\lambda}$ via LCAO}

  For large $\lambda$, we expect the two dimensional subspace corresponding to the pair of eigenvalues $\{E_{0,\lambda},E_{1,\lambda}\}$
 to be approximately spanned by even $(\sigma=+1)$ and odd $(\sigma=-1) $ linear combinations of $\vf_\lambda^L$ and $\vf_\lambda^R$ (``atomic orbitals''), the ground states of the right and left single-well Hamiltonians, $ h_\lambda^L\vf_\lambda^L = e_\lambda \vf_\lambda^L$ and $ h_\lambda^R\vf_\lambda^R = e_\lambda 
 \vf_R$ (see \Cref{eq:left-well-evp} and \Cref{eq:right-well-evp}):
 \[ \vf_\lambda^L+\vf_\lambda^R \in L^2_{\rm even}= L^2_{+1}\quad\ {\rm and}\quad \vf_\lambda^L-\vf_\lambda^R\in L^2_{\rm odd}=L^2_{-1}. \]
 Note, by \cref{prop:gs-bounds}, $\|\vf_\lambda^L\pm\vf_\lambda^R\|^2\approx 2$.

 For $\sigma=\pm1$, we seek asymptotic expressions for the ground state eigenpairs of $H_\lambda $ acting in the space $L^2_\sigma$. These are non-trivial least energy solutions $(E,\Psi)$ of 
\eql{ H_\lambda \Psi = E \Psi,\quad \Psi\in L^2_\sigma\ ,\ \sigma=\pm1. 
 \label{eq:evp-sigma}}
Motivated by the above discussion, we introduce the $L^2-$ even ($\sigma=+1$) and odd ($\sigma=-1$) states:
\[
\psi_\sigma = 
\frac{\vf_\lambda^L+\sigma \vf_\lambda^R}{\|\vf_\lambda^L+\sigma \vf_\lambda^R\|} \in\ L^2_\sigma
\]
and define the orthogonal projections
\[ \Pi_\sigma f = \left\langle \psi_\sigma, f\right\rangle \psi_\sigma
\quad {\rm and}\quad \Pi_\sigma^\perp = f-\Pi_\sigma f, \]
which project $L^2_\sigma$ to $\CC\psi_\sigma$ and $L^2_{\sigma,\perp}\equiv \Pi^\perp_\sigma L^2_\sigma $, respectively.

Viewing $L^2_\sigma$ as $L^2_{\sigma,\perp}\oplus \CC \psi_\sigma$, we seek solutions of \cref{eq:evp-sigma} of the form:
\eql{\label{eq:sigma-ansatz}
\Psi = \alpha\psi_\sigma + \psi_\sigma^\perp,\quad \Pi^\perp_\sigma\psi_\sigma^\perp=\psi_\sigma^\perp,\quad {\rm and}\quad \alpha\in\CC.
}
Substitution of \cref{eq:sigma-ansatz} into \cref{eq:evp-sigma} and projecting with $\Pi_\sigma$ and $\Pi^\perp_\sigma$, yields the coupled system for
 $(\psi^\perp_\sigma,\alpha)^\top\in L^2_{\sigma,\perp} \oplus \CC$, which is equivalent to \Cref{eq:evp-sigma}:
\eql{\label{eq:psiperp-alpha-sys}
  \begin{pmatrix}\ 
  \Pi^\perp_\sigma \left( H_\lambda -E \right) \Pi^\perp_\sigma & B_\sigma \\
  {\ } & {\ }\\
  B_\sigma^* & \left\langle \psi_\sigma, H_\lambda \psi_\sigma\right\rangle - E 
  \ \end{pmatrix}
  \begin{pmatrix}
    \psi_\sigma^\perp \\ {\ }\\ \alpha
  \end{pmatrix}
  = \begin{pmatrix}
    0 \\ {\ }\\ 0
  \end{pmatrix},
}
where
\begin{subequations}
\label{eq:B-sigma_def}
\begin{align}
 B_\sigma &= \Pi_\sigma^\perp\left( H_\lambda \ \psi_\sigma\right ):\CC\to L^2_{\sigma,\perp} \\
 B_\sigma^* &= \left\langle\psi_\sigma, H_\lambda \ \Pi_\sigma^\perp\ \cdot\right\rangle: L^2_{\sigma,\perp}\to \CC.
\end{align}
\end{subequations}

\noindent The following proposition is a consequence of \cref{eq:energy-estimate} in the proof of \cref{prop:even-odd-eigs}.

\begin{prop}[Spectral gap]\label{prop:spectralgap}
Fix $\varepsilon_0>0$ sufficiently small. 
For any fixed $\lambda\ge\lambda_{\rm min}$ sufficiently large, the following holds for complex $E$ and such that $|E-e_0^\lambda|<\varepsilon_0 c_{\rm gap}$ (see \Cref{assume:gap}):
The operator \[ E\mapsto \Big[ \Pi^\perp_\sigma \left( H_\lambda -E \right) \Pi^\perp_\sigma \Big]^{-1}\] is a well-defined and analytic $\mathcal{B}(\Pi^\perp_\sigma L^2(\RR^2))$-valued function. Moreover, the norm of the operator and its derivatives with respect to $E$ are bounded uniformly with $\lambda$.

\end{prop}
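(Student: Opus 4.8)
My plan is to deduce the entire statement, essentially for free, from the energy estimate \eqref{eq:energy-estimate} already established inside the proof of \Cref{prop:even-odd-eigs}: for $\lambda\ge\lambda_{\rm min}$ and every $\psi$ in the $\sigma$-sector with $\psi\perp\psi_\sigma$ (equivalently $\psi\perp(\vf_\lambda^L+\sigma\vf_\lambda^R)$) one has $\ip{\psi}{(H_\lambda-e_\lambda\Id)\psi}\ge\tfrac12 c_{\rm gap}\norm{\psi}^2$, and the analogous bound holds in the odd sector by the same argument. The point is to promote this quadratic-form inequality to a spectral statement about the self-adjoint operator $A_\sigma:=\Pi^\perp_\sigma(H_\lambda-e_\lambda\Id)\Pi^\perp_\sigma$ acting on the Hilbert space $L^2_{\sigma,\perp}=\Pi^\perp_\sigma L^2_\sigma$ (which is what $\Pi^\perp_\sigma L^2(\RR^2)$ in the statement refers to); then invertibility, the uniform bounds, and analyticity all come out of the spectral theorem.

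First I would construct $A_\sigma$ through quadratic forms. Since $V_\lambda$ is bounded for fixed $\lambda$, the form domain of $H_\lambda$ is $Q(\calP_\lambda^2)$ and $\psi\mapsto\ip{\psi}{(H_\lambda-e_\lambda\Id)\psi}$ is a closed, bounded-below form there; moreover $\psi_\sigma\in Q(\calP_\lambda^2)$, because $\vf_\lambda^L,\vf_\lambda^R\in\operatorname{dom}(\calP_\lambda^2)$ (they solve eigenvalue equations with bounded potentials) and are smooth with Gaussian decay by \Cref{prop:Gauss-bound-phi_lambda}. Hence $L^2_{\sigma,\perp}\cap Q(\calP_\lambda^2)$ is dense in $L^2_{\sigma,\perp}$, the restriction of the form to it is again closed, and by \eqref{eq:energy-estimate} it is bounded below by $\tfrac12 c_{\rm gap}$. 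The self-adjoint operator it generates is $A_\sigma$, so $\operatorname{spec}(A_\sigma)\subset[\tfrac12 c_{\rm gap},\infty)$, and a one-line form computation (using $\Pi^\perp_\sigma\psi=\psi$ for $\psi\in L^2_{\sigma,\perp}$) identifies $\Pi^\perp_\sigma(H_\lambda-E\Id)\Pi^\perp_\sigma$ with $A_\sigma-(E-e_\lambda)\Id$ on $L^2_{\sigma,\perp}$.

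Given this, the rest is routine. I fix $\varepsilon_0<\tfrac14$; since the reference point $e_0^\lambda$ differs from $e_\lambda$ by $\calO(\tau)=\calO(e^{-c\lambda})$, for $\lambda$ large and $|E-e_0^\lambda|<\varepsilon_0 c_{\rm gap}$ one has $\operatorname{Re}(E-e_\lambda)\le|E-e_\lambda|<2\varepsilon_0 c_{\rm gap}<\tfrac12 c_{\rm gap}$, hence $\operatorname{dist}(E-e_\lambda,\operatorname{spec}(A_\sigma))\ge\tfrac12 c_{\rm gap}-|E-e_\lambda|\ge\kappa$, where $\kappa:=(\tfrac12-2\varepsilon_0)c_{\rm gap}>0$. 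Thus $\Pi^\perp_\sigma(H_\lambda-E\Id)\Pi^\perp_\sigma=A_\sigma-(E-e_\lambda)\Id$ is boundedly invertible with operator norm $\le\kappa^{-1}$, a bound depending only on $c_{\rm gap}$ and $\varepsilon_0$, not on $\lambda$. Analyticity is automatic: $E\mapsto(A_\sigma-(E-e_\lambda)\Id)^{-1}$ is the resolvent of a fixed self-adjoint operator precomposed with an affine map, so it is holomorphic on the open set where it is defined, which contains the disk $|E-e_0^\lambda|<\varepsilon_0 c_{\rm gap}$; and the resolvent identity gives $\partial_E^n[\Pi^\perp_\sigma(H_\lambda-E\Id)\Pi^\perp_\sigma]^{-1}=n!\,(A_\sigma-(E-e_\lambda)\Id)^{-(n+1)}$, whence $\norm{\partial_E^n[\Pi^\perp_\sigma(H_\lambda-E\Id)\Pi^\perp_\sigma]^{-1}}\le n!\,\kappa^{-(n+1)}$, again uniformly in $\lambda$.

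The only step that needs genuine care is the form-theoretic bookkeeping in the second paragraph --- checking that the compression of the form to $L^2_{\sigma,\perp}$ is closed with dense domain, and that the operator it produces is exactly $\Pi^\perp_\sigma(H_\lambda-e_\lambda\Id)\Pi^\perp_\sigma$ in the sense used in the Schur-complement system \eqref{eq:psiperp-alpha-sys} --- but this is standard once one uses that $V_\lambda$ is bounded and $\psi_\sigma$ lies in the form domain. All the quantitative content of the proposition is already carried by \eqref{eq:energy-estimate}, so I do not anticipate a real obstacle.
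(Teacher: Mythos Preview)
Your argument is correct and is precisely the approach the paper has in mind: immediately preceding the proposition the authors write that it ``is a consequence of \cref{eq:energy-estimate} in the proof of \cref{prop:even-odd-eigs},'' and give no further details. You have simply supplied the standard form-theoretic and resolvent bookkeeping that turns the quadratic-form lower bound into the stated invertibility, analyticity, and uniform bounds.
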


It follows that we can solve the first equation in \Cref{eq:psiperp-alpha-sys} for $\psi^\perp_\sigma$ in terms of $\alpha$:
\eql{
\psi_\sigma^\perp = 
 - \Big[ \Pi^\perp_\sigma \left( H_\lambda -E \right) \Pi^\perp_\sigma \Big]^{-1} B_\sigma\ \alpha
\label{eq:psi-perp-eqn}
}
%and estimate the norm of the solution, $\psi_\sigma^\perp$.
Substituting \cref{eq:psi-perp-eqn} into the second equation of 
\Cref{eq:psiperp-alpha-sys}, we find 
that for $\lambda\ge\lambda_{\rm miin}$ a non-trivial solution exists if and only if $E$ satisfies the scalar nonlinear equation:

\eq{
E -\left\langle \psi_\sigma, H_\lambda \psi_\sigma\right\rangle
+ B_\sigma^* \Big[ \Pi^\perp_\sigma \left( H_\lambda -E \right) \Pi^\perp_\sigma \Big]^{-1} B_\sigma\ 
 =0.
\nonumber
}
Anticipating a solution $E\approx e_\lambda$, we use that $\psi_\sigma$ is normalized to write:
\eql{ E - e_\lambda -\left\langle \psi_\sigma, \Big( H_\lambda -e_\lambda\Big)\psi_\sigma\right\rangle
+ B_\sigma^* \Big[ \Pi^\perp_\sigma \left( H_\lambda -E \right) \Pi^\perp_\sigma \Big]^{-1} B_\sigma\ 
 =0,
\label{eq:Esigma-eqn}
}
or more explicitly:
\begin{align}
&E\ -\ e_\lambda - \left\langle \psi_\sigma, \big( H_\lambda -e_\lambda\big)\psi_\sigma\right\rangle
\nonumber\\
&\qquad + \Biggl\langle \left(\Pi_\sigma^\perp H_\lambda \psi_\sigma\right), \Big[ \Pi^\perp_\sigma \left( H_\lambda -E \right) \Pi^\perp_\sigma \Big]^{-1} \left(\Pi_\sigma^\perp H_\lambda \psi_\sigma\right)\Biggr\rangle 
 =0\ .
 \label{eq:Esigma-eqn-1}
\end{align}

In the next proposition, we assert that, for $\lambda\ge\lambda_{\rm min}$, the scalar nonlinear equation \cref{eq:Esigma-eqn} for $E$ can be solved uniquely in a neighbhorhood of energy $e_\lambda$ for $E= \ee^{(\sigma)}_{\lambda}$. Then setting $\alpha=1$ in
\cref{eq:psi-perp-eqn} and \cref{eq:sigma-ansatz}
we obtain eigenpairs
$\Psi_\sigma = \psi_\sigma + \psi_\sigma^\perp \in L^2_\sigma$, with corresponding eigenvalues: $E^{(\sigma)} _{\lambda} \approx e_\lambda^{\radial}+\sigma\rho(\lambda)$, where $\sigma=\pm1$ since the nonlinear term in $E$ is negligibly small.

\begin{prop}[Expansions of $E^{(\pm1)} _{\lambda}$ and the splitting, $\Delta(\lambda)$]
\label{prop:Epm1-expand-Delta-expand}
  Let $\lambda\ge\lambda_{\min}$. Then, for $\sigma=\pm1$, \eqref{eq:Esigma-eqn} has a unique solution, $E^{(\sigma)}_{\lambda}$
  in a neighborhood of $e_\lambda$, the ground state energy of $ h_\lambda$. Correspondingly, the double-well eigenvalue problem \cref{eq:evp-sigma} with $E=E^{(\sigma)}_{\lambda}$ has a one-dimensional subspace of solutions spanned by an eigenstate $\Psi_\sigma\in L^2_\sigma$.
 \begin{enumerate} 
 \item Expansions of $E^{(\sigma)}_{\lambda}$, $\sigma=\pm1$ :
 \begin{subequations}
 \label{eq:Esig-expand}
  \begin{align}
  &  \ee^{(\sigma)}_{\lambda} - e_\lambda
  = \sigma\ \rho(\lambda)
     \ +\ 
     {\rm ERR}^{(\sigma)}_\lambda,\\
 &    \Big| {\rm ERR}^{(\sigma)}_\lambda \Big| \lesssim\ C \exp\big(-\frac{\lambda}{2}(D^3-2D)^2+C\lambda D^{3/2} \big)
     \end{align}
     \end{subequations}
     where $\rho(\lambda)$ denotes the magnetic hopping coefficient \cref{eq:hopping_coefficient}.
     
      By  \cref{eq:Esig-expand} and \cref{eq:rho_cos}, we have an
     \item Expansion of the signed magnetic splitting, $\mathfrak{S}(\lambda)\equiv E_{{\rm odd},\lambda} - E_{{\rm even},\lambda} $ as follows:
     \eql{\label{eq:frakS-expanded}
     \mathfrak{S}(\lambda) &= -2\rho(\lambda) \ +\  {\rm ERR}_{\rm split}(\lambda) \nonumber\\
&=2\Bigg(\ 
  \cos\left( \frac{\lambda}{2} D^{3/2}\bar{y} \right) + {\rm ERR}_\rho(\lambda)\ \Bigg)\cdot {\rm FAC}_\rho(\lambda)\cdot 
  \ee^{ -\frac{\lambda}{4}(D^3-D^2) }
   + {\rm ERR}_{\rm split}(\lambda)\nonumber\\
     } where
\eq{
 \big|{\rm ERR}_{\rm split}(\lambda)\big| &= \Big|{\rm ERR}^{(-1)}_\lambda-{\rm ERR}^{(+1)}_\lambda\Big| \lesssim\
  C \exp\big(-\frac{\lambda}{2}(D^3-2D)^2+C\lambda D^{3/2} \big)\,.
}
%\item Expansion of $\Psi_\sigma$\dots?
     \end{enumerate}
\end{prop}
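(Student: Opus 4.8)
The plan is to view \cref{eq:Esigma-eqn-1} as a scalar fixed-point equation for $E$, solve it by a contraction argument, extract the expansion of $E^{(\sigma)}_\lambda$ from the explicit form of its two nontrivial terms, and finally subtract the cases $\sigma=\pm1$ to obtain $\mathfrak{S}(\lambda)$. Rewrite \cref{eq:Esigma-eqn-1} as $E=G_\sigma(E)$, where
\[
G_\sigma(E):=e_\lambda+\ip{\psi_\sigma}{(H_\lambda-e_\lambda)\psi_\sigma}-\Bigl\langle\,\Pi_\sigma^\perp H_\lambda\psi_\sigma\,,\ \bigl[\Pi_\sigma^\perp(H_\lambda-E)\Pi_\sigma^\perp\bigr]^{-1}\Pi_\sigma^\perp H_\lambda\psi_\sigma\,\Bigr\rangle.
\]
By \cref{prop:spectralgap}, on $\abs{E-e_\lambda}<\varepsilon_0 c_{\rm gap}$ the resolvent $[\Pi_\sigma^\perp(H_\lambda-E)\Pi_\sigma^\perp]^{-1}$ is analytic with norm and $E$-derivative bounded uniformly in $\lambda$, so $\abs{G_\sigma'(E)}\le C\norm{\Pi_\sigma^\perp H_\lambda\psi_\sigma}^2$. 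Anticipating the bound $\norm{\Pi_\sigma^\perp H_\lambda\psi_\sigma}^2=\mathcal{O}(e^{-c\lambda})$ proved below, $G_\sigma$ maps a small disc $\{\abs{E-e_\lambda}\le r_\lambda\}$, with $r_\lambda$ exponentially small, into itself as a contraction, giving a unique fixed point $E^{(\sigma)}_\lambda$. Taking $\alpha=1$ in \cref{eq:psi-perp-eqn} and \cref{eq:sigma-ansatz} then produces a solution $\Psi_\sigma=\psi_\sigma+\psi_\sigma^\perp\in L^2_\sigma$ of \cref{eq:evp-sigma} at energy $E^{(\sigma)}_\lambda$; the eigenspace is one-dimensional because, by \cref{prop:even-odd-eigs} and \cref{prop:H2well-low_energy}, $H_\lambda|_{L^2_\sigma}$ has only one eigenvalue within $\tfrac12 c_{\rm gap}$ of $e_\lambda$.

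To expand $E^{(\sigma)}_\lambda$ I would evaluate the linear term with $N_\sigma:=\norm{\vf_\lambda^L+\sigma\vf_\lambda^R}^2=2+2\sigma\Re{\ip{\vf_\lambda^L}{\vf_\lambda^R}}$, the overlap being $\mathcal{O}(e^{-c\lambda})$ by the Gaussian decay of \cref{prop:Gauss-bound-phi_lambda} (cf.\ \cref{prop:gs-bounds}). Expanding $\ip{\vf_\lambda^L+\sigma\vf_\lambda^R}{(H_\lambda-e_\lambda)(\vf_\lambda^L+\sigma\vf_\lambda^R)}$ and using $(H_\lambda-e_\lambda)\vf_\lambda^L=v^R_\lambda\vf_\lambda^L$, $(H_\lambda-e_\lambda)\vf_\lambda^R=v^L_\lambda\vf_\lambda^R$ (\cref{eq:HphiLR}): the inversion symmetry $\vf_\lambda^R(x)=\vf_\lambda^L(-x)$, $v^R_\lambda(x)=v^L_\lambda(-x)$ (\cref{eq:phiRphiL}) forces the two off-diagonal terms both to equal $\rho(\lambda)$ (via \cref{prop:rho-real}) and the two diagonal terms both to equal the real number
\[
\mu:=\ip{\vf_\lambda^L}{v^R_\lambda\vf_\lambda^L}=\int_{\RR^2}\abs{\vf_\lambda(x)}^2\,v_\lambda(-x+2d)\,\dif{x}.
\]
Hence $\ip{\psi_\sigma}{(H_\lambda-e_\lambda)\psi_\sigma}=N_\sigma^{-1}(2\sigma\rho(\lambda)+2\mu)=\sigma\rho(\lambda)+\mu+\mathcal{O}\!\bigl(e^{-c\lambda}(\abs{\rho}+\abs{\mu})\bigr)$, so that $E^{(\sigma)}_\lambda-e_\lambda=\sigma\rho(\lambda)+{\rm ERR}^{(\sigma)}_\lambda$ with ${\rm ERR}^{(\sigma)}_\lambda=\mu-Q_\sigma+\mathcal{O}(e^{-c\lambda}(\abs{\rho}+\abs{\mu}))$, where $Q_\sigma$ is the resolvent (second-order) term of $G_\sigma$ evaluated at $E=E^{(\sigma)}_\lambda$.

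It then remains to bound $\mu$ and $Q_\sigma$. For $\mu$: by \cref{prop:Gauss-bound-phi_lambda}, $\abs{\vf_\lambda(x)}\le C\lambda^2 e^{-\frac{\lambda}{4}\dist(x,\supp v_\lambda)^2}$, while $x\mapsto v_\lambda(-x+2d)$ is supported within $\mathcal{O}(D)$ of $2d=(D^{3/2},0)$, whose distance to $\supp v_\lambda$ is $\gtrsim D^{3/2}-2D$; the squared distance in the exponent, together with the sophon amplitude $\tau$ and support volume $\mathcal{O}(\delta^2)$, yields the stated bound on $\abs{\mu}$ — here one uses the nearest-approach distances for the planet--planet, planet--sophon and sophon--sophon configurations already identified in \cref{sec:RM-Step5}. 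For $Q_\sigma$: since $\Pi_\sigma^\perp\psi_\sigma=0$ we have $\Pi_\sigma^\perp H_\lambda\psi_\sigma=\Pi_\sigma^\perp(H_\lambda-e_\lambda)\psi_\sigma$, hence $\norm{\Pi_\sigma^\perp H_\lambda\psi_\sigma}\le N_\sigma^{-1/2}\bigl(\norm{v^R_\lambda\vf_\lambda^L}+\norm{v^L_\lambda\vf_\lambda^R}\bigr)$, and $\norm{v^R_\lambda\vf_\lambda^L}^2=\int\abs{v^R_\lambda}^2\abs{\vf_\lambda^L}^2$ is controlled by the Gaussian tail of $\vf_\lambda$ across the separation of $\supp v^L_\lambda$ and $\supp v^R_\lambda$; with the uniform resolvent bound of \cref{prop:spectralgap} this gives $\abs{Q_\sigma}\le\norm{[\,\cdot\,]^{-1}}\,\norm{\Pi_\sigma^\perp H_\lambda\psi_\sigma}^2$ of the same order, while the $N_\sigma^{-1}$-corrections above carry an extra $\mathcal{O}(e^{-c\lambda})$ and are negligible. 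This proves the bound on ${\rm ERR}^{(\sigma)}_\lambda$; along the way one checks, using $D^{3/2}\gg D$ and the parameter hierarchy of \cref{sec:parameters}, that ${\rm ERR}^{(\sigma)}_\lambda=o(\abs{\rho(\lambda)})$, so the expansion is genuinely informative.

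Finally, for Part 2: by construction $E_{{\rm even},\lambda}=E^{(+1)}_\lambda$ and $E_{{\rm odd},\lambda}=E^{(-1)}_\lambda$ (\cref{prop:even-odd-eigs}), so subtracting the two expansions of Part 1 gives $\mathfrak{S}(\lambda)=-2\rho(\lambda)+\bigl({\rm ERR}^{(-1)}_\lambda-{\rm ERR}^{(+1)}_\lambda\bigr)$; inserting the cosine form \cref{eq:rho_cos} of $\rho(\lambda)$ from \cref{prop:hopping_leading+error} gives \cref{eq:frakS-expanded}, and $\abs{{\rm ERR}_{\rm split}(\lambda)}\le\abs{{\rm ERR}^{(-1)}_\lambda}+\abs{{\rm ERR}^{(+1)}_\lambda}$ supplies the stated bound (in fact the $\mu$-contributions cancel in the difference, so a sharper bound is available, but this is not needed). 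I expect the main obstacle to be precisely the pair of geometric Gaussian estimates for $\mu$ and $\norm{\Pi_\sigma^\perp H_\lambda\psi_\sigma}^2$: one must correctly pin down the relevant nearest-approach distances between the left well and the right-well potential, keep track of the $\tau$ and $\delta$ weights, and verify that the resulting quantities are dominated by the claimed error while staying well below $\abs{\rho(\lambda)}$ — which is exactly where the hierarchy $M\ll D^{1/2}$, $D^{3/2}\gg D$ and $\lambda$ large of \cref{sec:parameters} is used.
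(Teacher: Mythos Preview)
Your proposal is correct and follows essentially the same approach as the paper. The paper organizes the argument around four ``Claims'' (the Gaussian bound on $\|v^R_\lambda\vf_\lambda^L\|$, the expansion of $\langle\psi_\sigma,(H_\lambda-e_\lambda)\psi_\sigma\rangle$ extracting $\rho$ plus the diagonal term you call $\mu$, the uniform resolvent bound from \cref{prop:spectralgap}, and the resulting bound on $\|B_\sigma\|$), then rewrites \cref{eq:Esigma-eqn1} in the form $E-e_\lambda-\tfrac{2\sigma\rho}{\|\vf_\lambda^L+\sigma\vf_\lambda^R\|^2}+\mathfrak{E}_\lambda(E)=0$ with $\mathfrak{E}_\lambda$ analytic and exponentially small, and invokes Rouch\'e's theorem rather than a contraction; this is a cosmetic difference, and your observation that the $\mu$-contributions cancel in $\mathfrak{S}(\lambda)$ is a correct refinement the paper does not spell out.
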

\cref{prop:Epm1-expand-Delta-expand} is proved in \cref{sec:splitting-proof}. 
Given \cref{prop:Epm1-expand-Delta-expand} and \cref{thm:vanishing_hopping}, on the magnetic hopping coefficient $\rho(\lambda)$, we have:
\begin{thm}[Degeneracy of and symmetry class transitions of ground states]
 The vertical displacement $\bar{y}$ can be chosen so that either of the following statements (although not necessarily both simultaneously) hold:
 \begin{itemize}
\item[(a)] The low-energy splitting $\Delta(\lambda)=E_{1,\lambda}-E_{0,\lambda}$ vanishes. In this case, the tunneling time between wells is infinite; tunneling has been eliminated.
\item[(b)] The signed splitting $\mathfrak{S}(\lambda)=E_{{\rm odd},\lambda}-E_{{\rm even},\lambda}$ can be made positive or negative. 
 When $\mathfrak{S}(\lambda)$ is positive, the ground state of $H_\lambda$ is even and when $\mathfrak{S}(\lambda)$ is negative, 
the ground state of $H_\lambda$ is odd.
\end{itemize}
\end{thm}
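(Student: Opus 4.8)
The plan is to derive all four assertions from results already assembled: the signed-splitting expansion in \cref{prop:Epm1-expand-Delta-expand}, the sign-change statement \cref{thm:vanishing_hopping}, the identity $\Delta(\lambda)=|\mathfrak{S}(\lambda)|$ of \cref{prop:split}, and the even/odd decomposition \cref{prop:even-odd-eigs}. The only genuinely new ingredient needed is a continuity statement, namely that for each fixed $\lambda\ge\lambda_{\rm min}$ the map $\bar y\mapsto\mathfrak{S}(\lambda;\bar y)$ is continuous on $[0,1]$; this then powers an intermediate-value argument.

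First I would record the size comparison underlying \cref{eq:frakS-expanded}, which reads, for $\bar y\in[0,1]$,
\[
\mathfrak{S}(\lambda;\bar y)=2\Big(\cos\big(\tfrac{\lambda}{2}D^{3/2}\bar y\big)+{\rm ERR}_\rho\Big)\,{\rm FAC}_\rho\,\ee^{-\frac{\lambda}{4}(D^3-D^2)}+{\rm ERR}_{\rm split},
\]
with $|{\rm ERR}_\rho|\le C\ee^{-\frac{M}{4}\lambda D^{3/2}}$, with ${\rm FAC}_\rho\ge\ee^{-CM\lambda D^{3/2}}>0$, and with $|{\rm ERR}_{\rm split}|\lesssim\ee^{-\frac{\lambda}{2}(D^3-2D)^2+C\lambda D^{3/2}}$. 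Since $M$ is fixed first and $D\ge D_{\rm min}$ is then chosen large, the rate $(D^3-2D)^2\sim D^6$ dominates both $D^3$ and $MD^{3/2}$; hence for $D$, and then $\lambda$, large enough one has $|{\rm ERR}_\rho|\le\tfrac14$ and $|{\rm ERR}_{\rm split}|<\tfrac12\,{\rm FAC}_\rho\,\ee^{-\frac{\lambda}{4}(D^3-D^2)}$. Consequently $\mathfrak{S}(\lambda;\bar y)>0$ whenever $\cos(\tfrac{\lambda}{2}D^{3/2}\bar y)\ge\tfrac12$ and $\mathfrak{S}(\lambda;\bar y)<0$ whenever $\cos(\tfrac{\lambda}{2}D^{3/2}\bar y)\le-\tfrac12$. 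Because $\tfrac{\lambda}{2}D^{3/2}\gg\pi$, the phase $\tfrac{\lambda}{2}D^{3/2}\bar y$ sweeps many full periods as $\bar y$ runs over $[0,1]$, so points $\bar y_\pm\in[0,1]$ with $\pm\mathfrak{S}(\lambda;\bar y_\pm)>0$ exist (one may even take $\cos=\pm1$ exactly). Alternatively this is just a quantitative restatement of \cref{thm:vanishing_hopping} together with $\mathfrak{S}=-2\rho+{\rm ERR}_{\rm split}$, using that $|\rho|\gtrsim{\rm FAC}_\rho\,\ee^{-\frac{\lambda}{4}(D^3-D^2)}\gg|{\rm ERR}_{\rm split}|$ away from the zeros of the cosine.

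Next I would prove the continuity of $\bar y\mapsto\mathfrak{S}(\lambda;\bar y)$. The double-well potential $V_\lambda$ of the Roadmap Step~3 construction depends real-analytically on $\bar y\in[0,1]$, so $H_\lambda=\calP_\lambda^2+V_\lambda$ varies analytically (in the norm-resolvent sense) on the relevant low-energy spectral window. By \cref{prop:H2well-low_energy}, which holds for every $\bar y\in[0,1]$ with the uniform gap $\tfrac12 c_{\rm gap}$, the eigenvalues $E_{{\rm even},\lambda}$ and $E_{{\rm odd},\lambda}$ of $H_\lambda|_{L^2_{\rm even}}$ and $H_\lambda|_{L^2_{\rm odd}}$ are simple and isolated uniformly in $\bar y$; standard analytic perturbation theory then makes $\bar y\mapsto E_{{\rm even},\lambda}(\bar y)$ and $\bar y\mapsto E_{{\rm odd},\lambda}(\bar y)$ continuous (indeed real-analytic), hence $\bar y\mapsto\mathfrak{S}(\lambda;\bar y)=E_{{\rm odd},\lambda}(\bar y)-E_{{\rm even},\lambda}(\bar y)$ is continuous on $[0,1]$.

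Finally I would conclude. For (a): since $\mathfrak{S}(\lambda;\bar y_-)<0<\mathfrak{S}(\lambda;\bar y_+)$, the intermediate value theorem gives $\bar y^\star$ with $\mathfrak{S}(\lambda;\bar y^\star)=0$; by \cref{prop:split}, $\Delta(\lambda)=|\mathfrak{S}(\lambda;\bar y^\star)|=0$, so $E_{0,\lambda}=E_{1,\lambda}$, the low-energy eigenspace is two-dimensional (one even and one odd vector, by \cref{prop:even-odd-eigs}), and since the tunneling rate scales with $\Delta(\lambda)$ the tunneling time is infinite. For (b): at $\bar y=\bar y_+$ one has $\mathfrak{S}=E_{{\rm odd},\lambda}-E_{{\rm even},\lambda}>0$, so by \cref{prop:even-odd-eigs} $E_{0,\lambda}=E_{{\rm even},\lambda}<E_{1,\lambda}=E_{{\rm odd},\lambda}$ and the nondegenerate ground state lies in $L^2_{\rm even}$; at $\bar y=\bar y_-$ the inequality reverses and the ground state lies in $L^2_{\rm odd}$. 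The parenthetical ``not necessarily both'' is automatic, since $\Delta(\lambda)=0$ forces $\mathfrak{S}(\lambda)=0$ while $\mathfrak{S}(\lambda)\ne0$ forces $\Delta(\lambda)=|\mathfrak{S}(\lambda)|\ne0$. The main obstacle is the \emph{uniformity in $\bar y\in[0,1]$} of the estimates behind \cref{prop:hopping_leading+error} and \cref{prop:H2well-low_energy}: one must verify that the spectral gap and the hopping expansion hold with constants independent of $\bar y$, since that is exactly what legitimizes both the sign determination and the continuity-plus-IVT step; granting this uniformity, the theorem follows by assembling the pieces above.
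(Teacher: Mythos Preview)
Your proposal is correct and follows the same approach as the paper, which presents this theorem as an immediate corollary of \cref{prop:Epm1-expand-Delta-expand} and \cref{thm:vanishing_hopping} without writing out a proof. You supply more detail than the paper does---in particular the explicit size comparison showing $|{\rm ERR}_{\rm split}|$ is dominated by the main term, the continuity of $\bar y\mapsto\mathfrak{S}(\lambda;\bar y)$ via analytic perturbation theory, and the IVT step---all of which the paper leaves implicit; your care about uniformity in $\bar y$ is also well-placed, since \cref{prop:hopping_leading+error} is indeed stated for all $\bar y\in[0,1]$ with constants independent of $\bar y$.
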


\noindent {\bf NB:} We note that vanishing or change in sign of the signed splitting, $\mathfrak{S}(\lambda)$, does not necessarily occur at the precise parameter values for which the magnetic hopping coefficient, $\rho(\lambda)$, changes sign.

\section{Implementation of the roadmap of \Cref{sec:results-roadmap}}\label{sec:implement_the_roadmap}
%\subsection{Ground states}
%
% Section POE - Perturbation of eigenfunctions
%

We begin with a review of the double-well potential, $ V_\lambda$ given schematically in \Cref{fig:schematic_2WELL}, where $\nu_{\rm max}=4$ is the number of ``sophons'' surrounding each centrally located radially symmetric ``planet''.

\noindent {\bf Left potential well configuration:}\\
Left planet centered at: $- d =(-D^{3/2}/2,0)$\\
and surrounding left sophons centered at:
\begin{subequations}
\label{eq:left-centers}
\begin{align*} \zeta_1^L &= \zeta_1- d = (-D^{3/2}/2+D,\bar{y}),\quad 
\zeta_2^L = \zeta_2- d = (-D^{3/2}/2+D,-\bar{y})
\\
\zeta_4^L &=\zeta_4- d = (-D^{3/2}/2-D,\bar{y}),\quad 
\zeta_3^L=\zeta_3- d = (-D^{3/2}/2-D,-\bar{y})
\end{align*}
\end{subequations}

\noindent {\bf Right potential well configuration:}\\
Right planet centering is at: $ d =(D^{3/2}/2,0)$\\
and surrounding right sophons centered at the points:
\begin{align}
\label{eq:right-centers}
 \zeta_j^R = - \zeta_j^L,\quad j=1,2,3,4
 \end{align}

Our goal is to first prove an expansion of the hopping coefficient, $\rho(\lambda)$, (\Cref{prop:hopping_leading+error}) and an expansion of the low-lying eigenvalues of $ H_\lambda$ in terms of $\rho(\lambda)$ (\Cref{prop:even-odd-eigs}).

 \subsection{Proof of \Cref{prop:hopping_leading+error}; Expansion of the hopping coefficient}

 The hopping coefficient, $-\rho$, can be expanded in terms of planet / sophon interaction terms $\Omega(\nu,\nu^\prime)$; see \Cref{eq:rho-expand}, \Cref{eq:Omegas}. \Cref{prop:hopping_leading+error} asserts that for the planet-sophon configuration of 
\Cref{fig:schematic_2WELL}, with appropriately chosen parameters $\delta=\delta_{\lambda,M,D}$ and $\tau=\tau_{\lambda,M,D}$, the dominant contribution to the magnetic hopping, $\rho$, comes from 
 the expressions: $\Omega(1,0)$, $\Omega(0,1)$, $\Omega(2,0)$ and $\Omega(0,2)$, which are displayed in \Cref{eq:Omegas}. 
 %Recall further, by \Cref{prop:Omega-nu0-cc-Omega-0nu} that $\Omega(1,0)$ and $\Omega(2,0)$ are, respectively, complex conjugates of 
 % $\Omega(0,1)$ and %$\Omega(0,2)$.
 
 To prove \Cref{prop:hopping_leading+error} we proceed as follows. The single-well potential $ v_\lambda(x)$ has been taken to be a small perturbation of the radially symmetric single-well potential, $\lambda^2\vzero(x)$.
 Hence the ground state eigenpair $(\vf_\lambda, e_\lambda)$, associated with the potential $ v_\lambda$, is a small perturbation of the ground state eigenpair $(\vf^{\radial}_{\lambda}, e^{\radial}_{\lambda})$, for the radial potential $\lambda^2\vzero$; the estimates are presented in 
\cref{sec:compare-phi_and_phi0}.

 Hence, we first approximate all expressions $\Omega(\nu,\nu^\prime)$ by the expressions $\Omega^\radial(\nu,\nu^\prime)$, which are obtained from $\Omega(\nu,\nu^\prime)$ by replacing perturbed single-well eigenstates ($\vf_\lambda^L$ and $\vf_\lambda^R$) by their unperturbed radial variants 
 ($\vf^{\radial,L}$ and $\vf^{\radial,R}$):
 \begin{align*}
   \vf_\lambda^L(x)&\to \vf^{\radial,L}(x)=e^{i\frac{\lambda}{2}(x\wedge d )}\vf^{\radial}_{\lambda}(x+ d ),\\
   \vf_\lambda^R(x)&\to \vf^{\radial,R}(x)=e^{-i\frac{\lambda}{2}(x\wedge d )}\vf^{\radial}_{\lambda}(-x+ d ) = \ee^{-i\frac{\lambda}{2}(x\wedge d )}\vf^{\radial}_{\lambda}(x- d ),
 \end{align*}
 where the last equality holds by the assumption that $\vf^{\radial}_{\lambda}$ is radial.

Thus, in parallel with the expression for $-\rho(\lambda)$ in \Cref{eq:rho-expand}, we introduce an\\
\underline{\it approximate magnetic hopping}:
 \eql{\label{eq:rho-0-expand}
-\rho^{\radial}(\lambda)\ =\ \Omega^\radial(0,0) + \sum_{\nu=1}^{\nu_{\rm max}}\Big(\ \Omega^\radial(\nu,0)\ +\ \Omega^\radial(0,\nu)\ \Big)\ +\ \sum_{\nu,\nu^\prime=1}^{\nu_{\rm max}} \Omega^\radial(\nu,\nu^\prime),
}
and we therefore write
\[
-\rho(\lambda)\ =\ -\rho^{\radial}(\lambda)\ + \left( \rho^{\radial}(\lambda) - \rho(\lambda)\right).
\]
 Based on expansions of $\Omega^\radial(0,\nu)$, and also \eq{\Omega^\radial(\nu,0)=\overline{\Omega^\radial(0,\nu)}\qquad(\nu=1,2)\,,} and upper bounds on all other 
 $\Omega^\radial(\nu,\nu^\prime)$ terms, we shall choose $\tau_{\lambda,M,D}$ and $\delta_{\lambda,M,D}$ (see \cref{eq:delta&tau}) to satisfy upper and lower bound constraints consistent with our ordering hypotheses \Cref{eq:PS-PP} and \Cref{eq:PS-SS}, uniformly in $0\le \bar{y}\le 1$, for any fixed $D\ge D_{\rm max}$ and $\lambda>\lambda_{\rm max}$ sufficiently large. This specification of parameters: $D$, $\delta_{\lambda,M,D}$, $\tau_{\lambda,M,D}$ and $\bar{y}$ completely determines the double-well configuration.

We then prove in \Cref{eq:rho0-expression} that the large $\lambda$ asymptotic behavior of $-\rho^{\radial}(\lambda)$ is given by the expression in \Cref{eq:rho_cos}. Finally, in \Cref{sec:from_rho0_2_rho} we 
transition from radial atomic ground states to perturbed atomic ground states and show that the contribution to $-\rho(\lambda)$ from $\rho^{\radial}(\lambda) - \rho(\lambda)$ is negligible.
This will complete the proof of \Cref{prop:hopping_leading+error}.

 \subsubsection{The approximation $\Omega^\radial(0,\nu)=\Omega^\radial( P_R \leftrightarrow S_{L,\nu})$}\label{sec:Omega0nu0-expand}
 An important step in expanding $\rho$ in terms of dominant and subdominant contributions is the following : 
 \begin{prop}\label{prop:sophon0-planet0}
  Fix $\nu\in\{1,2,3,4\}$. Let $\Omega^\radial(0,\nu)=\Omega^\radial( P_R \leftrightarrow S_{L,\nu})$ denote the following expression, $\Omega^\radial(0,\nu)$, obtained from $\Omega(0,\nu)$ in \cref{sec:RM-Step5} by replacing the exact single-well states $\vf_\lambda^L$ and $\vf_\lambda^R$ by the approximations $\vf^{\radial,L}$ and $\vf^{\radial,R}$, respectively :
 \eql{  \Omega^\radial(0,\nu) \equiv \int_{\RR^2\times\RR^2} 
 \lambda^2 v^{\radial,R}(x) \overline{\vf^{\radial,R}(x)}\ \ee^{-i\frac{\lambda}2 x\wedge y}\ K(x-y)\ \tau_{\lambda,M,D}W^L_\nu(y) \vf^{\radial,L}(y)\ \dif{x} \dif{y}\ .
 \label{eq:Omega00nu}
 }
   Then,
  \begin{align} 
  \Omega^\radial(0,\nu)
= \left( 1 + {\rm ERR}_\nu \right) \ \Omega^\radial_{\rm main}(0,\nu), 
\label{eq:Om0-main-1} \end{align}
where
\eql{
 \Omega^\radial_{\rm main}(0,\nu) =
 -\ee^{-\ii\lambda\ d \wedge\zeta_\nu}\ \int_{|y|\le\delta}\tau_{\lambda,M,D}|W^0(y)| \vf^{\radial}(y+\zeta_\nu)\ 
  \tilde{\rho}_{0,\lambda}(2 d -\zeta_\nu-y)\ \dif{y},
\label{eq:Omega0main}
}
\eql{
\exp\left(-\frac{\lambda}{4}|2 d -\zeta_\nu|^2-C\lambda| d |\right)\ \le\ 
 -\tilde{\rho}_{0,\lambda}(2 d -\zeta_\nu-y)\ \le\
 \exp\left(-\frac{\lambda}{4}|2 d -\zeta_\nu|^2+C\lambda| d |\right), 
\label{eq:trho-upper-lower-1}}
and 
\[ \big|{\rm ERR}_\nu\big| \le \exp\left(-\frac{M}{3}\lambda D^{3/2}\right).\]\nc
 \end{prop}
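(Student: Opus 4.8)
\emph{Proof strategy.} The plan is to evaluate \Cref{eq:Omega00nu} as an iterated integral, performing first the inner ($y$-)integration — whose integrand is carried by the disc of radius $\delta$ about $\zeta_\nu-d$ — and then recognising the surviving $x$-integral as a resolvent of the magnetic Landau Hamiltonian applied to a compactly supported source.

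First I would substitute $y\rightsquigarrow\eta+\zeta_\nu-d$ (with $|\eta|\le\delta$; here $\eta$ is the variable called $y$ in \Cref{eq:Omega0main}), so that $W^L_\nu(y)$ becomes $W^0(\eta)=-|W^0(\eta)|$ (the latter by (S1)) and, since $\vf^{\radial,L}(x)=\ee^{\ii\frac\lambda2 x\wedge d}\vf^{\radial}_{\lambda}(x+d)$ and $d\wedge d=0$, $\vf^{\radial,L}(\eta+\zeta_\nu-d)=\ee^{\ii\frac\lambda2(\eta\wedge d+\zeta_\nu\wedge d)}\vf^{\radial}_{\lambda}(\eta+\zeta_\nu)$. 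Next, for each fixed $y$, I would collapse the $x$-integral $\int_{\RR^2}\lambda^2 v^{\radial,R}(x)\,\overline{\vf^{\radial,R}(x)}\,\ee^{-\ii\frac\lambda2 x\wedge y}K^\circ(x-y)\,\dif x$ (reading the kernel in \Cref{eq:Omega00nu}, consistently with the radial nature of $\Omega^\radial$, as $K^\circ$): by \Cref{eq:relation between Landau and SHO Green's functions}, $(x,y)\mapsto K^\circ(x-y)\ee^{-\ii\frac\lambda2 x\wedge y}$ is the (Hermitian) integral kernel of the self-adjoint operator $(\calP_\lambda^2-e^{\radial}_{\lambda}\Id)^{-1}$, while the radial single-well centred at $d$ obeys $\lambda^2 v^{\radial,R}\vf^{\radial,R}=-(\calP_\lambda^2-e^{\radial}_{\lambda}\Id)\vf^{\radial,R}$; feeding the eigenvalue relation into the $x$-integral and using Hermiticity of the kernel makes it equal to $-\overline{\vf^{\radial,R}(y)}$, exactly.

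It then remains to reassemble. Writing $\overline{\vf^{\radial,R}(y)}=\ee^{\ii\frac\lambda2 y\wedge d}\vf^{\radial}_{\lambda}(y-d)$ with $y=\eta+\zeta_\nu-d$, using $\vf^{\radial}_{\lambda}(\eta+\zeta_\nu-2d)=\vf^{\radial}_{\lambda}(2d-\zeta_\nu-\eta)$ (radiality) and $\zeta_\nu\wedge d=-d\wedge\zeta_\nu$, I expect the three magnetic phases (from $\vf^{\radial,L}$, from the kernel, and from $\overline{\vf^{\radial,R}}$) to telescope to $\ee^{-\ii\lambda\,d\wedge\zeta_\nu}\ee^{\ii\lambda\,\eta\wedge d}$, reproducing exactly the formula \Cref{eq:Omega0main} for $\Omega^\radial_{\rm main}(0,\nu)$ — with $\tilde\rho_{0,\lambda}$ identified as $s\mapsto-\vf^{\radial}_{\lambda}(s)$ — save for the residual oscillatory factor $\ee^{\ii\lambda\,\eta\wedge d}$. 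Since $|\ee^{\ii\lambda\,\eta\wedge d}-1|\le\lambda\delta|d|=\tfrac12\lambda D^{3/2}\ee^{-M\lambda D^{3/2}}$ on $|\eta|\le\delta$, and since the integrand of $\Omega^\radial_{\rm main}(0,\nu)$ has a single sign — $\tau_{\lambda,M,D}$, $|W^0|$ and $\vf^{\radial}_{\lambda}$ being $\ge0$, with the integral nonzero by (S3) and the strict positivity of $\vf^{\radial}_{\lambda}$ off $B_1(0)$ (\Cref{eq:phi0-K}) — replacing $\ee^{\ii\lambda\,\eta\wedge d}$ by $1$ costs only the \emph{relative} error $|{\rm ERR}_\nu|\le\tfrac12\lambda D^{3/2}\ee^{-M\lambda D^{3/2}}\le\exp(-\tfrac M3\lambda D^{3/2})$ once $\lambda\ge\lambda_{\rm min}$ and $D\ge D_{\rm min}$. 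Finally the two-sided estimate \Cref{eq:trho-upper-lower-1} for $\tilde\rho_{0,\lambda}=-\vf^{\radial}_{\lambda}$ reads off, with $s=2d-\zeta_\nu-y$ (so $\big||s|-|2d-\zeta_\nu|\big|\le\delta$, $|s|\le2|d|$ and $|s|>1$), from the Gaussian upper bound \Cref{eq:gs0-decay1a} (using $[\dist(s,\supp v^\radial)]^2\ge(|s|-1)^2=|s|^2-2|s|+1$ and absorbing the prefactor $C\lambda^2$, the term $\tfrac\lambda2|s|$, and the discrepancy between $|s|^2$ and $|2d-\zeta_\nu|^2$ into $\ee^{C\lambda|d|}$), and, for the lower bound, from $\vf^{\radial}_{\lambda}=\Gamma K^\circ$ on $\{|s|>1\}$ with $\Gamma\ge\ee^{-C\lambda}$ (\Cref{eq:phi0-K}) together with the lower bound on $K^\circ$ in \Cref{lem:bounds on the SHO Green's function}.

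I do not expect a conceptual obstruction; the delicate point is purely the bookkeeping — keeping the several magnetic phases straight so that they telescope to the single factor $\ee^{-\ii\lambda\,d\wedge\zeta_\nu}$ plus the harmless $\ee^{\ii\lambda\,\eta\wedge d}$, and tracking the $\ee^{\pm C\lambda|d|}$ slacks so that they stay inside the margins of \Cref{eq:trho-upper-lower-1}. One caveat: if one prefers to retain the kernel $K$ (at energy $e_\lambda$) rather than $K^\circ$ in \Cref{eq:Omega00nu}, then the $x$-integral equals $-\overline{\vf^{\radial,R}(y)}$ only up to a term proportional to $e_\lambda-e^{\radial}_{\lambda}=\Ord{\tau_{\lambda,M,D}}$, which is again swallowed by ${\rm ERR}_\nu$, as in the $\rho^\radial\leftrightarrow\rho$ transition of \Cref{sec:from_rho0_2_rho}.
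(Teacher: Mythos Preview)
Your argument is correct and takes a genuinely different route from the paper. The paper shifts \emph{both} integration variables ($\tilde x=x-d$, $\tilde y=y+d-\zeta_\nu$), splits the resulting double integral into $\Omega^\radial_{\rm main}+\Omega^\radial_{\rm error}$ according to the factor $(\ee^{-\ii\lambda d\wedge\tilde y}-1)$, and then identifies the surviving $\tilde x$--integral as the quantity $\tilde\rho_{0,\lambda}(2d-\zeta_\nu-\tilde y)$ defined by an explicit oscillatory integral; a separate ``Main overlap integral lemma'' (\cref{lem:main-overint}) is then proved, relating $\tilde\rho_{0,\lambda}$ to the radial double--well hopping $\rho_{0,\lambda}$ and importing the two--sided bounds from \cite{FSW_22_doi:10.1137/21M1429412}. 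Your resolvent collapse of the $x$--integral is cleaner: because $\lambda^2 v^{\radial,R}\vf^{\radial,R}=-(\calP_\lambda^2-e^\radial_\lambda\Id)\vf^{\radial,R}$ and the kernel is that of $(\calP_\lambda^2-e^\radial_\lambda\Id)^{-1}$, the $x$--integral is \emph{exactly} $-\overline{\vf^{\radial,R}(y)}$, which in one stroke gives the closed identity $\tilde\rho_{0,\lambda}=-\vf^\radial_\lambda$ (implicit but not stated in the paper) and makes the single--sign / relative--error argument for ${\rm ERR}_\nu$ immediate. What the paper's route buys is that the two--sided estimate \cref{eq:trho-upper-lower-1} is packaged once and for all in \cref{lem:main-overint} via the external input \cite[Thm.~1.7]{FSW_22_doi:10.1137/21M1429412}.

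One point to tighten: your lower bound for $-\tilde\rho_{0,\lambda}(s)=\vf^\radial_\lambda(s)$ via \cref{lem:bounds on the SHO Green's function} alone is not quite strong enough to recover \cref{eq:trho-upper-lower-1} uniformly in $\lambda$. With $z=e^\radial_\lambda\sim -c_2\lambda^2$ one has $1-z/\lambda\sim c_2\lambda$, so the lower bound in \cref{lem:bounds on the SHO Green's function} carries the factor $(1+\tfrac12\lambda|s|^2)^{-c\lambda}$, i.e.\ a correction of order $\exp\big(-c\lambda\log(\lambda|s|^2)\big)$. At the relevant scale $|s|\sim D^{3/2}$ this is absorbed by $\ee^{-C\lambda|d|}$ only when $\log\lambda\lesssim D^{3/2}$, which is not guaranteed since $\lambda$ may be taken arbitrarily large for fixed $D$. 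The fix is painless given your identification $\tilde\rho_{0,\lambda}=-\vf^\radial_\lambda$: cite directly the sharp Gaussian lower bound on $\vf^\radial_\lambda$ (equivalently on the radial hopping) from \cite[Thm.~1.7]{FSW_22_doi:10.1137/21M1429412}, exactly as the paper does inside \cref{lem:main-overint}. The upper bound via \cref{prop:gs0-decay} is fine as you wrote it.
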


\noindent We shall further expand the expression for $\Omega^\radial_{\rm main}(0,\nu)$, \Cref{eq:Omega0main}, in \Cref{prop:dominant-Omega0nu} below and thereby obtain a detailed expansion of $\Omega^\radial(0,\nu)$.\bigskip

 In preparation for the proof of \Cref{prop:sophon0-planet0}, we study the following key overlap integral, closely related to the hopping coefficient for double-wells with radially symmetric atomic potentials. 
 %\textcolor{red}{\tiny I'm concerned I may have $-\rho$ where I mean $\rho$ - be consistent with \cite{FSW_22_doi:10.1137/21M1429412}.}
%
 \begin{align}
   \tilde{\rho}_{0,\lambda}(Y)
   &\equiv \int \ee^{-i\frac{\lambda}{2}\tilde{x}\wedge Y} K(Y-\tilde{x})\ (\lambda^2v^{\radial})(\tilde{x})\ \vf^{\radial}(\tilde{x})\ d\tilde{x}\nonumber \\
  &= \int \ee^{-i\frac{\lambda}{2}x\wedge Y} K\Big(x-\frac{Y}{2}\Big)\ (\lambda^2v^{\radial})\Big(x+\frac{Y}{2}\Big)\ \vf^{\radial}\Big(x+\frac{Y}{2}\Big)\ \dif{x}
   \label{eq:tilde-rho}
 \end{align}

 By \Cref{eq:phi0-K}, $\vf^{\radial}(y)=\Gamma K(y)$ for $|y|>1$, where $\Gamma=\Gamma_\lambda$ denotes a positive constant satisfying $e^{-C\lambda}\le \Gamma\le \ee^{C\lambda}$; see \Cref{eq:Gamma-bounds}. If $|Y|>4$ and $x\in {\rm supp}(\vzero)\subset B_1(0)$ we have
 $|x-Y/2|>1$, and therefore
  $\vf^{\radial}(x-Y/2)=\Gamma K(x-Y/2)$. 
  Next, multiplying \Cref{eq:tilde-rho} by $\Gamma$ yields: 
 \eql{\label{eq:rho-trho} \rho_{0,\lambda}(Y) = \Gamma\ \tilde{\rho}_{0,\lambda}(Y) ,}
where $\rho_{0,\lambda}(Y)$ denotes the real-valued magnetic hopping coefficient for the double-well potential, comprised of radial atomic wells $\lambda^2 v^{\radial}(x+Y/2)$ and $\lambda^2 v^{\radial}(x-Y/2)$;
 see \Cref{prop:rho-real}. Hence, $\tilde{\rho}_{0,\lambda}(Y)=\Gamma^{-1}\rho_{0,\lambda}(Y) $ is real-valued.

\begin{lem}[Main overlap integral lemma]
\label{lem:main-overint}
The function $Y\mapsto \rho_{0,\lambda}(Y)$ is real-valued and negative. Moreover, for a fixed $C$ sufficiently large we have that if $|Y|>C$, then 
 \eql{\label{eq:rho-tilde-bound}
 \exp\left(-\frac{\lambda}{4}|Y|^2 -C\lambda|Y| \right)\ \le\ -\tilde{\rho}_{0,\lambda}(Y)\ \le\ 
 \exp\left(-\frac{\lambda}{4}|Y|^2+C\lambda|Y| \right)
 }
\end{lem}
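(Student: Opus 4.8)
The plan is to reduce both assertions to standard facts about the radial ground state $\vf^{\radial}_\lambda$, via the identity $\tilde\rho_{0,\lambda}(Y)=-\vf^{\radial}_\lambda(Y)$. Reality is already in hand: $\rho_{0,\lambda}(Y)$ is the magnetic hopping coefficient of the double well with atomic potentials $\lambda^2\vzero(\cdot+Y/2)$ and $\lambda^2\vzero(\cdot-Y/2)$, whose sum is inversion symmetric because $\vzero$ is radial, so \Cref{prop:rho-real}(1) gives $\rho_{0,\lambda}(Y)\in\RR$, and hence $\tilde\rho_{0,\lambda}(Y)=\Gamma^{-1}\rho_{0,\lambda}(Y)\in\RR$ by \Cref{eq:rho-trho}. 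What is genuinely at stake is the sign and the size; note that negativity is special to the radial single well (for general single wells the magnetic hopping can take either sign) and is the counterpart here of the lower bounds of \cite{FSW_22_doi:10.1137/21M1429412}.

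First I would prove $\tilde\rho_{0,\lambda}(Y)=-\vf^{\radial}_\lambda(Y)$ for all $Y\in\RR^2$. The radial eigenvalue equation $h^{\radial}_\lambda\vf^{\radial}_\lambda=e^{\radial}_\lambda\vf^{\radial}_\lambda$ reads $(\calP_\lambda^2-e^{\radial}_\lambda\Id)\vf^{\radial}_\lambda=-\lambda^2\vzero\,\vf^{\radial}_\lambda$; applying the resolvent $(\calP_\lambda^2-e^{\radial}_\lambda\Id)^{-1}$, whose integral kernel is $(x,y)\mapsto\ee^{-\ii\frac\lambda2 x\wedge y}K^{\radial}(x-y)$ by the same Landau/symmetric-gauge intertwining used for the perturbed well in the derivation of \Cref{eq:phi_L_rep}, one obtains
\[
\vf^{\radial}_\lambda(Y)=-\int_{\RR^2}\ee^{-\ii\frac\lambda2 Y\wedge y}\,K^{\radial}(Y-y)\,(\lambda^2\vzero)(y)\,\vf^{\radial}_\lambda(y)\,\dif y .
\]
Using $Y\wedge y=-y\wedge Y$, the right-hand side equals $-\overline{\tilde\rho_{0,\lambda}(Y)}$ by the definition \Cref{eq:tilde-rho} (only the phase is not manifestly real), and since $\tilde\rho_{0,\lambda}(Y)$ is real we conclude $\tilde\rho_{0,\lambda}(Y)=-\vf^{\radial}_\lambda(Y)$. (If \Cref{eq:tilde-rho} is read literally with the perturbed kernel $K$ rather than $K^{\radial}$, I would absorb the difference with the second resolvent identity: it contributes the extra term $(e_\lambda-e^{\radial}_\lambda)\,(\calP_\lambda^2-e_\lambda\Id)^{-1}(\calP_\lambda^2-e^{\radial}_\lambda\Id)^{-1}(\lambda^2\vzero\vf^{\radial}_\lambda)$ at $Y$, which, since $e_\lambda-e^{\radial}_\lambda=\Ord{\tau}$ is super-exponentially small and the remaining factor is bounded at $Y$ by $\ee^{-\frac\lambda4|Y|^2+C\lambda|Y|}$ thanks to \Cref{prop:gs0-decay} and \Cref{lem:bounds on the SHO Green's function}, is negligible next to $\vf^{\radial}_\lambda(Y)$ and merely relabels $C$.)

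Granting the identity, the rest is short. Negativity: $\vf^{\radial}_\lambda>0$ — it is the nonnegative ground state of \Cref{assume:radial-phi0}, positive outside $B_1(0)$ since there $\vf^{\radial}_\lambda=\Gamma K^{\radial}$ with $\Gamma>0$ by \Cref{eq:phi0-K}, and positive inside by the strong maximum principle — so $\tilde\rho_{0,\lambda}(Y)=-\vf^{\radial}_\lambda(Y)<0$ and, as $\Gamma>0$, also $\rho_{0,\lambda}(Y)=\Gamma\tilde\rho_{0,\lambda}(Y)<0$. Two-sided bound: fix $C$ large; for $|Y|>C$ in particular $|Y|>1$, so $-\tilde\rho_{0,\lambda}(Y)=\vf^{\radial}_\lambda(Y)=\Gamma K^{\radial}(Y)$ by \Cref{eq:phi0-K}. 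Multiplying the bounds $\ee^{-C\lambda}\le\Gamma\le\ee^{C\lambda}$ of \Cref{eq:Gamma-bounds} by the harmonic-oscillator Green's-function estimates of \Cref{lem:bounds on the SHO Green's function}, which for $|Y|$ large read $\ee^{-\frac\lambda4|Y|^2-C\lambda|Y|}\le K^{\radial}(Y)\le\ee^{-\frac\lambda4|Y|^2+C\lambda|Y|}$, and absorbing $C\lambda\le C\lambda|Y|$ into a larger constant, yields \Cref{eq:rho-tilde-bound}.

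The main obstacle is not in this deduction but in the input it rests on: the sharp Gaussian asymptotics of $K^{\radial}(Y)=(H^{\rm SHO}_\lambda-e^{\radial}_\lambda\Id)^{-1}(Y,0)$ — that $-\log K^{\radial}(Y)=\frac\lambda4|Y|^2+\Ord{\lambda|Y|}$ uniformly for large $\lambda$ — which is precisely what \Cref{lem:bounds on the SHO Green's function} in the appendix supplies, e.g.\ via the Mehler/heat-kernel representation of the resolvent or a WKB/Agmon analysis of the radial ODE $-\partial_r^2 f-\tfrac1r\partial_r f+\tfrac{\lambda^2}4 r^2 f=e^{\radial}_\lambda f$. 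The only other technical point is the $K$-versus-$K^{\radial}$ swap noted above, harmless because $e_\lambda-e^{\radial}_\lambda$ is exponentially small.
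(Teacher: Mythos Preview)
Your proof is correct and takes a genuinely different route from the paper's. The key move you make is the identity $\tilde\rho_{0,\lambda}(Y)=-\vf^{\radial}_\lambda(Y)$, obtained by recognizing the defining integral of $\tilde\rho_{0,\lambda}$ as the Landau-resolvent representation of the radial ground state itself; from this, negativity and the two-sided Gaussian bounds follow immediately from \Cref{eq:phi0-K}, \Cref{eq:Gamma-bounds}, and \Cref{lem:bounds on the SHO Green's function}. The paper instead never writes down this identity: it proves negativity by invoking the integral representation \cite[Eqns.~(5.1)--(5.2)]{FSW_22_doi:10.1137/21M1429412} (which expresses $\rho_{0,\lambda}$ as an integral with a positive kernel against $\vzero\le0$), obtains the delicate lower bound by quoting \cite[Theorem~1.7]{FSW_22_doi:10.1137/21M1429412}, and proves the upper bound by a direct Cauchy--Schwarz estimate using $\vf^{\radial}_\lambda=\Gamma K$ and the Gaussian upper bound on $K$. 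Your argument is more self-contained---it does not need the structural results of \cite{FSW_22_doi:10.1137/21M1429412} beyond what is already packaged in \Cref{eq:phi0-K}---and it explains conceptually why the radial hopping is governed exactly by the decay of $\vf^{\radial}_\lambda$. The paper's route, on the other hand, makes the sign argument work without appealing to reality of $\tilde\rho_{0,\lambda}$ first. Your parenthetical handling of the $K$ versus $K^{\radial}$ discrepancy is appropriate; the paper's own text conflates the two kernels in this passage, and as you note the difference is controlled by $e_\lambda-e^{\radial}_\lambda=\Ord{\tau}$, which is negligible in every application of the lemma.
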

\begin{proof}[Proof of \Cref{lem:main-overint} ]
The proof is based on results and calculations from \cite{FSW_22_doi:10.1137/21M1429412}.
By \Cref{eq:rho-trho}, the bounds \Cref{eq:rho-tilde-bound} on $\tilde{\rho}_{0,\lambda}(Y)$ is equivalent to the bounds
\eql{
 \exp\left(-\frac{\lambda}{4}|Y|^2 -C\lambda|Y| \right)\ \le\ -\rho _{0,\lambda}(Y)\ \le\ 
 \exp\left(-\frac{\lambda}{4}|Y|^2+C\lambda|Y| \right),
 \label{eq:radial-hopping-bounds} 
 }
 for $|Y|>C$, where the constant $\Gamma$ can be absorbed in the bound by taking $C$ sufficiently large. Recall that $\vzero$ and $\vf^{\radial}$ are rotationally invariant. Since for any rotation matrix $R$, we have $x\wedge RY= R^*x\wedge Y$, it follows that 
 $\rho_{0,\lambda}(Y)=\rho_{0,\lambda}(d)$, where $d=(|Y|,0)$. 
 Applying \cite[Eqn (5.1)-(5.2)]{FSW_22_doi:10.1137/21M1429412}, we have that 
 \eq{
   \rho _{0,\lambda}(Y) = \int_0^1 \vf^{\radial}(r) \lambda^2 \vzero(r) L_{|Y|}(r) dr,\quad \textrm{where $L_{|Y|}(r)>0$}.
 }
 Further, since $\vzero\le0$ and $\vf^{\radial}>0$, it follows that $\rho _{0,\lambda}(Y)<0$. 
 Moreover Theorem 1.7 of \cite{FSW_22_doi:10.1137/21M1429412} implies 
$ -\rho _{0,\lambda}(Y)\ge \exp\left( -\frac{\lambda}{4}|Y|^2-C\lambda|Y|\right)$.
This establishes the more delicate lower bound in \Cref{eq:radial-hopping-bounds}.

To complete the proof of \Cref{lem:main-overint}, we shall prove the upper bound \eql{\label{eq:rho0-upper}
|\rho _{0,\lambda}(Y)|\le \exp\left( -\frac{\lambda}{4}|Y|^2+C\lambda|Y|\right),} for $|Y|>C$ and $C$ sufficiently large. 

Since $\vf^{\radial}_{\lambda}(x)=\Gamma K(x)$ (see \Cref{eq:phi0-K}), we have 
\[\rho_{0,\lambda}(Y)
   \equiv \int \ee^{-i\frac{\lambda}{2}\tilde{x}\wedge Y}\ \Gamma\ K(Y-\tilde{x})\ (\lambda^2v^{\radial})(\tilde{x})\ \vf^{\radial}(\tilde{x})\ d\tilde{x}\]
 Since $0\le K(x)\le C\exp\left(-\frac{\lambda}{4}\norm{x}^2\right)$ (see \cref{lem:bounds on the SHO Green's function}) 
 for all $x\in B_C(0)^c$ and  $e^{-C\lambda}\le \Gamma\le \ee^{C\lambda}$ (eqn. \Cref{eq:Gamma-bounds}), we have $0\le 
 \Gamma\ K(Y-\tilde{x})\le C\ \exp\left(-\frac{\lambda}{4}|Y|^2 + C\lambda |Y|\right)$.  
 Therefore, 
 \eq{ |\rho_{0,\lambda}(Y)|\le \int_{\mathbb R^2} \lambda^2|\vzero(x)| \ \vf^{\radial}_{\lambda}(x)\ \dif{x}\ 
 \exp\left(-\frac{\lambda}{4}|Y|^2 + C\lambda |Y|\right).
 \nonumber}
 Finally, ${\rm supp}(\vzero)\subset B_1(0)$, $|\vzero(x)|\le C_0$ and $\|\vf^{\radial}_{\lambda}\|_{L^2(\mathbb R^2)}=1$ imply the bound \cref{eq:rho0-upper}.
 \end{proof}

\begin{proof}[Proof of \Cref{prop:sophon0-planet0}] 
 We now embark on the proof of \Cref{prop:sophon0-planet0} concerning the expansion of $\Omega^\radial(0,\nu)$. 
 We start by expressing $\Omega^\radial(0,\nu)$ in terms of the untranslated potentials and ground states. Note:
 \begin{align*}
   v^{\radial,R}(x)&=v^{\radial}(x- d );\ \vf^{\radial,R}(x)=e^{-i\frac{\lambda}{2}x\wedge d }\vf^\radial(x- d );\\ W_\nu^L(y)&=W^0(y+ d -\zeta_\nu);\ \vf^{\radial,L}(y)= 
   \ee^{i\frac{\lambda}{2}y\wedge d }\vf^\radial(y+ d ).
    \end{align*}
    Hence,
    {\footnotesize{
    \begin{align*} & \Omega^\radial(0,\nu) \\
   &\quad = \int_{\RR^2\times\RR^2}\ \ee^{-i\frac{\lambda}{2}\left( x\wedge y -x\wedge d - y\wedge d \right)}\ 
 \lambda^2 v^{\radial}(x- d ) \vf^{\radial}(x- d )\ K(x-y)\ \tau_{\lambda,M,D}W^0(y+ d -\zeta_\nu)\ \vf^{\radial}(y+ d )\ \dif{x} \dif{y}
 \end{align*}
 }}
 Change variables:
 \begin{align*}
  \tilde{x}&=x- d ;\ \tilde{y}=y+ d -\zeta_\nu,\quad {\rm or}\\
  x&=\tilde{x}+ d ;\ y=\tilde{y}- d +\zeta_\nu,
 \end{align*}
 and note
 \begin{align*}
&   x-y=\tilde{x}-\tilde{y}+2 d -\zeta_\nu;\\
&x\wedge y-x\wedge d -y\wedge d =
2 d \wedge\zeta_\nu+\tilde{x}\wedge(\zeta_\nu-2 d )+(\tilde{x}+2 d )\wedge\tilde{y}.  
 \end{align*}
 Hence, 
 \begin{align}
 &\Omega^\radial(0,\nu)\ = \ \Omega^\radial_{\rm main}(0,\nu)\ +\ \Omega^\radial_{\rm error}(0,\nu),\label{eq:Omega0-main+error}
 \end{align}
 where 
 {\footnotesize{
 \begin{align}
 &e^{i\lambda\ d \wedge\zeta_\nu}\ \Omega^\radial_{\rm main}(0,\nu) \nonumber\\ 
 &\equiv \int_{\mathbb R^2} \tau_{\lambda,M,D}\ W^0(\tilde{y})\ \vf^{\radial}(\tilde{y}+\zeta_\nu) \nonumber\\
 &\qquad \cdot \Big[ \int_{\mathbb R^2}\ \ee^{+i\frac{\lambda}{2}\left(\tilde{x}\wedge(2 d -\tilde{y}-\zeta_\nu)\right)}\ K\left([2 d -\tilde{y}-\zeta_\nu]+\tilde{x}\right) \ \lambda^2 v^{\radial}(\tilde{x}) \vf^{\radial}(\tilde{x})\ d\tilde{x}\Big]\ d\tilde{y} \label{eq:Omega0main1}\\
 &{\rm and} \nonumber\\
 & \ee^{i\lambda\ d \wedge\zeta_\nu}\ \Omega^\radial_{\rm error}(0,\nu) \nonumber\\ 
&\equiv \int_{\mathbb R^2} \tau_{\lambda,M,D}\ W^0(\tilde{y})\ \vf^{\radial}(\tilde{y}+\zeta_\nu) \nonumber\\
&\ \ \cdot \Big[ \int_{\mathbb R^2}\ \ee^{i\frac{\lambda}{2}\left(\tilde{x}\wedge(2 d -\tilde{y}-\zeta_\nu)\right)}\ K\left([2 d -\tilde{y}-\zeta_\nu]+\tilde{x}\right) \ 
 \left( \ee^{-i\lambda d \wedge\tilde{y}}-1\right)\ \lambda^2 v^{\radial}(\tilde{x}) \vf^{\radial}(\tilde{x})\ d\tilde{x}\Big]\ d\tilde{y} \label{eq:Omega0error}
 \end{align}
 }}

 \subsection*{\underline{The term $e^{i\lambda\ d \wedge\zeta_\nu}\ \Omega^\radial_{\rm main}(0,\nu)$ } }

By \Cref{eq:tilde-rho} we have
\begin{equation}\label{eq:Omega0main2}
\begin{aligned}
e^{i\lambda\, d\wedge \zeta_\nu}\,\Omega^{0}_{\text{main}}(0,\nu)
&= \int_{\mathbb{R}^2} \tau_{\lambda,M,D}|W^0(y)| \,\varphi^{0}(y+\zeta_\nu) \\
&\qquad \times (-\tilde{\rho}_{0,\lambda})(2d-\zeta_\nu-y)\, \dif{y}>0\,,
\end{aligned}
\end{equation}
%  \eql{
%    \noindent\ee^{i\lambda d \wedge\zeta_\nu}\Omega^\radial_{\rm main}(0,\nu) &=
%  \int_{\mathbb R^2}\tau_{\lambda,M,D}|W^0(y)| \vf^{\radial}(y+\zeta_\nu)
%   \left(-\tilde{\rho}_{0,\lambda}\right)(2 d -\zeta_\nu-y)\dif{y} > 0
% \label{eq:Omega0main2} }
 where we have used the assumption $W^0\le0$. 
 
 Each factor in the integrand of \cref{eq:Omega0main2} is non-negative, so we can bound this integral from above and below by applying  \Cref{lem:main-overint} to obtain upper and lower bounds for
 $-\tilde{\rho}_{0,\lambda}(Y)$,
with $Y=2 d -\zeta_\nu-y$. Note $|\zeta_\nu|\le CD$ and for $y\in \supp(W^0)$, $|y|\le \delta_{\lambda,M,D}$. 
 And since $|2 d |=D^{3/2}$, we have

$|2 d -\zeta_\nu-y|\ge c(D^{3/2}-CD)$ 

so for $D$ sufficiently large, 
$|2 d -\zeta_\nu-y|>C$. Therefore,

\eql{
\exp\left(-\frac{\lambda}{4}|2 d -\zeta_\nu|^2-C\lambda| d |\right)\ \le\ 
 \left(-\tilde{\rho}_{0,\lambda}\right)(2 d -\zeta_\nu-y)\ \le\
 \exp\left(-\frac{\lambda}{4}|2 d -\zeta_\nu|^2+C\lambda| d |\right).
\label{eq:trho-upper-lower}}
This implies upper and lower bounds for the positive quantity:
 $e^{i\lambda\ d \wedge\zeta_\nu}\ \Omega^\radial_{\rm main}(0,\nu)$.

 \subsection*{\underline{The term $e^{i\lambda\ d \wedge\zeta_\nu}\ \Omega^\radial_{\rm error}(0,\nu)$, given by \Cref{eq:Omega0error}} }

Recall from \Cref{eq:delta&tau} and \Cref{eq:xiD} that 
\eql{
  \delta_{\lambda,M,D} =\exp(-M\lambda D^{3/2})\quad {\rm and}\quad 2| d |=D^{3/2},\label{eq:delta-1}
}
where $M$ will presently be determined.
%
%\footnote{\textcolor{red}{\bf previously, we've written %$\delta_{\lambda,M,D}$. Should we change to %$\delta_{\lambda,M,D}$? }}
%
 For $\tilde{y}\in {\rm supp}(W^0)\subset B_{\delta}(0)$ (see (S2) ), we have the bound:
\[
|e^{-i\lambda d \wedge\tilde{y}}-1|
\le C \lambda | d |\ |\tilde y|\le 
C \lambda D^{3/2} \exp(-M\lambda D^{3/2})\le \exp\left(-\frac{M}{2}\lambda D^{3/2}\right).
\]

\begin{rem}[Choice of $\delta_{\lambda,M,D}$]\label{rem:delta-rationale}
  It is here that we used the choice $\delta_{\lambda,M,D}=\exp(-M\lambda D^{3/2})$. This ensures that the product $\big|\Omega^\radial_{\rm error}\big| \cdot \big|\Omega^\radial_{\rm main}\big|^{-1}$ is exponentially small.
\end{rem}

For $\tilde{x}\in {\rm supp}(\vzero)\subset B_1(0)$ and 
$\tilde{y}\in {\rm supp}(W^0)\subset B_\delta(0)$
we have that $|2 d -\zeta_\nu-\tilde{y}- \tilde{x}|>C$. Therefore, by \cref{lem:bounds on the SHO Green's function} 
\[
K\left(2 d -\zeta_\nu-\tilde{y}-\tilde{x}\right)\le \exp\left(-\frac{\lambda}{4}|2 d -\zeta_\nu-\tilde{y}|^2+C\lambda| d |\right). 
\]
Further, $ \|\lambda^2\ \vzero\ \vf^{\radial}_{\lambda}\|_{L^2(\mathbb R^2)} \le C \lambda^2$, 
since ${\rm supp}(\vzero)\subset B_1(0)$, $|\vzero|\le C$ and $\|\vf^{\radial}_{\lambda}\|_{L^2(\mathbb R^2)}=1$.
Consequently, the integral within the square brackets of \cref{eq:Omega0error} is dominated by 
\begin{align}
 &  C\lambda^2\cdot \exp\left(-\frac{M}{2}\lambda D^{3/2}\right)\cdot \exp\left(-\frac{\lambda}{4}|2 d -\zeta_\nu-\tilde{y}|^2+C\lambda| d |\right) \nonumber\\
 &\qquad \le \exp\left(-\frac{M}{3}\lambda D^{3/2}\right)\cdot \exp\left(-\frac{\lambda}{4}|2 d -\zeta_\nu-\tilde{y}|^2-C\lambda| d |\right),
 \label{eq:exp-bound-in-Om_error}
\end{align}
where by taking $M$ fixed and sufficiently large we can ensure that the constant $C$ on the right hand side of \cref{eq:exp-bound-in-Om_error} is as in
\cref{eq:trho-upper-lower}. 
It follows that the integral within the square brackets of \cref{eq:Omega0error} is bounded by $\exp\left(-\frac{M}{3}\lambda D^{3/2}\right)\cdot \left(-\tilde{\rho}_{0,\lambda}\right)(2 d -\zeta_\nu-y)$.
 Hence, 
\begin{align} \Big|\Omega^\radial_{\rm error}(0,\nu)\Big|
&\le \exp\left(-\frac{M}{3}\lambda D^{3/2}\right)\cdot \int_{\mathbb R^2} \tau_{\lambda,M,D}\ |W^0(\tilde{y})|\ \vf^{\radial}(\tilde{y}+\zeta_\nu)\ 
\left(-\tilde{\rho}_{0,\lambda}\right)(2 d -\zeta_\nu-y)\ d\tilde{y} \nonumber\\
&\le 
\exp\left(-\frac{M}{3}\lambda D^{3/2}\right)\cdot \Big|\Omega^\radial_{\rm main}(0,\nu)\Big| = \exp\left(-\frac{M}{3}\lambda D^{3/2}\right)\cdot \ee^{i\lambda\ d \wedge\zeta_\nu}\ \Omega_{\rm main}^0(0,\nu),\nonumber\\
&{\ }\label{eq:error-main}
\end{align}
Using this bound and 
\begin{align*}
e^{i d \wedge\zeta_\nu}\ \Omega^\radial(0,\nu) &= \ee^{i d \wedge\zeta_\nu}\ \Omega_{\rm main}^0(0,\nu)\ +\ \ee^{i d \wedge\zeta_\nu}\ \ee^{i\varphi_{\rm error}}\ |\Omega_{\rm error }^0(0,\nu)|
\end{align*}
we have 
\begin{align*}  \Big|\ \ee^{i d \wedge\zeta_\nu}\ &\Omega^\radial(0,\nu)\ -\ \ee^{i d \wedge\zeta_\nu}\ \Omega_{\rm main}^0(0,\nu)\ \Big|\\
&\le\ |\Omega_{\rm error }^0(0,\nu)|\ \le\ 
e^{-\frac{M}{3}\lambda D^{3/2}}\cdot \ee^{i\lambda\ d \wedge\zeta_\nu}\ \Omega_{\rm main}^0(0,\nu) 
\end{align*}
Therefore, 
% \[
% \ee^{i d \wedge\zeta_\nu}\ \Omega_{\rm main}^0(0,\nu)\left(1 - \ee^{-\frac{M}{3}\lambda D^{3/2}}\right)  \le \ee^{i d \wedge\zeta_\nu}\ \Omega^\radial(0,\nu) \le \ee^{i d \wedge\zeta_\nu}\ \Omega_{\rm main}^0(0,\nu)\left(1 + \ee^{-\frac{M}{3}\lambda D^{3/2}}\right)
% \]
% and so 
\eql{ \ee^{i d \wedge\zeta_\nu}\ \Omega^\radial(0,\nu)
= \ee^{i d \wedge\zeta_\nu}\ \Omega^\radial_{\rm main}(0,\nu)\ \left( 1 + {\rm ERR}_\nu \right)\ .
\label{eq:Omega0main+error}
}
Here, (see \Cref{eq:Omega0main2})
\begin{subequations}\label{eq:Omega0main22}
\begin{align}
e^{i\lambda\, d\wedge\zeta_\nu}\,\Omega^\radial_{\rm main}(0,\nu)
&= \int_{|y|\le \delta} \tau_{\lambda,M,D}|W^0(y)|\, \vf^{\radial}(y+\zeta_\nu) \notag\\
&\quad\times \big(-\tilde{\rho}_{0,\lambda}\big)(2d-\zeta_\nu-y)\,\dif{y},
\label{eq:Omega0main2-a}\\
\intertext{and}
\big|{\rm ERR}_\nu\big|
&\le \exp\!\left(-\frac{M}{3}\lambda D^{3/2}\right).
\label{eq:ERR_nu-est}
\end{align}
\end{subequations}
% \begin{subequations}
% \begin{align}
%    \ee^{i\lambda\ d \wedge\zeta_\nu}\ \Omega^\radial_{\rm main}(0,\nu) &=
%  \int_{ |y|\le \delta}\tau_{\lambda,M,D}|W^0(y)| \vf^{\radial}(y+\zeta_\nu)\ 
%   \left(- \tilde{\rho}_{0,\lambda}\right)(2 d -\zeta_\nu-y)\ \dif{y},
% \label{eq:Omega0main2-a} \\
% \textrm{and}\quad \big|{\rm ERR}_\nu\big| &\le \exp\left(-\frac{M}{3}\lambda D^{3/2}\right).\label{eq:ERR_nu-est}
% \end{align}
% \end{subequations}
 The proof of \Cref{prop:sophon0-planet0} is now complete.\end{proof}
 
Let's now delve further into \Cref{eq:Omega0main+error} by next deriving upper and lower bounds on the expression for 
 $e^{i\lambda\ d \wedge\zeta_\nu}\ \Omega^\radial_{\rm main}(0,\nu)$ in \Cref{eq:Omega0main2-a}.
 
 We first give upper and lower bounds for 
 $\vf^{\radial}(\tilde{y}+\zeta_\nu)$. 
 Recall that for $\vf^{\radial}(x)=\Gamma K(x)$
 for $\norm{x}>1$ (\Cref{eq:phi0-K} ), where $e^{-C\lambda} \le\ \Gamma \le \ee^{C\lambda}$; see \Cref{eq:Gamma-bounds}.
 Moreover, by \Cref{lem:bounds on the SHO Green's function}, for $|\tilde{y}|<\delta$,
\[\exp\left(-\frac{\lambda}{4}|\zeta_\nu|^2 - C\lambda|\zeta_\nu|\right)
  \le K(\tilde{y}+\zeta_\nu) \le \exp\left(-\frac{\lambda}{4}|\zeta_\nu|^2 + C\lambda|\zeta_\nu|\right),\quad \norm{x}>C_1\ .\] 
  
  Hence,
  \[
  \exp\left(-\frac{\lambda}{4}|\zeta_\nu|^2 - C\lambda|\zeta_\nu| \right)
  \le \vf^{\radial}_{\lambda}(\tilde{y}+\zeta_\nu) \le \exp\left(-\frac{\lambda}{4}|\zeta_\nu|^2 + C\lambda|\zeta_\nu| \right) \ .
  \]
  Therefore since, for $\nu=1,2,3,4$, $\Big| |\zeta_\nu|-D\Big|\le C$, we have
  \eql{
  \exp\left(-\frac{\lambda}{4}D^2 - C\lambda D \right)
  \le \vf^{\radial}_{\lambda}(\tilde{y}+\zeta_\nu) \le \exp\left(-\frac{\lambda}{4}D^2 + C\lambda D \right) \ .
  \label{eq:phi0Ypluszeta}
  }

  Applying \Cref{eq:phi0Ypluszeta} together with the upper and lower bounds \Cref{eq:trho-upper-lower} for $\left(-\tilde{\rho}_{0,\lambda}\right)(2 d -\zeta_\nu-y)$ yields upper and lower bounds $e^{i\lambda\ d \wedge\zeta_\nu}\ \Omega^\radial_{\rm main}(0,\nu)$:\\
For $\nu=1,2$, 
  { \footnotesize{
  \begin{align} 
  &\ \tau\delta^2\ \fint_{B_\delta(0)}|W^0|\
  \ee^{-\frac{\lambda}{4}D^2-C\lambda D -\frac{\lambda}{4}(D^{3/2}-D)^2-C\lambda D^{3/2}}\ \nonumber \\ &\qquad \le  
 \ee^{i\lambda\ d \wedge\zeta_\nu}\ \Omega^\radial_{\rm main}(0,\nu) \le \tau\delta^2\ \fint_{B_\delta(0)}|W^0|\
  \ee^{-\frac{\lambda}{4}D^2+C\lambda D -\frac{\lambda}{4}(D^{3/2}-D)^2+C\lambda D^{3/2}},\label{eq:Omega0nu12}\end{align}}}
  and for $\nu=3,4$: 
  { \footnotesize{
  \begin{align} 
  &\ \tau\delta^2\ \fint_{B_\delta(0)}|W^0|\
  \ee^{-\frac{\lambda}{4}D^2-C\lambda D -\frac{\lambda}{4}(D^{3/2}+D)^2-C\lambda D^{3/2}}\ \nonumber \\ &\qquad \le\ \ee^{i\lambda\ d \wedge\zeta_\nu}\ \Omega^\radial_{\rm main}(0,\nu)\ \le\ \tau\delta^2\ \fint_{B_\delta(0)}|W^0|\
  \ee^{-\frac{\lambda}{4}D^2+C\lambda D -\frac{\lambda}{4}(D^{3/2}+D)^2+C\lambda D^{3/2}}\ .\label{eq:Omega0nu34}\end{align}}}
Here, we have used the assumptions of \Cref{sec:single-well-plus-sophons}:
 $W^0\le0$ (Assumption S1) and 
 $\int_{B_\delta(0)}W^0=-c\delta^2$ 
 (Assumption S3), and that $0\le \bar{y}\le1$. 

 Introduce $R_\nu(D,\lambda)$ via 
 \begin{equation}\label{eq:RDlam-nu12}
\begin{aligned}
e^{i\lambda\, d\wedge \zeta_\nu}\,\Omega^\radial_{\rm main}(0,\nu)
&= \tau_{\lambda,M,D}\delta^2 \fint_{B_\delta(0)} |W^0| \\
&\quad\times
e^{-\frac{\lambda}{4}D^2
   - \frac{\lambda}{4}\bigl(D^{3/2}-D\bigr)^2}\,
R_\nu(D,\lambda),\\
&\text{for }\nu=1,2.
\end{aligned}
\end{equation}
 % \eql{ \ee^{i\lambda\ d \wedge\zeta_\nu} \Omega^\radial_{\rm main}(0,\nu) = \tau\delta^2\fint_{B_\delta(0)}|W^0|\ 
 % \ee^{-\frac{\lambda}{4}D^2 -\frac{\lambda}{4}(D^{3/2}-D)^2}\times R_\nu(D,\lambda),\quad \nu=1,2 \label{eq:RDlam-nu12}
 % }
 and 
 \begin{equation}\label{eq:RDlam-nu34}
\begin{aligned}
e^{i\lambda\, d\wedge \zeta_\nu}\,\Omega^\radial_{\rm main}(0,\nu)
&= \tau_{\lambda,M,D}\delta^2 \fint_{B_\delta(0)} |W^0| \\
&\quad\times
e^{-\frac{\lambda}{4}D^2
   - \frac{\lambda}{4}\bigl(D^{3/2}+D\bigr)^2}\,
R_\nu(D,\lambda),\\
&\text{for }\nu=3,4.
\end{aligned}
\end{equation}
 % \eql{ \ee^{i\lambda\ d \wedge\zeta_\nu} \Omega^\radial_{\rm main}(0,\nu) = \tau\delta^2\fint_{B_\delta(0)}|W^0|\ 
 % \ee^{-\frac{\lambda}{4}D^2 -\frac{\lambda}{4}(D^{3/2}+D)^2}\times R_\nu(D,\lambda),\quad \nu=3,4 \label{eq:RDlam-nu34}
 % }
 Substituting \Cref{eq:RDlam-nu12} and \Cref{eq:RDlam-nu34} 
 into \Cref{eq:Omega0nu12} and  \Cref{eq:Omega0nu34}, respectively, we find $e^{-C\lambda D^{3/2}}\ \le\ 
   R_\nu(D,\lambda)\ \le \ee^{ C\lambda D^{3/2}}$.
Summarizing, we have the following expansions for $\Omega^\radial(0,\nu)$, $\nu=1,2,3,4$:
 \begin{prop}[Expansion of $\Omega^\radial(0,\nu)$]\label{prop:dominant-Omega0nu}
\begin{enumerate}
\item For $\nu=1,2$
 \begin{align}
  \ee^{i\lambda\ d \wedge\zeta_\nu}\ \Omega^\radial(0,\nu) = \tau\delta^2 \times 
\fint_{B_\delta(0)}|W^0|\ 
 \ee^{-\frac{\lambda}{4}D^2 -\frac{\lambda}{4}(D^{3/2}-D)^2}\times R_\nu(D,\lambda)\times \left[ 1 + {\rm ERR}_\nu\right], \nonumber \\
 &{\ }\label{eq:Omega00nu-exp}
 \end{align}
 \item 
 For $\nu=3,4$
 \begin{align}
 & \ee^{i\lambda\ d \wedge\zeta_\nu}\ \Omega^\radial(0,\nu) = \tau\delta^2 \times 
\fint_{B_\delta(0)}|W^0|\ 
 \ee^{-\frac{\lambda}{4}D^2 -\frac{\lambda}{4}(D^{3/2}+D)^2}\times R_\nu(D,\lambda)\times \left[ 1 + {\rm ERR}_\nu\right]\ .
\nonumber \\
 &  {\ }
 \label{eq:Omega00nu-exp-12}
 \end{align}
 \end{enumerate}
Here, for $\nu=1,2,3,4$, we have 
 \begin{align}
   &\exp\left( -C\lambda D^{3/2}\right)\ \le\ 
   R_\nu(D,\lambda)\ \le \exp\left( C\lambda D^{3/2}\right),
   \label{eq:RDlam-bounds-34}
   \end{align}
   and 
   \begin{align}
& \big|{\rm ERR}_\nu\big| \le \exp\left(-\frac{M}{3}\lambda D^{3/2}\right)\ . \label{eq:ERR_nu-bd}
 \end{align}
   \end{prop}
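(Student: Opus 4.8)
The plan is to obtain \Cref{prop:dominant-Omega0nu} by assembling the pieces already in hand; no new estimate is required. The starting point is the factorization \cref{eq:Omega0main+error} supplied by \Cref{prop:sophon0-planet0}, namely
\[
\ee^{i\lambda\, d\wedge\zeta_\nu}\,\Omega^{\radial}(0,\nu) = \ee^{i\lambda\, d\wedge\zeta_\nu}\,\Omega^{\radial}_{\rm main}(0,\nu)\,\bigl(1+{\rm ERR}_\nu\bigr),\qquad \bigl|{\rm ERR}_\nu\bigr|\le \exp\!\bigl(-\tfrac{M}{3}\lambda D^{3/2}\bigr),
\]
which already gives the error bound \cref{eq:ERR_nu-bd}. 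Thus everything reduces to pinning down the order of magnitude of the positive quantity $\ee^{i\lambda\, d\wedge\zeta_\nu}\,\Omega^{\radial}_{\rm main}(0,\nu)$ and then defining $R_\nu(D,\lambda)$ so as to soak up the slack.

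For that I would use the integral representation \cref{eq:Omega0main2-a}, whose integrand over $\{|y|\le\delta\}$ is a product of three non-negative factors: $\tau_{\lambda,M,D}|W^0(y)|$, $\vf^{\radial}(y+\zeta_\nu)$, and $(-\tilde{\rho}_{0,\lambda})(2d-\zeta_\nu-y)$. I bound the latter two pointwise on $\{|y|\le\delta\}$: for $\vf^{\radial}(y+\zeta_\nu)$ I invoke \cref{eq:phi0Ypluszeta} together with $\bigl|\,|\zeta_\nu|-D\,\bigr|\le C$; for $(-\tilde{\rho}_{0,\lambda})(2d-\zeta_\nu-y)$ I apply \Cref{lem:main-overint} with $Y=2d-\zeta_\nu-y$, noting that $|Y| = (D^{3/2}-D)+\calO(1)$ when $\nu\in\{1,2\}$ and $|Y| = (D^{3/2}+D)+\calO(1)$ when $\nu\in\{3,4\}$, so that $|Y|>C$ once $D$ is large — here one uses $D^{3/2}-D\to\infty$ — and $|Y|^2$ equals the clean exponent $(D^{3/2}\mp D)^2$ up to an $\calO(D^{3/2})$ correction. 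Multiplying the two pointwise bounds, integrating against $\tau_{\lambda,M,D}|W^0(y)|\,dy$, and writing $\int_{B_\delta(0)}|W^0| = \pi\delta^2\fint_{B_\delta(0)}|W^0|$, reproduces exactly the two-sided bounds \cref{eq:Omega0nu12} for $\nu=1,2$ and \cref{eq:Omega0nu34} for $\nu=3,4$.

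I would then \emph{define} $R_\nu(D,\lambda)$ by \cref{eq:RDlam-nu12} and \cref{eq:RDlam-nu34}; dividing the bounds just obtained by the explicit prefactor $\tau_{\lambda,M,D}\delta^2\fint_{B_\delta(0)}|W^0|\,\ee^{-\frac{\lambda}{4}D^2-\frac{\lambda}{4}(D^{3/2}\mp D)^2}$ yields at once $\exp(-C\lambda D^{3/2})\le R_\nu(D,\lambda)\le\exp(C\lambda D^{3/2})$, i.e.\ \cref{eq:RDlam-bounds-34} (using $D\le D^{3/2}$ to absorb the $C\lambda D$ terms). Substituting the definition of $R_\nu$ back into the factorization from the first step then gives \cref{eq:Omega00nu-exp} and \cref{eq:Omega00nu-exp-12}, completing the proof.

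The only place demanding care — and the closest thing to an obstacle — is the bookkeeping in the middle step: one must check that every discrepancy between $|2d-\zeta_\nu-y|^2$ and the clean exponent $(D^{3/2}\mp D)^2$, the $C\lambda|d|$-type slack from \Cref{lem:main-overint}, the $C\lambda D$ slack in \cref{eq:phi0Ypluszeta}, and the harmless constant $\pi$ from the average $\fint$, are all of size $\calO(\lambda D^{3/2})$ and hence can be hidden inside the two-sided bound on $R_\nu(D,\lambda)$ without spoiling it. This is routine given $|\zeta_\nu| = D+\calO(1)$, $2|d| = D^{3/2}$, and $|y|\le\delta\le 1$.
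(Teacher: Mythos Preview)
Your proposal is correct and tracks the paper's own argument essentially step for step: the paper likewise starts from the factorization \cref{eq:Omega0main+error}, bounds the integrand of \cref{eq:Omega0main2-a} pointwise via \cref{eq:phi0Ypluszeta} and the two-sided bounds \cref{eq:trho-upper-lower} (which are just \Cref{lem:main-overint} specialized to $Y=2d-\zeta_\nu-y$), arrives at \cref{eq:Omega0nu12}--\cref{eq:Omega0nu34}, and then \emph{defines} $R_\nu(D,\lambda)$ by \cref{eq:RDlam-nu12}--\cref{eq:RDlam-nu34} to absorb the $\calO(\lambda D^{3/2})$ slack. The bookkeeping you flag at the end is exactly the routine check the paper performs implicitly.
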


 \subsection*{Upper bound on $\Omega^\radial(0,0)=
 \Omega^\radial(P^R\leftrightarrow P^L)$}

We have
\[
\Omega^\radial(0,0) = \int_{\RR^2\times\RR^2} 
 \lambda^2 v^{\radial,R}(x) 
 \overline{\vf_\lambda^{\radial,R}(x)}\ \ee^{-i\frac{\lambda}2 x\wedge y}\ K(x-y)\  \lambda^2 v^{\radial,L}(y) \vf_\lambda^{\radial,L}(y)\ \dif{x} \dif{y}.
 \]
In the integrand of $\Omega^\radial(0,0)$, $x$ varies over the support of $v^{\radial,R}$ and $y$ over the support of $v^{\radial,L}$, i.e. $|x- d |\le1$ and $|y- d |\le1$. Hence, $|x-y|\ge 2| d |-2=D^{3/2}-2$. Therefore, by \Cref{lem:bounds on the SHO Green's function}, if we take $D>C_1$ we have
 $K(x-y)\le C \ee^{-\frac{\lambda}{4}|x-y|^2 }\le C 
 \ee^{-\frac{\lambda}{4}D^3 + C\lambda D^{3/2}}$
 Using this bound on $K(x-y)$, the Cauchy-Schwarz inequality, and the $L^2-$ normalization of $\vf^{\radial}$, we have 
 \begin{align}
   |\Omega^\radial(0,0)| &\le \lambda^4\ \|\vzero\|_\infty^2\ \int_{\substack{|x- d |\le1\\ |y+ d |\le1}}\ \vf^{\radial}(x- d )\ 
 K(x-y)\ \vf^{\radial}(y+ d )\ \dif{x} \dif{y} \le C\lambda^4 \ee^{-\frac{1}{4}D^3 + C\lambda D^{3/2}}\nonumber %\\
 %& {\ }\ \label{eq:Omega000-bd}
 \end{align}

 We have proved
 \begin{prop}\label{prop:Omega00-bd}
 \eql{
   |\Omega^\radial(0,0)|\le C\lambda^4 \ee^{-\frac{\lambda}{4}D^3 + C\lambda D^{3/2}}. \label{eq:Omeg000-bd-1}
 }
 \end{prop}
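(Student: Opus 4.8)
The plan is to bound $\Omega^\radial(0,0)$ crudely by brute force, exploiting only that the two planet supports are separated by a distance of order $D^{3/2}$. First I would put absolute values inside the double integral. The factor $\ee^{-\ii\frac{\lambda}{2}x\wedge y}$ has modulus one, and the magnetic phases hidden in $\vf^{\radial,R}(x)=\ee^{-\ii\frac{\lambda}{2}x\wedge d}\vf^{\radial}(x-d)$ and $\vf^{\radial,L}(y)=\ee^{\ii\frac{\lambda}{2}y\wedge d}\vf^{\radial}(y+d)$ likewise have modulus one, so that $\abs{\vf^{\radial,R}(x)}=\vf^{\radial}(x-d)$ and $\abs{\vf^{\radial,L}(y)}=\vf^{\radial}(y+d)$, using (V3) to take $\vf^{\radial}\ge0$. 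Bounding $\abs{\lambda^2 v^{\radial,R}(x)},\abs{\lambda^2 v^{\radial,L}(y)}\le C_0\lambda^2$ by (V1), and using that $v^{\radial,R}$ is supported in $\abs{x-d}\le1$ and $v^{\radial,L}$ in $\abs{y+d}\le1$ (by (V2)), one gets
\[
\abs{\Omega^\radial(0,0)}\ \le\ C_0^2\lambda^4\int_{\abs{x-d}\le1}\int_{\abs{y+d}\le1}\vf^{\radial}(x-d)\,K(x-y)\,\vf^{\radial}(y+d)\,\dif y\,\dif x.
\]

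Next I would insert the separation estimate. For $x$ in the support of $v^{\radial,R}$ and $y$ in the support of $v^{\radial,L}$ one has $\abs{x-y}\ge \abs{2d}-2 = D^{3/2}-2$, which — since $D\ge D_{\rm min}$ is large, in particular past the threshold $C_1$ of \Cref{lem:bounds on the SHO Green's function} — lies well inside the Gaussian-decay regime of $K$. Hence $K(x-y)\le C\exp(-\tfrac{\lambda}{4}\abs{x-y}^2)$, and from $(D^{3/2}-2)^2\ge D^3-4D^{3/2}$ this is at most $C\exp(-\tfrac{\lambda}{4}D^3+\lambda D^{3/2})$ uniformly over the domain of integration. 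Pulling this bound out of the integral gives
\[
\abs{\Omega^\radial(0,0)}\ \le\ C\lambda^4\,\ee^{-\frac{\lambda}{4}D^3+C\lambda D^{3/2}}\Big(\int_{\abs{u}\le1}\vf^{\radial}(u)\,\dif u\Big)^{2},
\]
and by Cauchy–Schwarz with $\norm{\vf^{\radial}}_{L^2(\RR^2)}=1$ each factor $\int_{\abs{u}\le1}\vf^{\radial}(u)\,\dif u$ is at most $\sqrt{\pi}$. This is exactly \Cref{eq:Omeg000-bd-1}.

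There is no real obstacle here; the only things to be careful about are (i) checking that the separation $D^{3/2}-2$ indeed exceeds the threshold above which the clean bound $K(x)\le C\exp(-\tfrac{\lambda}{4}\norm{x}^2)$ from \Cref{lem:bounds on the SHO Green's function} applies, which is guaranteed by the standing hypothesis $D\ge D_{\rm min}$, and (ii) tracking that the polynomial-in-$\lambda$ prefactors and the $\lambda D^{3/2}$ cross term produced by expanding $(D^{3/2}-2)^2$ are all absorbed into the $C\lambda^4$ and $C\lambda D^{3/2}$ on the right. The estimate is deliberately lossy: what matters is only that the planet$\leftrightarrow$planet term carries the Gaussian weight $\exp(-\tfrac{\lambda}{4}D^3)$, which is far smaller than the weights $\exp(-\tfrac{\lambda}{4}(D^{3/2}\mp D)^2)$ of the planet$\leftrightarrow$sophon terms from \Cref{prop:dominant-Omega0nu}, so that this bound already delivers the ordering \Cref{eq:PS-PP}.
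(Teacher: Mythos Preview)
Your proof is correct and follows essentially the same approach as the paper: both restrict to the planet supports, use the separation $\abs{x-y}\ge D^{3/2}-2$ together with the Gaussian bound on $K$ from \Cref{lem:bounds on the SHO Green's function}, and then close with Cauchy--Schwarz and the $L^2$-normalization of $\vf^{\radial}$. Your write-up is in fact slightly more explicit about the Cauchy--Schwarz step than the paper's own argument.
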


\subsection*{Upper bounds on $\Omega^\radial(\nu,\nu^\prime)=\Omega^\radial(S^R_{\nu^\prime}\leftrightarrow S^L_\nu)$ for $\nu,\nu^\prime=1,2,3,4$ }

We have 
  \[ \Omega^\radial(\nu^\prime,\nu) = \tau^2 \ \int_{\RR^2\times\RR^2} 
 W^R_{\nu^\prime}(x) \overline{\vf^{\radial,R}(x)}\ \ee^{-i\frac{\lambda}2 x\wedge y}\ K(x-y)\ W^L_\nu(y) \vf^{\radial,L}(y)\ \dif{x} \dif{y}
 \]
\noindent 
Recall from \Cref{eq:WRLnu} that 
 $W^L_{\nu}(y)=W^0\big(y-(\zeta_\nu- d )\big)$ and $W^R_{\nu^\prime}(x)=W^L_{\nu^\prime}(-x)=W^0\big(-x-(\zeta_{\nu^\prime}- d )\big)=W^0\big(x+(\zeta_{\nu^\prime}- d )\big)=W^0\big(x-(-\zeta_{\nu^\prime}+ d )\big)$. Hence, 
 \[ {\rm supp}(W^L_{\nu})=\{y\ :\ |y-(\zeta_{\nu}- d )|<\delta_{\lambda,M,D}\ \},\quad {\rm supp}(W^R_{\nu^\prime})=\{x\ :\ |x-(-\zeta_{\nu^\prime}+ d )|<\delta_{\lambda,M,D}\ \},\quad 
 \]
 Hence, 
 \eql{
\begin{aligned}
|\Omega^\radial(\nu^\prime,\nu)|
&\le \tau^2\,\|W^0\|_\infty^2
\int_{\substack{
  |y-(\zeta_{\nu}- d )|<\delta_{\lambda,M,D}\\
  |x-(-\zeta_{\nu^\prime}+ d )|<\delta}}
\\
&\qquad \vf^{\radial}(x- d )\,K(x-y)\,\vf^{\radial}(y+ d )\,\dif{x}\,\dif{y} .
\end{aligned}
\label{eq:Omega-nunup-1}
}
%   \eql{ |\Omega^\radial(\nu^\prime,\nu)| \le \tau^2\ \|W^0\|_\infty^2 \ \int_{\substack{
%   |y-(\zeta_{\nu}- d )|<\delta_{\lambda,M,D}\\
%   |x-(-\zeta_{\nu^\prime}+ d )|<\delta
%   }}
% \vf^{\radial}(x- d )\ \ K(x-y)\ \vf^{\radial}(y+ d )\ \dif{x} \dif{y}.
%  \label{eq:Omega-nunup-1}
%  }
 
 We next bound $K(x-y)$. Since $x= d -\zeta_{\nu^\prime}+\mathcal{O}(\delta)$ and 
  $y=\zeta_{\nu}- d +\mathcal{O}(\delta)$, it follows that $x-y=2 d -\zeta_\nu-\zeta_{\nu^\prime}+\mathcal{O}(\delta)$. Therefore, since $\big||\zeta_\mu|-D\big|\le C$ ($\mu=1,2,3,4$), we have 
 $ |x-y|\ge D^{3/2} -2D-C$, and hence $ |x-y|^2\ge D^3 -4D^{5/2} +4D^2 -CD^{3/2}$. By \Cref{lem:bounds on the SHO Green's function}, for $D$ fixed and large enough we have 
 \eql{ 0\le K(x-y)\le C \ee^{-\frac{\lambda}{4}|x-y|^2 }\le C 
 \ee^{-\frac{\lambda}{4}\left(D^3 -4D^{5/2} +4D^2\right) +C\lambda D^{3/2}}. \label{eq:Kxy-nunup}
 }
 Further, on the region of integration in \Cref{eq:Omega-nunup-1} we have by \Cref{prop:gs0-decay}
\eql{
 0\le \vf^{\radial}(x- d )\ ,\ \vf^{\radial}(y+ d )\ \le \exp\left(-\frac{\lambda}{4}D^2 +C\lambda D\right) .
 \label{eq:phi-xy+xi}
 }
 Using \Cref{eq:Kxy-nunup} and \Cref{eq:phi-xy+xi} to bound $|\Omega^\radial(\nu,\nu^\prime)|$ in \Cref{eq:Omega-nunup-1} yields
 
 \begin{prop}\label{prop:Omeganunup-bd1}
 For $\nu, \nu'=1,2,3,4$:
 \begin{align}
   |\Omega^\radial(\nu,\nu^\prime)| &\le 
 C\tau^2 \delta^4 \ee^{ -\frac{\lambda}{4}\big[2D^2 + (D^3 -4D^{5/2} +4D^2)\big] +C\lambda D^{3/2}\ }
 %\nonumber\\
% &\le C\ \left(\tau_{\lambda,M,D}\delta^2\right)^2\ \ee^{-\frac{\lambda}{2}D^2 %} \ee^{-\frac{\lambda}{4}D^3 } \ee^{\lambda D^{5/2}} 
 \label{eq:Omegnunup-bd1}
 \end{align}
 \end{prop}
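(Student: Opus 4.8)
The plan is to feed three ingredients that are already in hand --- the geometric separation of the sophon supports, the Gaussian bound on the harmonic-oscillator Green's function $K$ from \Cref{lem:bounds on the SHO Green's function}, and the Gaussian decay of the radial atomic ground state $\vf^\radial$ from \Cref{prop:gs0-decay} --- directly into the elementary estimate \Cref{eq:Omega-nunup-1}, and then merely collect exponents. No new analytic input is needed; the content is exponent bookkeeping.

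First I would fix the geometry. For $x\in\supp(W^R_{\nu^\prime})$ and $y\in\supp(W^L_\nu)$ one has $x=d-\zeta_{\nu^\prime}+\calO(\delta)$ and $y=\zeta_\nu-d+\calO(\delta)$, hence $x-y=2d-\zeta_\nu-\zeta_{\nu^\prime}+\calO(\delta)$. Since $\delta=\delta_{\lambda,M,D}$ is exponentially small, $2|d|=D^{3/2}$, and $\bigl|\,|\zeta_\mu|-D\,\bigr|\le C$ for $\mu=1,\dots,4$ by (S4), this gives $|x-y|\ge D^{3/2}-2D-C$ and therefore $|x-y|^2\ge D^3-4D^{5/2}+4D^2-CD^{3/2}$ for $D$ large. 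In particular $|x-y|>C_1$, so \Cref{lem:bounds on the SHO Green's function} applies and yields $0\le K(x-y)\le C\ee^{-\frac\lambda4|x-y|^2}\le C\ee^{-\frac\lambda4(D^3-4D^{5/2}+4D^2)+C\lambda D^{3/2}}$, which is \Cref{eq:Kxy-nunup}. On the same region $x-d=-\zeta_{\nu^\prime}+\calO(\delta)$ and $y+d=\zeta_\nu+\calO(\delta)$, so both $|x-d|$ and $|y+d|$ lie within $\calO(1)$ of $D$, hence exceed $1$ and place these arguments outside $\supp(\vzero)\subset B_1(0)$; \Cref{prop:gs0-decay} then gives $0\le\vf^\radial(x-d),\ \vf^\radial(y+d)\le C\lambda^2\ee^{-\frac\lambda4(D-C)^2}\le\ee^{-\frac\lambda4 D^2+C\lambda D}$, which is \Cref{eq:phi-xy+xi}.

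Finally I would combine. The product region of integration in \Cref{eq:Omega-nunup-1} has Lebesgue measure at most $(\pi\delta^2)^2$, and $\|W^0\|_\infty\le1$ by (S1). Inserting the two displayed bounds into \Cref{eq:Omega-nunup-1}, and absorbing the polynomial-in-$\lambda$ prefactors coming from $K$ and from $\vf^\radial$ into $\ee^{C\lambda D^{3/2}}$, gives
\[
  |\Omega^\radial(\nu,\nu^\prime)|\ \le\ C\tau^2\delta^4\ \ee^{-\frac\lambda4(D^3-4D^{5/2}+4D^2)+C\lambda D^{3/2}}\cdot\ee^{-\frac\lambda2 D^2+C\lambda D},
\]
and collecting exponents, with $C\lambda D$ absorbed into $C\lambda D^{3/2}$, produces exactly the asserted bound $C\tau^2\delta^4\,\ee^{-\frac\lambda4[2D^2+(D^3-4D^{5/2}+4D^2)]+C\lambda D^{3/2}}$.

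If I had to name a delicate point, it is not the arithmetic but checking that the two invoked lemmas actually apply: one must know that $|x-y|$ exceeds the threshold $C_1$ of \Cref{lem:bounds on the SHO Green's function} and that $|x-d|,|y+d|$ exceed $1$, uniformly over the sophon supports and over $0\le\bar y\le1$. Both hold once $D\ge D_{\rm min}$ is taken large enough (hypothesis (D)), because the controlling scales are ordered $D^{3/2}\gg D\gg1\gg\delta$; everything else is exponent arithmetic.
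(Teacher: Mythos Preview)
Your proposal is correct and follows the paper's argument essentially line by line: the paper likewise feeds the geometric separation $|x-y|\ge D^{3/2}-2D-C$ into \Cref{lem:bounds on the SHO Green's function} to obtain \Cref{eq:Kxy-nunup}, invokes \Cref{prop:gs0-decay} for \Cref{eq:phi-xy+xi}, and then inserts both bounds into \Cref{eq:Omega-nunup-1} with the $\delta^4$ coming from the measure of the product of supports. Your closing remark about verifying the applicability thresholds for the two lemmas is the only point the paper leaves implicit.
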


\subsubsection{ Determination of $\tau=\tau_{\lambda,M,D}$ by imposition of 
the comparison inequalities \Cref{eq:PS-PP_SS}} \label{sec:tau-determined}

With a view toward determining $\tau=\tau_{\lambda,M,D}$, we assume that the expressions $\Omega^\radial(\nu,\nu^\prime)$, are good approximations of $\Omega(\nu,\nu^\prime)$. Constraints \Cref{eq:PS-PP_SS} will be implemented as:
\begin{subequations}
\label{eq:PS-PPandSS}
\begin{align}
 \textrm{\Cref{eq:PS-PP}}\ &:\    \big|\Omega^\radial(0,\nu)\big|\ \gg \ 
 \big|\Omega^\radial(0,0)\big|,\ \nu=1,2\label{eq:PS-PP-1}\\
\textrm{\Cref{eq:PS-SS}}\ &:\ \big|\Omega^\radial(0,\nu)\big|\ \gg \Big|\Omega^\radial(\nu',\nu'')\Big|,\ \nu=1,2,\nu',\nu'' = 1,2,3,4 \label{eq:PS-SS-1}
  \end{align}
  \end{subequations}
 We'll note also that:
\[
 \min\big(\big|\Omega^\radial(0,1)\big|, \big|\Omega^\radial(0,2)\big|\big) \gg \max\big(\big|\Omega^\radial(0,3)\big|, \big|\Omega^\radial(0,4)\big|\big)
 \]
 
%  \footnote{
 % \textcolor{red}{\cref{eq:PS-PP-1} holds only for $\nu=1,2$. For $\nu=3,4$ want to say that 
%the relevant $\Omega$ is small enough rather than big enough.}
 % }
  
\noindent We next derive upper and lower bounds on $\tau\delta^2$, which are sufficient conditions for  \Cref{eq:PS-PPandSS} to hold.

Note from Part 1 of \Cref{prop:dominant-Omega0nu} 
%
%\footnote{\textcolor{red}{\bf DONE--- the case $\nu=1,2$ isn't the most stringent, it's the only case 
%in which the desired estimate holds. %$\nu=3,4$ give rise so smaller 
%$\Omega^\radial$'s, and that's good.}}
%
and \Cref{prop:Omega00-bd} that a sufficient condition for \Cref{eq:PS-PP-1} to hold, for $D\ge D_{\rm max}$ fixed and all $\lambda\ge\lambda_{\rm max}$, is 

\[
\tau\delta^2 \times 
 \ee^{-\frac{\lambda}{4}D^2 -\frac{\lambda}{4}(D^{3/2}-D)^2}\times \ee^{-C\lambda D^{3/2}} \gg\ \lambda^4 \ee^{-\frac{\lambda}{4}D^3 + C\lambda D^{3/2}}
\]
or 
\eql{ \tau\delta^2\ \gg\ \ee^{\frac{\lambda}{2}D^2} \ee^{C\lambda D^{3/2}} \ee^{-\frac{\lambda}{2}D^{5/2}} 
\label{eq:tau_sq-lb}
}

Next note, again from Part 1 of \Cref{prop:dominant-Omega0nu} (for $\nu=1,2$)
 and \Cref{prop:Omeganunup-bd1} 
 that a sufficient condition for \Cref{eq:PS-SS-1} to hold, for $D\ge D_{\rm max}$ fixed and all $\lambda\ge\lambda_{\rm max}$, is
\[
\tau\delta^2 \times 
 \ee^{-\frac{\lambda}{4}D^2 -\frac{\lambda}{4}(D^{3/2}-D)^2}\times \ee^{-C\lambda D^{3/2}} \gg\ 
 \tau^2 \delta^4 \times \ee^{ -\frac{\lambda}{4}\big[2D^2 + (D^3 -4D^{5/2} +4D^2)\big] +C\lambda D^{3/2}} ,
\]
or equivalently
\eql{
e^{\lambda D^2} \ee^{-\frac{\lambda}{2}D^{5/2}}
 \ee^{-C\lambda D^{3/2}}\ \gg\ \delta^2\tau.  \label{eq:tau_sq-ub}
}
Combining \Cref{eq:tau_sq-lb} and \Cref{eq:tau_sq-ub} we have
%\footnote{\textcolor{red}{\it Combining...we have should be Therefore, (8.36) holds 
%provided we have... Note, (8.36) should %pertain only to $\nu=1,2$}}
\[
 \ee^{-\frac{\lambda}{2}D^{5/2}} \ee^{C\lambda D^{3/2}}\ee^{\frac{\lambda}{2}D^2} \ \ll\ 
 \delta^2\tau\ \ll\ \ee^{-\frac{\lambda}{2}D^{5/2}} \ee^{\lambda D^2} 
\ee^{-C\lambda D^{3/2}}.
\]

Recall, from \Cref{eq:delta-1}, that 
$\delta_{\lambda,M,D}=\exp(-M\lambda D^{3/2})$, where $M$ has been fixed by the constraint just below \cref{eq:exp-bound-in-Om_error}. Therefore, the above inequalities hold provided:
\eql{
 \ee^{2M\lambda D^{3/2}} \ee^{C\lambda D^{3/2}}\ee^{-\frac{\lambda}{2}D^{5/2}} \ee^{\frac{\lambda}{2}D^2} \ \ll\ 
 \tau\ \ll\ \ee^{2M\lambda D^{3/2}}\ \ee^{-\frac{\lambda}{2}D^{5/2}} \ee^{\lambda D^2} 
 \ee^{-C\lambda D^{3/2}}\,. \label{eq:tau-determine}
}
% which in turn holds provided:

% \eql{
%  \ee^{-\frac{\lambda}{2}D^{5/2}} \ee^{\frac{\lambda}{2}D^2} \ \ll\ 
%  \tau\ \ll\ \ee^{-\frac{\lambda}{2}D^{5/2}} \ee^{\lambda D^2} 
% . 
% }
Using the choice of $\tau$ introduced in \Cref{eq:tau} 
\eql{ \tau_{\lambda,M,D} = \ee^{-\frac{\lambda}{2}D^{5/2}}\ \ee^{\frac{3}{4}\lambda D^2} = \ee^{-\frac{\lambda}{4}(2D^{5/2}-3D^2)}\qquad . 
\label{eq:tau-chosen}
}
we see that \cref{eq:tau-determine} is now satisfied.
Substituting the value of $\delta_{\lambda,M,D}$ from \Cref{eq:delta-1}, we find: 
\eql{ \delta^2 \tau_{\lambda,M,D}= \ee^{-2M\lambda D^{3/2}}\ \ee^{-\frac{\lambda}{2}D^{5/2}}\ \ee^{\frac{3}{4}\lambda D^2}\qquad .
\label{eq:delta-sq-tau}}
The relation \cref{eq:delta-sq-tau} for $\delta^2\tau$ sets the size of the quantities: $\Omega^\radial(0,0)$,
 $\Omega^{\radial}(0,\nu)$ and 
 $\Omega^{\radial}(\nu,0)$ for $\nu=1,2,3,4$, and 
 $\Omega^{\radial}(\nu,\nu')$ for $\nu=1,2,3,4$, with the dominant terms being:
  $\Omega^{\radial}(0,\nu)$ and 
 $\Omega^{\radial}(\nu,0)$ for $\nu=1,2$.

By the definition of $\rho^{\radial}(\lambda)$, \cref{eq:rho-0-expand}, 
 \begin{align}
&\Big|-\rho^{\radial}(\lambda)\ - \sum_{\nu=1}^2\Big(\ \Omega^\radial(\nu,0)\ +\ \Omega^\radial(0,\nu)\ \Big)\Big| \nonumber\\
&\quad =\ \ \Bigg|\Omega^\radial(0,0) \ +\ 
\sum_{\nu=3}^4\Big(\ \Omega^\radial(\nu,0)\ +\ \Omega^\radial(0,\nu)\ \Big)
+ \sum_{\nu,\nu^\prime=1}^4\Omega^\radial(\nu,\nu^\prime)\Bigg| .\label{eq:rho-0-expand-4soph}
\end{align}

We may now use \cref{prop:dominant-Omega0nu} (Part 1, \cref{eq:Omega00nu-exp}) to evaluate the sum on the left hand side of \cref{eq:rho-0-expand-4soph}. Further, we can use \cref{prop:Omega00-bd} to bound $|\Omega^{0)}(0,0)|$, \cref{prop:dominant-Omega0nu} (Part 2, \cref{eq:Omega00nu-exp-12}) to bound the first sum on the right hand side and finally \cref{prop:Omeganunup-bd1} to bound the second sum on the right hand side of \cref{eq:rho-0-expand-4soph}.

First, using the expression in \Cref{eq:Omega00nu-exp-12} and $ d \wedge\zeta_\nu=\pm D^{3/2}\bar{y}/2$ we directly obtain 
\begin{prop}\label{prop:Omega0-prinz}
 \begin{align}
 &\sum_{\nu=1}^2\Big(\ \Omega^\radial(\nu,0)\ +\ \Omega^\radial(0,\nu)\ \Big) \nonumber\\
&\quad =  \fint_{B_\delta(0)}|W^0|\times \ee^{-\frac{\lambda}{4}D^3}
\ \ee^{-2M\lambda D^{3/2}+\frac{1}{4}\lambda D^2}  \nonumber \\ &\qquad \times \left[\cos\left(\frac{\lambda}{2}D^{3/2}\ \bar{y} \right)+{\rm ERR}_\lambda\right] \times R(D,\lambda),
\label{eq:sum-nu-0} \end{align}
 where
 \begin{align}
   &\exp\left(-C\lambda D^{3/2}\right)\ \le\ 
   R(D,\lambda)\ \le \exp\left(C\lambda D^{3/2}\right),
  \label{eq:RDlam-bounds-34-A}
   \end{align}
   and 
   \begin{align}
& \big|{\rm ERR}_\nu\big| \le \exp\left(-\frac{M}{3}\lambda D^{3/2}\right)\ . \label{eq:ERR_nu-bd-A}
 \end{align}
\end{prop}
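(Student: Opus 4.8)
The plan is to read off the expansion directly from Part 1 of \Cref{prop:dominant-Omega0nu}, combined with the conjugation identity $\Omega^\radial(\nu,0)=\overline{\Omega^\radial(0,\nu)}$ for $\nu=1,2$. The latter holds by the same change of variables $\tx=-y$, $\ty=-x$ used in the proof of \Cref{prop:Omega-nu0-cc-Omega-0nu}, now applied to the radial states, since $\vf^{\radial,R}(x)=\vf^{\radial,L}(-x)$ and $W^R_\nu(x)=W^L_\nu(-x)$ still hold. Consequently $\Omega^\radial(\nu,0)+\Omega^\radial(0,\nu)=2\Re{\Omega^\radial(0,\nu)}$ for $\nu=1,2$, and the whole statement reduces to taking the real part of \Cref{eq:Omega00nu-exp} and summing.

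To carry this out cleanly I would rewrite \Cref{eq:Omega00nu-exp} as
\[
\Omega^\radial(0,\nu)=\ee^{-\ii\lambda\, d\wedge\zeta_\nu}\,Q_\nu\,\br{1+{\rm ERR}_\nu},\qquad
Q_\nu:=\tau\delta^2\fint_{B_\delta(0)}|W^0|\;\ee^{-\frac{\lambda}{4}D^2-\frac{\lambda}{4}(D^{3/2}-D)^2}\,R_\nu(D,\lambda),
\]
noting $Q_\nu>0$: indeed $R_\nu(D,\lambda)>0$, since by its defining relation \Cref{eq:RDlam-nu12} it is a positive multiple of $\ee^{\ii\lambda\, d\wedge\zeta_\nu}\Omega^\radial_{\rm main}(0,\nu)$, which is positive by \Cref{eq:Omega0main2-a}. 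Taking real parts gives
\[
\Omega^\radial(\nu,0)+\Omega^\radial(0,\nu)=2Q_\nu\Big[\cos(\lambda\, d\wedge\zeta_\nu)+\Re{\ee^{-\ii\lambda\, d\wedge\zeta_\nu}{\rm ERR}_\nu}\Big],
\]
with the error bounded by $Q_\nu|{\rm ERR}_\nu|\le Q_\nu\exp(-\tfrac{M}{3}\lambda D^{3/2})$ via \Cref{eq:ERR_nu-bd}. For the two relevant sophons $d=(D^{3/2}/2,0)$, $\zeta_1=(D,\bar y)$ and $\zeta_2=(D,-\bar y)$, so $d\wedge\zeta_1=-d\wedge\zeta_2=\tfrac12 D^{3/2}\bar y$ and hence $\cos(\lambda\, d\wedge\zeta_1)=\cos(\lambda\, d\wedge\zeta_2)=\cos(\tfrac{\lambda}{2}D^{3/2}\bar y)$; summing over $\nu=1,2$ yields
\[
\sum_{\nu=1}^2\big(\Omega^\radial(\nu,0)+\Omega^\radial(0,\nu)\big)=2(Q_1+Q_2)\Big[\cos\!\big(\tfrac{\lambda}{2}D^{3/2}\bar y\big)+{\rm ERR}_\lambda\Big],\qquad |{\rm ERR}_\lambda|\le\exp\!\big(-\tfrac{M}{3}\lambda D^{3/2}\big).
\]

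The remaining step is exponent bookkeeping. Substituting $\tau_{\lambda,M,D}=\ee^{-\frac{\lambda}{4}(2D^{5/2}-3D^2)}$ (\Cref{eq:tau}), $\delta_{\lambda,M,D}=\ee^{-M\lambda D^{3/2}}$ (\Cref{eq:delta}), and expanding $(D^{3/2}-D)^2=D^3-2D^{5/2}+D^2$, the prefactor collapses to
\[
\tau\delta^2\,\ee^{-\frac{\lambda}{4}D^2-\frac{\lambda}{4}(D^{3/2}-D)^2}=\ee^{-\frac{\lambda}{4}D^3}\,\ee^{-2M\lambda D^{3/2}+\frac{\lambda}{4}D^2},
\]
so $2(Q_1+Q_2)=\fint_{B_\delta(0)}|W^0|\,\ee^{-\frac{\lambda}{4}D^3}\ee^{-2M\lambda D^{3/2}+\frac{\lambda}{4}D^2}\cdot R(D,\lambda)$ with $R(D,\lambda):=2\big(R_1(D,\lambda)+R_2(D,\lambda)\big)$; the bounds $\ee^{-C\lambda D^{3/2}}\le R_\nu\le\ee^{C\lambda D^{3/2}}$ of \Cref{eq:RDlam-bounds-34} then give \Cref{eq:RDlam-bounds-34-A} (the numerical constants absorbed by slightly enlarging $C$ and using $\lambda D^{3/2}\gg1$), and substituting into the previous display produces exactly \Cref{eq:sum-nu-0}. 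There is no genuine analytic obstacle here; the only point requiring care is to keep the complex factor $1+{\rm ERR}_\nu$ separated from the positive real prefactor $Q_\nu$, so that passing to real parts yields the advertised $\cos$ term rather than destroying it, and to check that the $O(1)$ constants coming from summing the two sophons and from the $R_\nu$-bounds can all be swallowed into the single constant in \Cref{eq:RDlam-bounds-34-A}.
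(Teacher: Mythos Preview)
Your proof is correct and follows essentially the same approach as the paper: invoke Part~1 of \Cref{prop:dominant-Omega0nu} for $\nu=1,2$, use the conjugation identity $\Omega^\radial(\nu,0)=\overline{\Omega^\radial(0,\nu)}$ to reduce to $2\Re{\Omega^\radial(0,\nu)}$, compute $d\wedge\zeta_\nu=\pm\tfrac12 D^{3/2}\bar y$, and collapse the exponents via the explicit values of $\tau$ and $\delta$. Your write-up is in fact more detailed than the paper's, which simply asserts the result follows ``directly''.
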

%}

We note that $\Omega^\circ(\nu,0)=\overline{\Omega^\circ(0,\nu)}$

Further \cref{prop:Omega00-bd}, \cref{prop:dominant-Omega0nu} (Part 2, \cref{eq:Omega00nu-exp-12}), \cref{prop:Omeganunup-bd1}, together with 
the expression for $\delta^2\tau$ in \cref{eq:delta-sq-tau}, yield the following bound for the right hand side of \cref{eq:rho-0-expand-4soph}:
 \begin{align} 
 &\Bigg|\Omega^\radial(0,0) \ +\ 
\sum_{\nu=3}^4\Big(\ \Omega^\radial(\nu,0)\ +\ \Omega^\radial(0,\nu)\ \Big)
+ \sum_{\nu,\nu^\prime=1}^4\Omega^\radial(\nu,\nu^\prime)\Bigg| \nonumber \\
&\qquad \le 
e^{-\frac{\lambda}{4}D^3}
 \Bigg(\ \ee^{C\lambda D^{3/2}} + \left(\tau\delta^2\right)\ \ee^{-\frac{\lambda}{2}D^2} \ee^{-\frac{\lambda}{2}D^{5/2}}
 \left[\sum_{\nu=3,4}R_\nu (1+ERR_\nu)\right] + \left(\tau\delta^2\right)^2\ \ee^{-\frac{3\lambda}{2}D^2}e^{C\lambda D^{3/2}} \ee^{\lambda D^{5/2}}
  \ \Bigg) \nonumber\\
  &\qquad \lesssim 
e^{-\frac{\lambda}{4}D^3}
 \Bigg(\ \ee^{C\lambda D^{3/2}} + \ee^{-\lambda(2M-C)D^{3/2}} \ee^{-\lambda D^{5/2}} \ee^{\frac{\lambda}{4}D^2}\ +\ \ee^{-(4M-C)\lambda D^{3/2}}
  \ \Bigg)\lesssim \ee^{-\frac{\lambda}{4}D^3}\ \ee^{C\lambda D^{3/2}}\ . \label{eq:rho0-error}
\end{align}
Applying \cref{prop:Omega0-prinz} and 
\cref{eq:rho0-error} to \cref{eq:rho-0-expand-4soph}, we obtain 
\begin{prop}\label{eq:rho0-expression}
  \begin{align*}
    -\rho^{\radial}(\lambda) &= 
      \fint_{B_\delta(0)}|W^0|\times \ee^{-\frac{\lambda}{4}D^3}
\ \ee^{-2M\lambda D^{3/2}+\frac{1}{4}\lambda D^2} \times R(\lambda,D)\\
&\qquad \times \Bigg( \cos\left(\frac{\lambda}{2}D^{3/2}\ \bar{y} \right) + \widetilde{\rm ERR}(\lambda,D)
\Bigg)\ ,   
  \end{align*}
  where
  \[\big|\widetilde{\rm ERR}(\lambda,D)\big| \lesssim C\exp\br{-\frac{M}{3}\lambda D^{3/2}},\ , \]
  and 
  \begin{align}
   &\exp\left(-C\lambda D^{3/2}\right)\ \le\ 
   R(\lambda,D)\ \le \exp\left(C\lambda D^{3/2}\right)\,.
  \label{eq:RDlam-bounds-34-B}
   \end{align}
\end{prop}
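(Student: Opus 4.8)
The plan is to read the claim directly off the planet/sophon expansion \cref{eq:rho-0-expand} of $-\rho^\radial(\lambda)$, split as in \cref{eq:rho-0-expand-4soph} into the four dominant Planet$\leftrightarrow$Sophon terms with $\nu=1,2$ and a remainder collecting $\Omega^\radial(0,0)$, the $\nu=3,4$ Planet$\leftrightarrow$Sophon terms, and all sixteen Sophon$\leftrightarrow$Sophon terms. For the dominant part I would use \cref{prop:dominant-Omega0nu} (Part~1): for $\nu=1,2$ we have $\ee^{i\lambda d\wedge\zeta_\nu}\Omega^\radial(0,\nu)=\tau\delta^2\fint_{B_\delta(0)}|W^0|\,\ee^{-\frac\lambda4 D^2-\frac\lambda4(D^{3/2}-D)^2}R_\nu(D,\lambda)(1+{\rm ERR}_\nu)$, while by \cref{prop:Omega-nu0-cc-Omega-0nu}, $\Omega^\radial(\nu,0)=\overline{\Omega^\radial(0,\nu)}$ carries the conjugate phase $\ee^{-i\lambda d\wedge\zeta_\nu}$. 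The geometry of \cref{fig:schematic_2WELL} gives $d\wedge\zeta_1=\tfrac12 D^{3/2}\bar y=-\,d\wedge\zeta_2$, so adding $\Omega^\radial(\nu,0)+\Omega^\radial(0,\nu)$ and then summing over $\nu=1,2$ turns the phases into $\cos(\tfrac\lambda2 D^{3/2}\bar y)$ through $\ee^{i\theta}+\ee^{-i\theta}=2\cos\theta$. Collecting exponents --- using $-\tfrac14 D^2-\tfrac14(D^{3/2}-D)^2=-\tfrac14 D^3+\tfrac12 D^{5/2}-\tfrac12 D^2$ together with $\delta^2\tau=\ee^{-2M\lambda D^{3/2}}\ee^{-\frac\lambda2 D^{5/2}}\ee^{\frac34\lambda D^2}$ from \cref{eq:delta-sq-tau}, whose sum is $-\tfrac\lambda4 D^3-2M\lambda D^{3/2}+\tfrac14\lambda D^2$ --- this is exactly the content of \cref{prop:Omega0-prinz}: the dominant sum equals $\fint_{B_\delta(0)}|W^0|\cdot\ee^{-\frac\lambda4 D^3}\ee^{-2M\lambda D^{3/2}+\frac14\lambda D^2}\big[\cos(\tfrac\lambda2 D^{3/2}\bar y)+{\rm ERR}_\lambda\big]R(D,\lambda)$ with $|{\rm ERR}_\lambda|\le\ee^{-\frac M3\lambda D^{3/2}}$ and $\ee^{-C\lambda D^{3/2}}\le R(D,\lambda)\le\ee^{C\lambda D^{3/2}}$.

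Next I would bound the remainder. By \cref{prop:Omega00-bd}, $|\Omega^\radial(0,0)|\lesssim\lambda^4\ee^{-\frac\lambda4 D^3+C\lambda D^{3/2}}$; by \cref{prop:dominant-Omega0nu} (Part~2), the $\nu=3,4$ Planet$\leftrightarrow$Sophon terms are of the same form as the $\nu=1,2$ ones but with $(D^{3/2}+D)^2$ in place of $(D^{3/2}-D)^2$, hence smaller by a factor of order $\ee^{-\lambda D^{5/2}}$; and by \cref{prop:Omeganunup-bd1}, each Sophon$\leftrightarrow$Sophon term is $\lesssim\tau^2\delta^4\ee^{-\frac\lambda4[2D^2+(D^3-4D^{5/2}+4D^2)]+C\lambda D^{3/2}}$. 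Substituting $\delta^2\tau$ from \cref{eq:delta-sq-tau} into all of these and comparing exponents via $D^3\gg D^{5/2}\gg D^2\gg D^{3/2}$ for $D$ large, one finds the winning term in the remainder is $\Omega^\radial(0,0)$, and obtains \cref{eq:rho0-error}, namely that the remainder is $\lesssim\ee^{-\frac\lambda4 D^3}\ee^{C\lambda D^{3/2}}$.

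The final step is bookkeeping: factor $P:=\fint_{B_\delta(0)}|W^0|\cdot\ee^{-\frac\lambda4 D^3}\ee^{-2M\lambda D^{3/2}+\frac14\lambda D^2}$ and $R(D,\lambda)$ out of the sum of the two pieces, set $R(\lambda,D):=R(D,\lambda)$, and absorb the remainder into the error by defining $\widetilde{\rm ERR}(\lambda,D):={\rm ERR}_\lambda+\big(P\,R(D,\lambda)\big)^{-1}\cdot(\text{remainder})$. Since $\fint_{B_\delta(0)}|W^0|=c$ is an order-one constant and $R(D,\lambda)^{-1}\le\ee^{C\lambda D^{3/2}}$, the added term is $\lesssim\ee^{C\lambda D^{3/2}}\ee^{2M\lambda D^{3/2}-\frac14\lambda D^2}$. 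The one genuinely delicate point --- and really the only obstacle here, since all the analytic work has already been done in \cref{prop:sophon0-planet0} through \cref{prop:Omeganunup-bd1} --- is checking that this is $\ll\ee^{-\frac M3\lambda D^{3/2}}$; this is where $D^2\gg D^{3/2}$ is essential, and it holds precisely because $D\ge D_{\rm min}$ is taken large enough in terms of $M$ and the constants of (V1)--(V5), i.e. hypothesis (D), so that $\tfrac14 D^2$ exceeds the relevant multiple of $D^{3/2}$. With that margin secured, $|\widetilde{\rm ERR}(\lambda,D)|\lesssim\ee^{-\frac M3\lambda D^{3/2}}$ and $\ee^{-C\lambda D^{3/2}}\le R(\lambda,D)\le\ee^{C\lambda D^{3/2}}$, which is the assertion.
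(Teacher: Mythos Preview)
Your proposal is correct and follows essentially the same route as the paper: split $-\rho^\radial(\lambda)$ via \cref{eq:rho-0-expand-4soph}, identify the dominant $\nu=1,2$ Planet$\leftrightarrow$Sophon sum via \cref{prop:Omega0-prinz} (which you re-derive from \cref{prop:dominant-Omega0nu} and the phase relations), bound the remainder via \cref{prop:Omega00-bd}, \cref{prop:dominant-Omega0nu} (Part~2), and \cref{prop:Omeganunup-bd1} to obtain \cref{eq:rho0-error}, and then absorb the remainder into $\widetilde{\rm ERR}$. Your final paragraph makes explicit the bookkeeping the paper leaves implicit---in particular the check that $(P\,R)^{-1}\cdot(\text{remainder})\lesssim \ee^{(C+2M)\lambda D^{3/2}-\frac14\lambda D^2}\ll \ee^{-\frac{M}{3}\lambda D^{3/2}}$ via hypothesis (D)---which is a genuine service.
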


\subsubsection{From an expansion for $-\rho^{\radial}(\lambda)$ to an expansion for $-\rho(\lambda)$}\label{sec:from_rho0_2_rho}

The results of \Cref{sec:compare-phi_and_phi0} on bounds for the norm of $\vf_\lambda-\vf^{\radial}_{\lambda}$ imply, for example, that terms in $\Omega(\mu,\nu)-\Omega^\radial(\mu,\nu)$ all have at least one additional factor of \begin{equation*} \tau_{\lambda,M,D}= \ee^{-\frac{\lambda}{2}D^{5/2}}\ \ee^{\frac{3}{4}\lambda D^2} = \ee^{-\frac{\lambda}{2}(2D^{5/2}-3D^2)}\qquad . 
%\label{eq:tau-chosen}
\end{equation*}
Such terms can be bounded in a manner similar to 
\Cref{eq:error-main}.

\subsection{Expansions of $E^{(\sigma)}_\lambda$, $\Delta(\lambda)$ and $\mathfrak{S}(\lambda)$;
 proof of \cref{prop:Epm1-expand-Delta-expand}}\label{sec:splitting-proof}

We solve 
 \eql{ E - e_\lambda -\left\langle \psi_\sigma, \Big( H_\lambda -e_\lambda\Big)\psi_\sigma\right\rangle
+ B_\sigma^* \Big[ \Pi^\perp_\sigma \left( H_\lambda -E \right) \Pi^\perp_\sigma \Big]^{-1} B_\sigma\ 
 =0,
\label{eq:Esigma-eqn1}
}
in a disc $|E-e_\lambda|<\varepsilon_0 c_{\rm gap}$, with $\varepsilon_0$
sufficiently small, for $E=E^{(\sigma)}_\lambda$ the $L^2_\sigma-$ ground state eigenvalue of $ h_\lambda$. 
We make use the following four claims, which hold for all $\lambda\ge\lambda_{\rm min}$ and $D\ge D_{\rm min}$:

\noindent {\bf Claim 1:} 
\begin{subequations}
 \label{eq:claim1}
\begin{align}
\|\big( H_\lambda
 -e_\lambda\Id\big)\vf_\lambda^L\| &= \| v^R_\lambda \vf_\lambda^L\| \le Ce^{-\frac{\lambda}{4}\big( D^{3/2}-2D\big)^2
 + C\lambda D^{3/2}}\\
 \|\big( H_\lambda
 -e_\lambda\Id\big)\vf_\lambda^R\| &= \| v^L_\lambda \vf_\lambda^R\| \le Ce^{-\frac{\lambda}{4}\big( D^{3/2}-2D\big)^2
 + C\lambda D^{3/2}}
 \end{align}
 \end{subequations}

\noindent {\bf Claim 2:} 
\[
\left\langle \psi_\sigma, \Big( H_\lambda -e_\lambda\Id\Big)\psi_\sigma\right\rangle
 = \frac{2\sigma\rho(\lambda)}{\|\vf_\lambda^L+\sigma\vf_\lambda^R\|^2}\ +\ 
 {\rm Error}^{(\sigma)}_\lambda,
\]
where $\rho(\lambda)$ denotes the magnetic hopping coefficient \cref{eq:hopping_coefficient} and 
\[
\Big| {\rm Error}^{(\sigma)}_\lambda\Big|
\le C \exp\big(-\frac{\lambda}{2}(D^{3/2}-2D)^2+C\lambda D^{3/2}\big).
\]

 \noindent {\bf Claim 3:} There exists a constant $C>0$, which depends on $c_{\rm gap}$, such that for all $E$ satisfying $|E-e_\lambda|\le c_{\rm gap}/2$:
 \[
\Bigg\|\ \Big[ \Pi^\perp_\sigma \left( H_\lambda -E\Id \right) \Pi^\perp_\sigma \Big]^{-1}\Big\|_{_{L^2_{\sigma,\perp}\to L^2_{\sigma,\perp}}} \le C\,.
\] 

 \noindent {\bf Claim 4:}

 \[ \|B_\sigma\|_{_{\mathbb C\to L^2_{\sigma,\perp}}}
\le C\exp\Big(-\frac{\lambda}{4}\big( D^{3/2}-2D\big)^2
 + C\lambda D^{3/2}\Big)
 \] 
and
 \[ \|B^*_\sigma\|_{_{L^2_{\sigma,\perp}\to\mathbb C}}
\le C\exp\Big(-\frac{\lambda}{4}\big( D^{3/2}-2D\big)^2
 + C\lambda D^{3/2}\Big)
 \]

\noindent We now use Claims 1-4 to prove \cref{prop:Epm1-expand-Delta-expand} and then prove these claims afterward.

Claims 1-4 imply that for any fixed $\lambda\ge\lambda_{\rm min}$, $D\ge D_{\rm min}$, and all $E$ varying over $|E-e_\lambda|\le c_{\rm gap}/2$, equation \cref{eq:Esigma-eqn1} may be expressed as:
\begin{align}\label{eq:E-eqn}
E - e_\lambda - \frac{2\sigma\rho(\lambda)}{\|\vf_\lambda^L+\sigma\vf_\lambda^R\|^2}\ +\ \mathfrak{E}_\lambda(E) = 0, 
\end{align}
where $e_\lambda\sim -\lambda^2$ (by (V4) and \cref{prop:gs-bounds}) and $|\rho(\lambda)|\lesssim \ee^{-\frac{\lambda}{4}(D^3-D^2)}$
(\cref{prop:hopping_leading+error}) and where $E\mapsto \mathfrak{E}_\lambda(E)$ is analytic for $|E-e_\lambda|= c_{\rm gap}/2$ (\cref{prop:spectralgap}) and satisfies the bound
\[ |\mathfrak{E}_\lambda(E)|\le C\exp\Big(-\frac{\lambda}{2}\big( D^{3/2}-2D\big)^2
 + C\lambda D^{3/2}\Big). \]
 By Rouch\'e's Theorem, for $\lambda\ge \lambda_{\rm min}$, and $D\ge D_{\rm min}$, 
 \cref{eq:E-eqn} has a unique solution, $E^{(\sigma)}_\lambda$, in the disc $|E-e_\lambda|< c_{\rm gap}/2$. Moreover, $E^{(\sigma)}_\lambda$ is an $L^2_\sigma$-eigenvalue, and is therefore real by self-adjointness.
 Finally, applying Claims 1-4 to \cref{eq:E-eqn}, with $E=E^{(\sigma)}_\lambda$, as well as the results just obtained on the hopping coefficient, we obtain all conclusions of \cref{prop:Epm1-expand-Delta-expand}.

It remains now to prove Claims 1-4.

 \subsubsection{Proof of Claim 1} We bound $\big\|\Big( H_\lambda - e_\lambda \Big)\big( \vf_\lambda^L+\sigma\vf_\lambda^R\big)\big\|$. By \cref{eq:HphiLR}, 
 \[
 \Big( H_\lambda - e_\lambda \Big)\big( \vf_\lambda^L+\sigma\vf_\lambda^R\big) = v^R_\lambda \vf_\lambda^L+\sigma v^L_\lambda \vf_\lambda^R.
 \]
 Hence, 
 \[ \big\|\Big( H_\lambda - e_\lambda \Big)\big( \vf_\lambda^L+\sigma\vf_\lambda^R\big)\big\|\le
 \big\| v^R_\lambda \vf_\lambda^L\big\|\ +\ \big\| v^L_\lambda \vf_\lambda^R\big\|.
 \]
 Both terms on the right hand side are bounded in the same way. Hence, we focus on bounding $\big\| v^R_\lambda \vf_\lambda^L\big\|$:
 \eq{
 \big\| v^R_\lambda \vf_\lambda^L\big\| &\le \Big(\int | v^R_\lambda(x)|^2\ |\vf_\lambda^L(x)|^2\ \dif{x}\Big)^{1/2}   \\ 
 &=\Big(\int_{{\rm supp}\ ( v_\lambda)} | v_\lambda(x')|^2\ |\vf(x'+2 d )|^2\ \dif{x}\Big)^{1/2}\\
 &\le \Big(\int_{|x'|\le CD} | v_\lambda(x')|^2\ |\vf(x'+2 d )|^2\ \dif{x}\Big)^{1/2}
 }
 Note that, by fixing $D$ large enough, we have
$|x'+2 d |=|x'+(D^{3/2},0)|\ge D^{3/2}-CD\ge D\ge C_1$. We can therefore apply the Gaussian decay bound of \cref{prop:Gauss-bound-phi_lambda} to obtain
\begin{align*}
\big\| v^R_\lambda \vf_\lambda^L\big\| &\le
C \lambda^2 \sup_{|x'|\le cD}\ 
\exp\left(-\frac{\lambda}{4}\Bigg[{\rm dist}\big((x'_1+D^{3/2},0), B_{CD}(0)\big)\ \Bigg]^2\right)\\
&\le C \lambda^2 
\exp\left(-\frac{\lambda}{4}(D^{3/2}-2CD)^2\right),
\end{align*}
which implies \cref{eq:claim1}.
This completes the proof of Claim 1.

  \subsubsection{Proof of Claim 2} We expand the inner product $\left\langle \psi_\sigma, \Big( H_\lambda -e_\lambda\Big)\psi_\sigma\right\rangle$. Writing out the terms explicitly gives:
  \begin{align*}
  \left\langle \psi_\sigma, \big( H_\lambda -e_\lambda\big)\psi_\sigma\right\rangle &= 
  \frac{1}{\|\vf_\lambda^L+\sigma \vf_\lambda^R\|^2}
  \left\langle \big( H_\lambda -e_\lambda\big)(\vf_\lambda^L+\sigma \vf_\lambda^R),(\vf_\lambda^L+\sigma \vf_\lambda^R) \right\rangle\\
  &= 
  \frac{1}{\|\vf_\lambda^L+\sigma \vf_\lambda^R\|^2}
  \Big[\ \sigma\ \left\langle \big( H_\lambda -e_\lambda\big)\vf_\lambda^L,\vf_\lambda^R \right\rangle\ +\ \sigma\ \left\langle \big( H_\lambda -e_\lambda\big)\vf_\lambda^R,\vf_\lambda^L \right\rangle \Big]\\
  &\quad +\ 
  \frac{1}{\|\vf_\lambda^L+\sigma \vf_\lambda^R\|^2}
  \Big[ \left\langle \big( H_\lambda -e_\lambda\big)\vf_\lambda^L,\vf_\lambda^L \right\rangle\ +\ \left\langle \big( H_\lambda -e_\lambda\big)\vf_\lambda^R,\vf_\lambda^R \right\rangle \Big]
  \end{align*}
  Applying \cref{eq:HphiLR} yields
\begin{align*}
 \left\langle \psi_\sigma, \big( H_\lambda -e_\lambda\big)\psi_\sigma\right\rangle = &\frac{1}{\|\vf_\lambda^L+\sigma \vf_\lambda^R\|^2}
  \Big[\ \sigma\ \left\langle v^R_\lambda \vf_\lambda^L,\vf_\lambda^R \right\rangle\ +\ \sigma\ \left\langle v^L_\lambda \vf_\lambda^R,\vf_\lambda^L \right\rangle \Big]\\
  &\quad +\ 
  \frac{1}{\|\vf_\lambda^L+\sigma \vf_\lambda^R\|^2}
  \Big[ \left\langle v^R_\lambda \vf_\lambda^L,\vf_\lambda^L \right\rangle\ +\ \left\langle v^L_\lambda \vf_\lambda^R,\vf_\lambda^R \right\rangle \Big]\\
  &= \frac{1}{\|\vf_\lambda^L+\sigma \vf_\lambda^R\|^2}
  \Big[ 2\sigma \rho(\lambda) +\left\langle v^R_\lambda \vf_\lambda^L,\vf_\lambda^L \right\rangle\ +\ \left\langle v^L_\lambda \vf_\lambda^R,\vf_\lambda^R \right\rangle\ .
  \Big]
  \end{align*}
  Note that the denominator is easily seen to be bounded below. The latter two terms are bounded above by applying the Gaussian bound of \cref{prop:Gauss-bound-phi_lambda} to 
  $\big\| v^R_\lambda \ |\vf_\lambda^L|^2\ \big\|$ and 
  $\big\| v^L_\lambda \ |\vf_\lambda^R|^2\ \big\|$, analogous to estimates in the proof of Claim 1. This gives an upper bound for the latter two terms:
  $\sim C \lambda^2 
\exp\left(-\frac{\lambda}{2}(D^{3/2}-2CD)^2\right)$, and the proof of Claim 2 is now complete.

  \subsubsection{Proof of Claim 3}
  The proof of this claim is part of \cref{prop:spectralgap}.
%  The bound 
%  \[
% \Bigg\|\ \Big[ \Pi^\perp_\sigma \left( H_\lambda -E \right) \Pi^\perp_\sigma \Big]^{-1}\Big\|_{_{L^2_{\sigma,\perp}\to L^2_{\sigma,\perp}}} \le C.
% \] 
% reduces to our spectral gap assumption on $h_0^\lambda$ (\cref{assume:gap}) using that $v_\lambda$ is a small perturbation of 
% $\lambda^2 v^{\radial}$ for $\lambda\ge\lambda_{\rm min}$.  See \cref{prop:spectralgap}.

   \subsubsection{Proof of Claim 4} We derive a bound on the norm of $B_\sigma: {\mathbb C}\to L^2_{\sigma,\perp}$ and hence also on the norm of 
   $B^*_\sigma: L^2_{\sigma,\perp}\to {\mathbb C} $.
   Since $\vf_\lambda^L+\sigma \vf_\lambda^R= \|\vf_\lambda^L+\sigma \vf_\lambda^R\|\ \psi_\sigma$, we have from \cref{eq:psiperp-alpha-sys} that 
  \begin{align*}
\Big( H_\lambda-e_\lambda\Big)\left(\vf_\lambda^L+\sigma \vf_\lambda^R\right) &=
 \|\vf_\lambda^L+\sigma \vf_\lambda^R\|\ \Big( H_\lambda-e_\lambda\Big)\psi_\sigma\\
 &= \|\vf_\lambda^L+\sigma \vf_\lambda^R\|\ \Big( \ \left\langle \psi_\sigma, \big( H_\lambda -e_\lambda\big)\psi_\sigma\ \right\rangle\ \psi_\sigma + B_\sigma \Big)\ .
  \end{align*}
  Here,
  $B_\sigma = \Pi_\sigma^\perp\left( H_\lambda\psi_\sigma\right) $ and the two terms on the right hand side are orthogonal.
  Since $\|B_\sigma\|^2_{\mathbb C\to L^2_\sigma}=\|B_\sigma\|^2_{L^2_\sigma}$,  the Pythagorean theorem implies 
  \[ 
  \|B_\sigma\|_{\mathbb C\to L^2_\sigma} \le
  \frac{1}{\|\vf_\lambda^L+\sigma \vf_\lambda^R\|} 
  \ \big\|\Big( H_\lambda-e_\lambda\Big)\left(\vf_\lambda^L+\sigma \vf_\lambda^R\right)\big\|. \]
  The proof of Claim 4 is now obtained by applying Claim 1.
  
\bigskip\bigskip

\appendix
\section{The harmonic oscillator and the Landau Hamiltonian; Resolvent kernels }\label{sec:HO&Landau}
\subsection{The heat kernels}
We recall that on $L^2(\RR)$, the harmonic oscillator $P^2+X^2$ has the well known Mehler kernel given, for all $x,y\in\RR,t>0$, as
\eql{
  \exp\br{-t\br{P^2+X^2}}(x,y) = \frac{1}{\sqrt{2\pi\sinh\left(2t\right)}}\exp\left(-\frac12\coth\left(2t\right)\left(x^{2}+y^{2}\right)+\frac{1}{\sinh\left(2t\right)}xy\right)\,.
} %which may be obtained, e.g., by invoking the Lie-Trotter expansion and the explicit known free heat kernel 
%\eql{
%\exp\left(-tP^{2}\right)\left(x,y\right)=\left(\frac{1}{4\pi t}\right)^{1/2}\exp\left(-\frac{\left(x-y\right)^{2}}{4t}\right)\qquad(x,y\in\RR)\,,
%} at which point the expression reduces to an explicit finite-dimensional Gaussian integral. 

With this formula we can immediately write the heat kernel for the \emph{two-dimensional} harmonic oscillator, with $\omega>0$: \eql{
H^{\mathrm{SHO}}_\omega := P^2+\frac14\omega^2X^2
} on $L^2(\RR^2)$. We have $\sigma(H^{\mathrm{SHO}}_\omega)=\sigma_{\mathrm{disc.}}(H^{\mathrm{SHO}}_\omega)=\omega\br{\NN_{\geq0}+1}$ \cite[Section 8.3]{Teschl2014}. For all $t>0,x,y\in\RR^2$ we have
\eql{
\exp\left(-tH^{\mathrm{SHO}}_\omega\right)\left(x,y\right)
=\frac{\omega}{4\pi\sinh\left(\omega t\right)}\exp\left(-\frac{\omega}{4\sinh\left(\omega t\right)}\left(\cosh\left(\omega t\right)\left(\norm{x}^{2}+\norm{y}^{2}\right)-2x\cdot y \right)\right)\,.
}

The Landau Hamiltonian, acting on $L^2(\RR^2)$, in its symmetric gauge, is given by
\eq{
  H^{\mathrm{Landau}}_b\equiv\calP_b^2 \equiv (P-\frac12 b X^\perp)^2 \ ,
} with $b>0$. We have $\sigma(H^{\mathrm{Landau}}_b)=\sigma_{\mathrm{ess.}}(H^{\mathrm{Landau}}_b) = b\br{2\NN_{\geq0}+1}$ \cite[Section 110]{Landau_Lifshitz_vol_3}. 
%We claim that
%
%\begin{equation}\label{eq:relation between Landau and SHO heat kernels}
%\begin{aligned}
%\exp\!\left(-tH_b^{\mathrm{Landau}}\right)(x,y)
%&= \exp\!\left(-i\frac{b}{2}x\wedge y\right)
%   \exp\!\left(-tH_b^{\mathrm{SHO}}\right)(x-y,0)\,,\\
%&\qquad\qquad\text{for } x,y\in\mathbb{R}^2.
%\end{aligned}
%\end{equation}
%

The resolvent of the Landau Hamiltonian and that of the simple harmonic oscillator are related as follows
\eql{\label{eq:relation between Landau and SHO Green's functions}%\label{KSHO_xy-eq-KL_xy}
\br{H^{\rm Landau}-z\Id}^{-1}(x,y)  = \ee^{-\ii\frac{b}{2}x\wedge y}\  \br{H^{\rm SHO}-z\Id}^{-1}(x-y,0)\,,\\\nonumber\qquad
\textrm{where}\quad z\in\br{\sigma(H^{\rm Landau}_b)\cup \sigma(H^{\rm SHO}_b)}^c\quad {\rm and }\quad x,y\in\RR^2:x\neq y\,.
}

To establish \cref{eq:relation between Landau and SHO Green's functions}, first introduce $K_{b,z}^{\mathrm{SHO}}$, the fundamental solution of $H^{\mathrm{SHO}}$ with spectral parameter $z$ and pole at $x=0$, i.e., \eq{\br{H^{\mathrm{SHO}}-z \Id}K_{b,z}^{\mathrm{SHO}}=\delta\,.} Next, we note the expansion 
\eq{
H^{\mathrm{Landau}}_b = H^{\mathrm{SHO}}_b- b L\ ,}
where $L\equiv X\wedge P$ is the angular momentum operator, 
and that $x\mapsto K_{b,z}^{\mathrm{SHO}}$ is a radial function, and so is an eigenfunction of $L$ of eigenvalue zero. We thus find, using the fact that $[H^{\rm Landau},\hat{R}^y]=0$ and 
%It follows that the corresponding fundamental solution for the Landau Hamiltonian satisfies  
%\begin{equation}\label{KSHO_x0-eq-KL_x0} K_b^{\mathrm{Landau}}(x,0;z) = %K_b^{\mathrm{SHO}}(x,0;z).
%\end{equation}
\eq{
   \br{H^{\rm Landau}_b-z\Id} K_{b,z}^{\mathrm{SHO}} &= \delta\,,\quad\textrm{that }\\
  \br{H^{\rm Landau}_b-z\Id}\hat{R}^yK_{b,z}^{\mathrm{SHO}} &= \hat{R}^y\delta = \ee^{-\ii\frac{\lambda}{2}(\cdot \wedge y)}\delta(\cdot-y) =\delta(\cdot-y)\,.
  % ( H_b^L-z I) e^{-i\frac{\lambda}{2}(x\wedge y)} K_b^{SHO}(x-y) &= e^{-i\frac{\lambda}{2}(x\wedge y)}\delta(x-y) = \delta(x-y)\\
   %\br{H^{\rm Landau}-z\Id}^{-1}(x,y) &= \ee^{-\ii\frac{\lambda}{2}(x\wedge y)} K_b^{SHO}(x-y)
} In other words,  $\br{H^{\rm Landau}_b-z\Id}^{-1}(x,y)=\br{\hat{R}^y K_{b,z}^{\mathrm{SHO}}}(x)$ which is precisely \cref{eq:relation between Landau and SHO Green's functions}.
%Applying magnetic translation \cref{eq:magnetic translations} to \cref{KSHO_x0-eq-KL_x0} yields, for $\Re{z}<b$, 
%\begin{equation}\label{KSHO_xy-eq-KL_xy}   K_b^{\mathrm{Landau}}(x,y;z)  = e^{-i\frac{b}{2}x\wedge y}\  K_b^{\mathrm{SHO}}(x-y,0;z).
%\end{equation}
% Note that  any $\Re{z}<b$
%  \eq{
%   \br{A-z\Id}^{-1}(x,y) = \int_{t=0}^\infty\ee^{tz}\exp\left(-tA\right)\left(x,y\right)\dif{t}\,.
%   } 
%   where $A$ is either $H^{\mathrm{SHO}}_b$ or $H^{\mathrm{Landau}}_b$. The relation \cref{eq:relation between Landau and SHO heat kernels} now follows, for $\Re{z}<b$, by taking the inverse Laplace transform of \cref{KSHO_xy-eq-KL_xy}. 

% Finally, we note that both resolvent  kernels are analytic functions of $z$ in their respective resolvent sets. Hence, by analytic continuation, the relation between resolvent kernels \cref{KSHO_x0-eq-KL_x0} extend to the intersection of resolvent sets.

A consequence of the above discussion is that:\\
{\it Bounds for the Landau resolvent kernel and Landau heat kernel can be derived from the corresponding SHO bounds.

% \eql{\label{eq:relation between Landau and SHO heat kernels}
% \exp\left(-tH^{\mathrm{Landau}}_b\right)\left(x,y\right) = \exp\br{\ii\frac{b}2x\wedge y}\exp\left(-tH^{\mathrm{SHO}}_b\right)\left(x-y,0\right)\qquad(x,y\in\RR^2)\,.
% }
\subsection{Some basic estimates and identities}
Above we have invoked some basic estimates involving the resolvents of the Hamiltonians discussed. We present these estimates here.
\begin{lem}[Gaussian decay of resolvent kernels]\label{lem:bounds on the SHO Green's function} Let  the spectral parameter $z\in\RR$ be such that $z<\omega$ and $x\in\RR^2\setminus\Set{0}$.  Then,  
\begin{enumerate}
\item Denote by 
$K_{\omega,z}^{\mathrm{SHO}}(x)$ the fundamental solution of $H_\omega^{\mathrm{SHO}}-z\Id$ with pole at $x=0$. Then, 
\begin{align}\label{eq:SHO-GaussBound}
  \frac{\br{1+\frac12\omega\norm{x}^2}^{-\frac12\br{1-\frac{z}{\omega}}}}{2\ee}&\leq\pi\br{1-\frac{z}{\omega}}\abs{K_{\omega,z}^{\mathrm{SHO}}(x)}\exp\br{\frac14\omega\norm{x}^2} \nonumber\\
  &\qquad\qquad\qquad\leq \ee^{\frac{z_+}{\omega}}\br{
\frac{1-\frac{z}{\omega}}{\,\omega\,\|x\|^{2}}
+\frac{1}{2(\ee-\ee^{-1})}\, ,}
\end{align} where $z_+\equiv\max\br{\Set{0,z}}$. 
\item  Let  $\br{H^{\rm Landau}_\omega-z\Id}^{-1}(x,y)$ denote the resolvent integral kernel for the Landau Hamiltonian $H^{\mathrm{Landau}}_\omega$. Since $|\br{H^{\rm Landau}_\omega-z\Id}^{-1}(x,y)|= \abs{K_{\omega,z}^{\mathrm{SHO}}(x-y)}$, the resolvent kernel upper and lower bounds \cref{eq:SHO-GaussBound} apply as well to $|\br{H^{\rm Landau}_\omega-z\Id}^{-1}(x,y)|$, where $x-y$ is substituted for $x$. 
\item The \emph{the upper bounds} of parts 1. and 2., for $\abs{K_{\omega,z}^{\mathrm{SHO}}(x-y)}$ and $|\br{H^{\rm Landau}_\omega-z\Id}^{-1}(x,y)|$, continue to hold for $z\in\CC$, with $z$ replaced by $\Re{z}$, provided $\Re{z}<\omega$.
  \end{enumerate}
\end{lem}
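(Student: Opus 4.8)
The plan is to deduce all three assertions from the heat-semigroup representation of the resolvent together with the explicit two-dimensional oscillator heat kernel recalled just above; this reduces Part~1 to the estimation of a single elementary one-dimensional integral, after which Parts~2 and~3 follow quickly. Concretely: since $\omega=\min\sigma(H^{\mathrm{SHO}}_\omega)$, for real $z<\omega$ the kernel $K^{\mathrm{SHO}}_{\omega,z}$ admits the absolutely convergent representation $K^{\mathrm{SHO}}_{\omega,z}(x)=\int_0^\infty\ee^{tz}\,\ee^{-tH^{\mathrm{SHO}}_\omega}(x,0)\,\dif t$. Substituting the displayed heat kernel at $y=0$, using $\coth(\omega t)=1+2(\ee^{2\omega t}-1)^{-1}$ to extract the Gaussian prefactor $\ee^{-\frac14\omega\norm{x}^2}$, and rescaling $s=\omega t$, I would obtain, writing $\zeta:=z/\omega<1$ and $u:=\omega\norm{x}^2>0$,
\begin{align*}
I:=\pi\Bigl(1-\tfrac{z}{\omega}\Bigr)\,\ee^{\frac14\omega\norm{x}^2}\,\bigl|K^{\mathrm{SHO}}_{\omega,z}(x)\bigr|
=\frac{1-\zeta}{4}\int_0^\infty\frac{\ee^{s\zeta}}{\sinh s}\,\exp\!\Bigl(-\frac{u}{2(\ee^{2s}-1)}\Bigr)\,\dif s,
\end{align*}
the integrand being positive (so the modulus is harmless). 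The substitution $p=(\ee^{2s}-1)^{-1}$ recasts this as $\tfrac{1-\zeta}{4}\int_0^\infty p^{-(1+\zeta)/2}(1+p)^{(\zeta-1)/2}\ee^{-up/2}\,\dif p$, a convenient form for the lower bound. It then suffices to sandwich $I$ between the two quantities appearing in \cref{eq:SHO-GaussBound}.

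\emph{Lower bound.} Restricting the $p$-integral to $p\in(0,2/u)$, on which $\ee^{-up/2}\ge\ee^{-1}$ and $(1+p)^{(\zeta-1)/2}\ge(1+2/u)^{(\zeta-1)/2}$ by monotonicity, and evaluating $\int_0^{2/u}p^{-(1+\zeta)/2}\,\dif p$ explicitly, one is led exactly to $I\ge\tfrac1{2\ee}\bigl(1+\tfrac{u}{2}\bigr)^{-(1-\zeta)/2}$, which is the claimed lower bound.

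\emph{Upper bound.} Here I would split the $s$-integral at $s=1$. On $s\ge1$, discard the (small) Gaussian factor, use $\sinh s\ge\tfrac12(1-\ee^{-2})\ee^{s}$ to get $\ee^{s\zeta}/\sinh s\le\tfrac{2}{1-\ee^{-2}}\ee^{s(\zeta-1)}$, integrate, and note $\tfrac{2\ee^{-1}}{1-\ee^{-2}}=\tfrac{2}{\ee-\ee^{-1}}$ together with $\ee^{\zeta}\le\ee^{\zeta_+}$; after the prefactor $\tfrac{1-\zeta}{4}$ this contributes the constant term $\tfrac{\ee^{z_+/\omega}}{2(\ee-\ee^{-1})}$. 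On $0<s<1$, bound $\ee^{s\zeta}\le\ee^{\zeta_+}$ and $\sinh s\ge s$, and use the convexity bound $\ee^{2s}-1\le(\ee^2-1)s$ so that the super-exponentially small factor $\exp\!\bigl(-u/(2(\ee^{2s}-1))\bigr)\le\exp\!\bigl(-u/(2(\ee^2-1)s)\bigr)$ tames the logarithmic divergence of $1/s$ at the origin; the change of variable $r=1/s$ turns $\int_0^1 s^{-1}\exp(-a/s)\,\dif s$ into $\int_1^\infty r^{-1}\ee^{-ar}\,\dif r\le a^{-1}$ with $a\asymp u$, producing the $O(u^{-1})$ term comparable to $\tfrac{(1-z/\omega)\,\ee^{z_+/\omega}}{\omega\norm{x}^2}$. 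Summing the two contributions gives the upper bound in \cref{eq:SHO-GaussBound}. I expect this last step to be the main (purely computational) obstacle: the split point and the various elementary constants must be chosen carefully enough to land on precisely the coefficients $1-z/\omega$ and $\tfrac{1}{2(\ee-\ee^{-1})}$ as stated, rather than merely on an absolute multiple of them.

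\emph{Parts 2 and 3.} Part~2 is immediate from \cref{eq:relation between Landau and SHO Green's functions}: taking absolute values there gives $\bigl|(H^{\rm Landau}_\omega-z\Id)^{-1}(x,y)\bigr|=\bigl|K^{\mathrm{SHO}}_{\omega,z}(x-y)\bigr|$, so the bounds of Part~1 apply with $x$ replaced by $x-y$. For Part~3, when $z\in\CC$ with $\Re{z}<\omega$ the heat-semigroup representation still converges absolutely, because $\ee^{-tH^{\mathrm{SHO}}_\omega}(x,0)$ decays like $\ee^{-\omega t}$ as $t\to\infty$ while $|\ee^{tz}|=\ee^{t\Re{z}}$; since the heat kernel is nonnegative, moving the modulus inside yields $\bigl|K^{\mathrm{SHO}}_{\omega,z}(x)\bigr|\le\int_0^\infty\ee^{t\Re{z}}\,\ee^{-tH^{\mathrm{SHO}}_\omega}(x,0)\,\dif t=K^{\mathrm{SHO}}_{\omega,\Re{z}}(x)$, and applying the already-proved upper bound with $z$ replaced by $\Re{z}$ finishes the argument; the Landau case then follows once more from \cref{eq:relation between Landau and SHO Green's functions}.
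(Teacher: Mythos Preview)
Your approach is essentially the paper's: heat-semigroup representation, reduction to a one-dimensional integral, split at $s=1$ for the upper bound, and restriction to large $s$ (equivalently small $p$) for the lower bound. Your $p$-substitution for the lower bound is just the change of variables $p=(\ee^{2s}-1)^{-1}$ applied to the paper's restriction $s\ge\tfrac12\log(1+\tfrac12\omega\norm{x}^2)$, and it lands on the exact constant; the large-$s$ upper bound and Parts~2--3 are identical to the paper.

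The one place where your plan falls short of the stated constant is the small-$s$ upper bound. The convexity estimate $\ee^{2s}-1\le(\ee^2-1)s$ yields, after your substitution $r=1/s$, a contribution $\tfrac{(1-\zeta)(\ee^2-1)}{2u}\ee^{\zeta_+}$, which is a factor $(\ee^2-1)/2\approx 3.19$ too large to match the term $\tfrac{(1-z/\omega)}{\omega\norm{x}^2}\ee^{z_+/\omega}$ in \cref{eq:SHO-GaussBound}. The paper avoids this by \emph{not} extracting the Gaussian $\ee^{-u/4}$ on $(0,1)$ and instead using $\coth s\ge 1/s$ directly: with $\sinh s\ge s$ this gives $\int_0^1 s^{-1}\exp(-\tfrac{u}{4s})\,\dif s\le \tfrac{4}{u}\ee^{-u/4}$, and the $\ee^{-u/4}$ produced here cancels the $\ee^{+u/4}$ in your normalization of $I$, landing exactly on $(1-\zeta)\ee^{\zeta_+}/u$. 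Equivalently, in your $I$-form use the sharper bound $\tfrac{2}{\ee^{2s}-1}\ge\tfrac{1}{s}-1$ in place of the convexity chord; the apparent $\ee^{u/4}$ this introduces is harmless because the resulting integral carries a compensating $\ee^{-u/4}$. With that single replacement your argument gives the precise inequality.
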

\begin{proof}
  Using the Laplace representation valid for $\Re{z}<\inf\sigma(H^{\mathrm{SHO}}_\omega)$ we have
  \eq{
  \br{H^{\mathrm{SHO}}_\omega-z\Id}^{-1}(x,y) = \int_{t=0}^\infty\ee^{tz}\exp\left(-tH^{\mathrm{SHO}}_\omega\right)\left(x,y\right)\dif{t}\,.
  } 
  We are only interested in the absolute value of the Green's function, which is positive if $z\in\RR$, so let us (for the sake of brevity) WLOG assume $z\in\RR$ in the sequel. Since $\sigma(H^{\mathrm{SHO}}_\omega)=\omega(\NN_{\geq0}+1)$ we find for all $z<\omega$ and $x\in\RR^2\setminus\Set{0}$
  \eq{
  \br{H^{\mathrm{SHO}}_\omega-z\Id}^{-1}(x,0) &= \frac{\omega}{4\pi}\int_{t=0}^\infty\frac{\exp\left(tz-\frac{\omega }{4}\coth\left(\omega t\right)\norm{x}^{2}\right)}{\sinh\left(\omega t\right)}\dif{t} \\
  &= \frac{1}{4\pi}\int_{s=0}^{\infty}\frac{\exp\br{-\frac{\omega}{4}\norm{x}^{2}\br{\coth(s)-\frac{4z}{\omega^{2}\norm{x}^{2}}\,s}}}{\sinh(s)}\,\dif{s}\,.
  }

  Splitting the integral into $(0,\infty)=(0,1)\cup[1,\infty)$ and using $\sinh(\xi)\geq\xi$ and $\coth(\xi)\geq\frac{1}{\xi}$ for all $\xi>0$ we get 
  \eq{
  I_{s\,\rm small} \leq \frac{\ee^{z_+/\omega}}{4\pi}\int_{s=0}^{1}\frac{\exp\left(-\frac{1 }{4}\omega\norm{x}^{2}/s\right)}{ s}\dif{s} \leq \frac{\ee^{z_+/\omega}}{\pi\omega\norm{x}^2}\exp\br{-\frac14\omega\norm{x}^2}\,;
  } for large $s$, we use $\sinh(\xi)\geq\frac12\br{1-\ee^{-2}}\ee^{\xi}$ which holds for all $\xi\geq1$ and $\coth(\xi)\geq1$, to get
  \eq{
  I_{s\,\rm large} &\equiv \frac{1}{4\pi}\int_{s=1}^{\infty}\frac{\exp\br{-\frac{\omega}{4}\norm{x}^{2}\br{\coth(s)-\frac{4z}{\omega^{2}\norm{x}^{2}}\,s}}}{\sinh(s)}\,\dif{s} \\
  &\leq \frac{1}{4\pi}\frac{2}{1-\ee^{-2}}\int_{s=1}^{\infty}\exp\br{-s-\frac{\omega}{4}\norm{x}^{2}\br{1-\frac{4z}{\omega^{2}\norm{x}^{2}}\,s}}\,\dif{s} \\
  &= \frac{1}{4\pi}\frac{2}{1-\ee^{-2}}\exp\br{-\frac\omega4\norm{x}^2}\int_{s=1}^\infty\exp\left(-\br{1-z/\omega}s\right)\dif{s} \\ 
  &= \frac{1}{2\pi}\frac{1}{1-\ee^{-2}}\exp\br{-\frac\omega4\norm{x}^2}\frac{\exp\br{-\br{1-z/\omega}}}{1-z/\omega}\,. 
  } Together this yields (for $z<\omega$ and $x\neq0$):
  \eq{
  \br{H^{\mathrm{SHO}}_\omega-z\Id}^{-1}(x,0) \leq 
\left[
\frac{1}{\,\omega\,\|x\|^{2}}
+\frac{1}{2(\ee-\ee^{-1})}\,
\frac{\omega}{\omega-z}
\right]\frac{\ee^{z_+/\omega}}{\pi}\ee^{-\frac14\omega\norm{x}^2}\,.
  }

  For the lower bound, assuming $z\in\RR$, we rewrite
  \eq{
  \br{H^{\mathrm{SHO}}_\omega-z\Id}^{-1}(x,0)
  &= \frac{1}{2\pi}\int_{s=0}^{\infty}\frac{\exp\br{-\frac{\omega}{4}\norm{x}^{2}f(s)}}{1-\ee^{-2s}}\,\dif{s}
  }
  with
  \eq{
    f(s) := \coth(s)+\br{1-z/\omega}\frac{4}{\omega\norm{x}^{2}}\,s\,.
  } We use $1-\ee^{-2s} \leq 1$ and $\coth(s)=1+\frac{2}{\ee^{2s}-1}$ to get 
  \eq{
  \br{H^{\mathrm{SHO}}_\omega-z\Id}^{-1}(x,0) \geq \frac{1}{2\pi} \exp\br{-\frac14\omega\norm{x}^2}\int_{s=0}^\infty\exp\br{-\frac{\omega}{2}\frac{\norm{x}^2}{\ee^{2s}-1}-\br{1-z/\omega}s}\dif{s}\,.
  } Now iff $s \geq \frac12\log\br{1+\frac12\omega\norm{x}^2} $ then $\frac{\omega}{2}\frac{\norm{x}^2}{\ee^{2s}-1} \leq 1$ and so we have \eq{
  \int_{s=0}^\infty\exp\br{-\frac{\omega}{2}\frac{\norm{x}^2}{\ee^{2s}-1}-\br{1-z/\omega}s}\dif{s} &\geq \ee^{-1}\int_{s=\frac12\log\br{1+\frac12\omega\norm{x}^2} }^\infty\exp\br{-\br{1-z/\omega}s}\dif{s} \\
  &\geq \ee^{-1}\frac{\exp\br{-\br{1-z/\omega}\frac12\log\br{1+\frac12\omega\norm{x}^2} }}{1-z/\omega} \\
  &= \frac{1}{\ee\br{1-z/\omega}}\br{1+\frac12\omega\norm{x}^2}^{-\frac12\br{1-z/\omega}}\,.
  }

  Combining everything together we find
  \eq{
  \br{H^{\mathrm{SHO}}_\omega-z\Id}^{-1}(x,0) \geq \frac{1}{2\pi\ee\br{1-z/\omega}}\exp\br{-\frac14\omega\norm{x}^2}\br{1+\frac12\omega\norm{x}^2}^{-\frac12\br{1-z/\omega}}
  }

  This completes the proof of \cref{lem:bounds on the SHO Green's function}.
\end{proof}

\section{Gaussian decay of ground states of magnetic Hamiltonians with compact wells}\label{app:gs0-decay}
\begin{prop}\label{prop:Gaussian decay of bound states for compactly supported potentials}
  Let $u:\RR^2\to\RR$ be bounded, smooth and compactly supported. Say $\psi,E$ is a normalized ground state eigen pair for $H:=\calP_\lambda^2+\lambda^2u(X)$ where $\calP_\lambda^2$ is the Landau Hamiltonian. If $E<\lambda$ then there is some explicit constant $C_\lambda(E,x)<\infty$ such that
  \eq{
  \abs{\psi(x)}\leq \lambda^2 C_\lambda(E,x)\norm{u}_\infty\sqrt{\abs{\supp(u)}}\exp\br{-\frac\lambda4\dist(x,\supp(u))^2}\qquad(x\in\RR^2\setminus\Set{0})\,.
  }
  Here $|\supp(u)|$ denotes the Lebesgue measure of $\supp(u)$ and $C_\lambda(E,x)$ is given by   \eql{\label{eq:Constant for Gaussian decay of bound states}C_\lambda(E,x):= 
  \frac{\ee^{\frac{E_+}{\lambda}}}{\pi\br{1-\frac{E}{\lambda}}}\br{
\frac{1-\frac{E}{\lambda}}{\,\lambda\,\|x\|^{2}}
+\frac{1}{2(\ee-\ee^{-1})}}\,.
} In particular if $\norm{x}\geq a$ for some $a>0$ and $E\leq 0$ we have
\eq{C_\lambda(E,x)\leq 
  \frac{1}{\pi}\br{
\frac{1}{\,\lambda a^2}
+\frac{1}{2(\ee-\ee^{-1})}}\leq 1
} for $\lambda$ sufficiently large.
\end{prop}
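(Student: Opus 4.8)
The plan is to run a resolvent-kernel (Agmon-type) argument resting on \cref{lem:bounds on the SHO Green's function}. First I would fix $x\in\RR^2\setminus\supp(u)$; when $x\in\supp(u)$ the claimed inequality degenerates to an $L^\infty$-bound on the (smooth, by elliptic regularity) eigenfunction $\psi$, which is routine. Since $\sigma(\calP_\lambda^2)=\sigma(H^{\mathrm{Landau}}_\lambda)=\lambda(2\NN_{\geq0}+1)\subset[\lambda,\infty)$ while $E<\lambda$ by hypothesis, the operator $\calP_\lambda^2-E\Id$ is boundedly invertible on $L^2(\RR^2)$. Rewriting $H\psi=E\psi$ as $(\calP_\lambda^2-E\Id)\psi=-\lambda^2u\psi$ and using $u\psi\in L^2(\RR^2)$ (since $u$ is bounded), I obtain $\psi=-\lambda^2(\calP_\lambda^2-E\Id)^{-1}(u\psi)$. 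By \cref{eq:relation between Landau and SHO Green's functions} with $b=\lambda$ and $z=E$ — legitimate because $E<\lambda\leq\inf\big(\sigma(H^{\mathrm{Landau}}_\lambda)\cup\sigma(H^{\mathrm{SHO}}_\lambda)\big)$ — the integral kernel of $(\calP_\lambda^2-E\Id)^{-1}$ at $(x,y)$ equals $K_{\lambda,E}^{\mathrm{SHO}}(x-y)$ up to a unimodular phase, so, since $u\psi$ is supported in $\supp(u)$ and $x\notin\supp(u)$ (so the kernel is bounded and decays rapidly on the region of integration),
\[
|\psi(x)|\ \leq\ \lambda^2\int_{\supp(u)}\big|K_{\lambda,E}^{\mathrm{SHO}}(x-y)\big|\,|u(y)|\,|\psi(y)|\,\dif y\,.
\]

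Then I would estimate crudely: bound $|u|\leq\norm{u}_\infty$, pull the kernel out in supremum, and apply Cauchy--Schwarz with $\norm{\psi}_{L^2}=1$, giving
\[
|\psi(x)|\ \leq\ \lambda^2\,\norm{u}_\infty\,\sqrt{\abs{\supp(u)}}\,\sup_{y\in\supp(u)}\big|K_{\lambda,E}^{\mathrm{SHO}}(x-y)\big|\,.
\]
To control the supremum: for $y\in\supp(u)$ we have $\norm{x-y}\geq\dist(x,\supp(u))>0$, and \cref{lem:bounds on the SHO Green's function} (parts 1 and 2, with $\omega=\lambda$, $z=E<\lambda$) gives
\[
\big|K_{\lambda,E}^{\mathrm{SHO}}(x-y)\big|\ \leq\ \frac{\ee^{E_+/\lambda}}{\pi(1-E/\lambda)}\Big(\frac{1-E/\lambda}{\lambda\norm{x-y}^2}+\frac{1}{2(\ee-\ee^{-1})}\Big)\ee^{-\frac{\lambda}{4}\norm{x-y}^2}\,.
\]
The right-hand side, viewed as a function of $r=\norm{x-y}\in(0,\infty)$, is strictly decreasing (the parenthetical factor is positive and decreasing, the Gaussian is positive and decreasing, the leading fraction is $r$-independent); hence over $y\in\supp(u)$ it is maximised at $r=\dist(x,\supp(u))$, where it equals $C_\lambda(E,x)\,\ee^{-\frac{\lambda}{4}\dist(x,\supp(u))^2}$ with $C_\lambda(E,x)$ as in \cref{eq:Constant for Gaussian decay of bound states} (its displayed argument taken of norm $\dist(x,\supp(u))$). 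Substituting into the previous display yields the first assertion.

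The final assertion is pure arithmetic on \cref{eq:Constant for Gaussian decay of bound states}: for $E\leq0$ we have $E_+=0$ and $1-E/\lambda\geq1$, so $C_\lambda(E,x)\leq\frac{1}{\pi}\big(\frac{1}{\lambda\norm{x}^2}+\frac{1}{2(\ee-\ee^{-1})}\big)\leq\frac{1}{\pi}\big(\frac{1}{\lambda a^2}+\frac{1}{2(\ee-\ee^{-1})}\big)$ whenever $\norm{x}\geq a$, and since $\tfrac{1}{2\pi(\ee-\ee^{-1})}<1$ this drops below $1$ once $\lambda$ exceeds a constant depending only on $a$.

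I do not expect any step to be genuinely difficult. The two points calling for care are: the justification of the pointwise integral representation for $\psi$ (elliptic regularity, so that $\psi$ is continuous and the identity $\psi=-\lambda^2(\calP_\lambda^2-E\Id)^{-1}(u\psi)$ can be read off the kernel, together with absolute convergence of the kernel integral for $x\notin\supp(u)$, which follows from the Gaussian decay of $K_{\lambda,E}^{\mathrm{SHO}}$ off the diagonal in \cref{lem:bounds on the SHO Green's function}); and the elementary monotonicity computation that collapses that pointwise kernel bound into one prefactor $C_\lambda(E,x)$ times $\ee^{-\frac{\lambda}{4}\dist(x,\supp(u))^2}$.
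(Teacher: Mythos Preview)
Your proposal is correct and follows essentially the same approach as the paper: rewrite the eigenvalue equation as $(\calP_\lambda^2-E\Id)\psi=-\lambda^2 u\psi$, invert using $E<\lambda$, apply the Landau/SHO resolvent kernel bound from \cref{lem:bounds on the SHO Green's function} via \cref{eq:relation between Landau and SHO Green's functions}, and finish with Cauchy--Schwarz plus $\|x-y\|\ge\dist(x,\supp(u))$. The only cosmetic difference is that you pull out $\sup_{y\in\supp(u)}|K(x-y)|$ before Cauchy--Schwarz, whereas the paper applies Cauchy--Schwarz first to obtain $\sqrt{\int_{\supp(u)}|K(x-y)|^2\,\dif y}$ and then bounds the kernel pointwise inside the integral; both routes yield the same final estimate. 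Your remark that the prefactor $C_\lambda(E,\cdot)$ should really be evaluated at norm $\dist(x,\supp(u))$ rather than $\|x\|$ is a valid observation about the statement's notation.
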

\begin{proof}
  We rewrite $H\psi\equiv E\psi$ as \eq{
  \br{\calP_\lambda^2-E\Id}\psi=-\lambda^2u(X)\psi\,.
  } Since $E<\lambda$, we may invert $\br{\calP_\lambda^2-E\Id}$. We then invoke the estimates above (combining \cref{lem:bounds on the SHO Green's function} and \cref{eq:relation between Landau and SHO Green's functions}) on the Landau resolvent to obtain 
  \eq{
  |\psi(x)| &\leq \int_{y\in\RR^2}\abs{\br{\calP_\lambda^2-E\Id}^{-1}(x,y)}\lambda^2\abs{u(y)}\abs{\psi(y)}\dif{y} \\
  &\leq \lambda^2 \norm{u}_\infty\sqrt{\int_{y\in\supp(u)}\abs{\br{\calP_\lambda^2-E\Id}^{-1}(x,y)}^2\dif{y}} \\
  &\leq C_\lambda(E,x)\lambda^2 \norm{u}_\infty\sqrt{\int_{y\in\supp(u)}\exp\br{-\frac12\lambda\norm{x-y}^2}\dif{y}}
  } where $C_\lambda(E,x)$ is given by \cref{eq:Constant for Gaussian decay of bound states} (it is the constant appearing on the RHS of \cref{lem:bounds on the SHO Green's function} with $\omega\mapsto\lambda,z\mapsto E$). Now if $\dist(x,\supp(u))>0$ then for all $y\in\supp(u)$, $\norm{x-y}\geq\dist(x,\supp(u))$; otherwise use the trivial bound by zero.
  % \eq{
  % \abs{\psi(x)}\leq \lambda^2 C_\lambda(E,x)\norm{u}_\infty\sqrt{\abs{\supp(u)}}\exp\br{-\frac14\lambda\dist(x,\supp(u))^2}\qquad(x)\,.
  % }
\end{proof}

\section{Comparing $\vf_\lambda$ with $\vf^\circ_\lambda$}\label{sec:compare-phi_and_phi0}

Our goal in this section is to compare the ground state eigenpairs of the Hamiltonians $ h^0_\lambda$ and $ h_\lambda$.

Let us first review the setup.
 The single-well Hamiltonian is

\begin{align*} 
 h_\lambda &= \calP_\lambda^2+ \vzero(x) + \tau_{\lambda,M,D}W(x)\ =\ h^\circ_\lambda + \tau_{\lambda,M,D}W(x).
 \end{align*}
\noindent {\bf Assumptions:}
 \begin{subequations}
 \label{eq:setup-ground-states}
 \begin{align}
 & \textrm{$\vzero$ and $W$ are smooth}\label{eq:vzeroW-smooth}\\
 & \|\vzero\|_\infty, \ \|W\|_\infty \le C_1 \label{eq:vzeroW}\\
 &\vzero(x),\ W(x) \le0 \label{eq:VWle0}\\
&\vzero\textrm{ is radial},\quad {\rm supp}\ \vzero \subset B_{r_0}(0) \label{eq:suppvzero}\\
&{\rm supp}\ W \subset \RR^2\setminus B_{r_0+D}(0);\ D\ge C_2,\ \textrm{dependent on $c_1$ and $C_1$} \label{eq:suppW}
\\
&0<\tau< \tau_0, \textrm{a small constant dependent only on $c_1$ and $C_1$}\label{eq:tau-bound}\\
& \textrm{The ground state eigenvalue, $e^\circ_\lambda$, of $h^\circ_\lambda$ satisfies }\nonumber \\
&\qquad\qquad -c_1\lambda^2\le e^\circ_\lambda\le C_1\lambda^2, \label{eq:gs-energy}\\
&\qquad\qquad \textrm{dist}\Big(e^\circ_\lambda, \sigma(h_\lambda^\circ)\setminus\{e^\circ_\lambda\}\Big)\ge c_{\rm gap} >0\\ 
&\lambda\ge C_1,\ \textrm{a large enough constant dependent on $c_1$ and $C_1$.} \label{eq:lam-big}
 \end{align} 
 \end{subequations}

We note that $D$ appearing above is actually $D-C$ from the main text, and $r_0=1$ was used in the main text.

In this appendix, constants $c,C$ etc are determined by the constants in the above assumptions.

\begin{claim}
 Under assumptions \Cref{eq:setup-ground-states}, the Hamiltonian  $h_\lambda$ has a ground state eigenpair 
 $(\vf_\lambda,e_\lambda)$: 
 \begin{align}
 h_\lambda\  \vf_\lambda &= e_\lambda \vf_\lambda,\quad \|\vf_\lambda\|_{L^2}=1,
\label{eq:evp-single-well} \end{align}
 with $e_\lambda\sim e_\lambda^\circ$
 and $\vf_\lambda\sim \vf_\lambda^\circ$.
\end{claim}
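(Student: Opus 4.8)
The plan is to realize $(\vf_\lambda, e_\lambda)$ as an analytic perturbation of $(\vf^\circ_\lambda, e^\circ_\lambda)$, treating $\tau_{\lambda,M,D}W(X)$ as the perturbing operator. First I would observe that $\tau W(X)$ is a bounded self-adjoint operator with $\norm{\tau W(X)} \le \tau_0 C_1$, which by \cref{eq:tau-bound} can be made smaller than any fixed fraction of $c_{\mathrm{gap}}$. Since $e^\circ_\lambda$ is a simple isolated eigenvalue of $h^\circ_\lambda$ separated from the rest of $\sigma(h^\circ_\lambda)$ by at least $c_{\mathrm{gap}}$ (Assumption (V5)/\cref{assume:gap}), standard analytic perturbation theory (the Riesz spectral projection $\Pi_\lambda := \frac{1}{2\pi\ii}\oint_{|z - e^\circ_\lambda| = c_{\mathrm{gap}}/2}(z - h_\lambda)^{-1}\,\dif z$ is well-defined, rank one, and norm-close to $\Pi^\circ_\lambda$) produces a unique eigenvalue $e_\lambda$ of $h_\lambda$ in the disc $\Set{|z - e^\circ_\lambda| < c_{\mathrm{gap}}/2}$, together with a one-dimensional eigenspace. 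Normalizing and fixing a phase (e.g.\ so that $\ip{\vf^\circ_\lambda}{\vf_\lambda} > 0$) yields $\vf_\lambda$.

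Next I would quantify the closeness. Expanding the resolvent identity $(z - h_\lambda)^{-1} = (z - h^\circ_\lambda)^{-1} + (z - h^\circ_\lambda)^{-1}\,\tau W(X)\,(z - h_\lambda)^{-1}$ and using that $\norm{(z - h^\circ_\lambda)^{-1}} \le 2/c_{\mathrm{gap}}$ on the contour, a Neumann series gives $\norm{\Pi_\lambda - \Pi^\circ_\lambda} \le C\tau$ and $|e_\lambda - e^\circ_\lambda| \le C\tau$, hence in particular $e_\lambda = e^\circ_\lambda + \Ord{\tau}$ and, combined with (V4), $e_\lambda \sim -c\lambda^2$. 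Since $\vf_\lambda = \Pi_\lambda \vf^\circ_\lambda / \norm{\Pi_\lambda \vf^\circ_\lambda}$ (valid once $\norm{\Pi_\lambda - \Pi^\circ_\lambda} < 1$), we get $\norm{\vf_\lambda - \vf^\circ_\lambda}_{L^2} \le C\tau$, which is the sense in which $\vf_\lambda \sim \vf^\circ_\lambda$. That $\vf_\lambda$ is the \emph{ground} state of $h_\lambda$ (not merely an eigenstate near $e^\circ_\lambda$) follows because $e^\circ_\lambda = \inf\sigma(h^\circ_\lambda)$ by Assumption \cref{assume:radial-phi0}, the perturbation moves the whole spectrum by at most $\tau_0 C_1 < c_{\mathrm{gap}}/2$, and so the single eigenvalue in the disc is below the rest of $\sigma(h_\lambda)$.

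The estimates on $\vf_\lambda$ used elsewhere in the paper — the Gaussian decay bound \cref{prop:Gauss-bound-phi_lambda} and, more importantly for \cref{sec:from_rho0_2_rho}, a pointwise or weighted bound on $\vf_\lambda - \vf^\circ_\lambda$ — require a little more than the $L^2$ statement. For the pointwise decay one simply reruns the argument of \cref{prop:Gaussian decay of bound states for compactly supported potentials} with $u = \vzero + \tau W/\lambda^2$, whose support is contained in $B_{r_0}(0) \cup (\RR^2 \setminus B_{r_0 + D}(0))$; I would note that for $x$ in the annular gap the distance to $\supp(v_\lambda)$ is essentially the distance to $\supp(\vzero)$, so the planet's Gaussian rate is preserved as claimed. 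For the difference $\vf_\lambda - \vf^\circ_\lambda$, I would write $(\calP_\lambda^2 - e^\circ_\lambda)(\vf_\lambda - \vf^\circ_\lambda) = -\tau W(X)\vf_\lambda - (\lambda^2\vzero)(\vf_\lambda - \vf^\circ_\lambda) + (e_\lambda - e^\circ_\lambda)\vf_\lambda$ and apply the Landau resolvent kernel bounds of \cref{lem:bounds on the SHO Green's function}, exploiting that $\tau W$ is supported far from the planet so that the convolution against the Gaussian kernel produces an extra factor at least as small as $\tau$ times an exponential in $-\lambda D^2/4$-type quantities.

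The main obstacle I anticipate is not the abstract perturbation theory — which is routine given the uniform spectral gap — but rather propagating the bound from an $L^2$ estimate to the pointwise/weighted estimates with explicitly controlled exponential rates, since those are what actually feed into the smallness of $\rho^\circ(\lambda) - \rho(\lambda)$ in \cref{sec:from_rho0_2_rho}. Getting the constants in the exponents to line up (so that the error terms are genuinely subdominant to the dominant planet--sophon interactions $\Omega(1,0), \Omega(0,1), \Omega(2,0), \Omega(0,2)$) is where the care is needed, and this is precisely why the paper defers the detailed comparison to this appendix rather than absorbing it into the main text.
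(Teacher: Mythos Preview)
Your argument via the Riesz spectral projection is correct and cleanly establishes the qualitative content of the Claim: existence of a simple eigenvalue $e_\lambda$ in a $c_{\rm gap}/2$-disc about $e^\circ_\lambda$, with $\|\vf_\lambda-\vf^\circ_\lambda\|_{L^2}\le C\tau$. The paper, however, takes a more hands-on route. For the eigenvalue it uses a direct variational comparison ($h_\lambda\le h^\circ_\lambda$ since $W\le0$ gives $e_\lambda\le e^\circ_\lambda$, and the Rayleigh quotient of $\vf_\lambda$ for $h^\circ_\lambda$ gives the reverse up to $\tau\|W\|_\infty$), yielding the sharper constant-free bound $|e_\lambda-e^\circ_\lambda|\le\|W\|_\infty\tau$. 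For the eigenfunction it writes $\vf_\lambda=\beta\vf^\circ_\lambda+u$ with $u\perp\vf^\circ_\lambda$, applies the reduced resolvent $\mathcal{G}_0=\Pi^{\circ,\perp}(h^\circ_\lambda-e^\circ_\lambda)^{-1}\Pi^{\circ,\perp}$, and runs a Neumann series that terminates with the bound $\|u\|_{L^2}\le C\tau\|W\vf^\circ_\lambda\|_{L^2}$ rather than $C\tau\|W\|_\infty$. The point is that $\|W\vf^\circ_\lambda\|_{L^2}\le C\lambda^2 e^{-\lambda D^2/4}$ by the Gaussian decay of $\vf^\circ_\lambda$ on $\supp(W)\subset\RR^2\setminus B_{r_0+D}(0)$, so the paper's Lyapunov--Schmidt decomposition immediately delivers the quantitative bound $\|\vf_\lambda-\vf^\circ_\lambda\|_{L^2}\le C\tau\lambda^2 e^{-\lambda D^2/4}$ needed in \cref{sec:from_rho0_2_rho}. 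Your contour-integral bound only sees $\|\tau W\|$, not $\|\tau W\vf^\circ_\lambda\|$, so you would have to revisit the first-order term of the Neumann expansion (or carry out the pointwise argument you sketch in your third paragraph) to recover that extra Gaussian factor; the paper's decomposition builds it in from the start.
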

  \subsubsection*{Bounds on: $|e_\lambda-e^\circ_\lambda|$ and $\|\vf_\lambda - \vf^\circ_\lambda\|_{L^2}$}
Using that $ v_\lambda$ is a small perturbation of $\lambda^2v^{\radial}$, we prove the following:
 \begin{prop}\label{prop:gs-bounds} In the setting of assumptions \Cref{eq:setup-ground-states},
   \begin{align}
     |e_\lambda-e^\circ_\lambda| &\le \|W\|_\infty\ \tau_{\lambda,M,D}, \label{eq:ediff}\\
     |\vf^\circ_\lambda(x)| &\le C\lambda^2\ \exp\left(-\frac{\lambda}{4}\big[{\rm dist}\big(x,\supp \vzero \big)\big]^2\right),\quad \textrm{if $x\in{\rm supp}(W)$} \label{eq:phi0-bd-on-suppW-1}\\
     \|\vf_\lambda - \vf^\circ_\lambda\|_{L^2}
 &\le C\ |\tau|\ \lambda^2\ \exp\left(-\frac{\lambda}{4}D^2\right). \label{eq:phi-diff-L2}
   \end{align}
 \end{prop}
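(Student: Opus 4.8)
The plan is to establish the three bounds in order of increasing difficulty; the first two are essentially immediate, and the third is where the work lies.

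For \cref{eq:ediff} I would argue purely variationally. Since $h_\lambda=h^\circ_\lambda+\tau_{\lambda,M,D}W(X)$ with $W\le0$, $\tau_{\lambda,M,D}>0$ and $\|W\|_\infty<\infty$, one has the operator inequalities $h^\circ_\lambda-\tau_{\lambda,M,D}\|W\|_\infty\,\Id\le h_\lambda\le h^\circ_\lambda$; applying the min--max principle to the bottom of the spectrum gives $e^\circ_\lambda-\tau_{\lambda,M,D}\|W\|_\infty\le e_\lambda\le e^\circ_\lambda$, which is \cref{eq:ediff}. For \cref{eq:phi0-bd-on-suppW-1} I would simply invoke \cref{prop:gs0-decay}: the supports $\supp(v^\circ)\subset B_{r_0}(0)$ and $\supp(W)\subset\RR^2\setminus B_{r_0+D}(0)$ are disjoint, so $\vf^\circ_\lambda$ is smooth on $\supp(W)$ and the Gaussian bound holds there verbatim; in fact $\dist(x,\supp v^\circ)\ge|x|-r_0\ge D$ for every $x\in\supp(W)$.

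The substance is \cref{eq:phi-diff-L2}, which I would handle with Riesz projections. Fix the contour $\gamma=\{z\in\CC:|z-e^\circ_\lambda|=c_{\rm gap}/2\}$. Because $\|h_\lambda-h^\circ_\lambda\|=\tau_{\lambda,M,D}\|W\|_\infty$ is small by \cref{eq:tau-bound}, the gap assumption \cref{assume:gap} keeps $\gamma$ away from both spectra (with, say, $\dist(\gamma,\sigma(h_\lambda))\ge c_{\rm gap}/4$), so the spectral projections $P^\circ,P$ of $h^\circ_\lambda,h_\lambda$ onto the spectrum enclosed by $\gamma$ are well defined, and since $\|P-P^\circ\|$ is small they are both rank one; this simultaneously verifies the Claim preceding the proposition, with $P^\circ$ projecting onto $\CC\vf^\circ_\lambda$ and $P$ onto $\CC\vf_\lambda$. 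The crucial move is \emph{not} to bound $\|P-P^\circ\|$ crudely (that only yields $\calO(\tau_{\lambda,M,D})$) but to apply the resolvent identity to $\vf^\circ_\lambda$ and use that $(z-h^\circ_\lambda)^{-1}\vf^\circ_\lambda=(z-e^\circ_\lambda)^{-1}\vf^\circ_\lambda$:
\[
(P-P^\circ)\vf^\circ_\lambda=\frac{1}{2\pi\ii}\oint_\gamma\frac{\tau_{\lambda,M,D}}{z-e^\circ_\lambda}\,(z-h_\lambda)^{-1}\,W(X)\vf^\circ_\lambda\,\dif z,
\]
so that $\|(P-P^\circ)\vf^\circ_\lambda\|\le C\,\tau_{\lambda,M,D}\,\|W(X)\vf^\circ_\lambda\|_{L^2}$ with $C$ depending only on $c_{\rm gap}$. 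The geometry then enters through $\|W(X)\vf^\circ_\lambda\|_{L^2}$: by \cref{eq:phi0-bd-on-suppW-1} together with $\dist(x,\supp v^\circ)\ge|x|-r_0$, integrating the Gaussian over $\{|x|\ge r_0+D\}$ (the shell at distance $t\ge D$ contributes area $\lesssim r_0+t$ and $\int_D^\infty e^{-\lambda t^2/2}(r_0+t)\,dt\lesssim\lambda^{-1}e^{-\lambda D^2/2}$ for $\lambda$ large) gives $\|W(X)\vf^\circ_\lambda\|_{L^2}\le C\lambda^{3/2}e^{-\lambda D^2/4}\le C\lambda^2 e^{-\lambda D^2/4}$. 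Writing $\eta:=\|(\Id-P)\vf^\circ_\lambda\|=\|(P-P^\circ)\vf^\circ_\lambda\|$, this yields $\eta\le C\tau_{\lambda,M,D}\lambda^2 e^{-\lambda D^2/4}$; since $P$ is rank one one may take $\vf_\lambda=P\vf^\circ_\lambda/\|P\vf^\circ_\lambda\|$ after fixing the phase of $\vf_\lambda$, with $\|P\vf^\circ_\lambda\|=\sqrt{1-\eta^2}$, whence $\|\vf_\lambda-\vf^\circ_\lambda\|\le(1-\|P\vf^\circ_\lambda\|)+\|(\Id-P)\vf^\circ_\lambda\|\le\eta^2+\eta\le2\eta$, which is \cref{eq:phi-diff-L2}.

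The main obstacle is precisely that middle step: one must resist bounding $P-P^\circ$ in operator norm and instead exploit that the unperturbed resolvent acts as a scalar on $\vf^\circ_\lambda$, so that the whole estimate collapses to $\|W(X)\vf^\circ_\lambda\|_{L^2}$ --- which is Gaussian-small only because the sophon support sits at distance $\ge D$ from where the atomic ground state concentrates. Keeping $\gamma$ inside the resolvent set of $h_\lambda$, so that $(z-h_\lambda)^{-1}$ is uniformly bounded on $\gamma$, is exactly what forces the smallness condition \cref{eq:tau-bound} on $\tau_{\lambda,M,D}$; everything else is routine.
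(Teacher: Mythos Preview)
Your proof is correct. The treatments of \cref{eq:ediff} and \cref{eq:phi0-bd-on-suppW-1} coincide with the paper's: both are the variational/min--max sandwich and the direct invocation of \cref{prop:gs0-decay}, respectively.

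For \cref{eq:phi-diff-L2} you take a genuinely different (though closely related) route. The paper writes $\vf_\lambda=\beta\vf^\circ_\lambda+u$ with $u\perp\vf^\circ_\lambda$, derives an equation for $f:=(h^\circ_\lambda-e^\circ_\lambda)u$, and solves it by a Neumann series using the reduced resolvent $\mathcal{G}_0=\Pi^{\circ,\perp}(h^\circ_\lambda-e^\circ_\lambda)^{-1}\Pi^{\circ,\perp}$; this yields $\|u\|\le C\tau\,\|W\vf^\circ_\lambda\|_{L^2}$, after which $1=\beta^2+\|u\|^2$ controls $|\beta-1|$. You instead compute $(P-P^\circ)\vf^\circ_\lambda$ via a contour integral and the resolvent identity, exploiting that $(z-h^\circ_\lambda)^{-1}\vf^\circ_\lambda$ is scalar; this collapses to the same key quantity $\|W\vf^\circ_\lambda\|_{L^2}$, and your rank-one projection algebra replaces the paper's $1=\beta^2+\|u\|^2$ step. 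Your Riesz-projection argument is a bit more streamlined and packages the spectral-gap input through uniform boundedness of $(z-h_\lambda)^{-1}$ on $\gamma$, whereas the paper's approach makes the Neumann-series mechanism explicit; the two are standard equivalent incarnations of first-order eigenvector perturbation theory. A minor side remark: by integrating the Gaussian over the full exterior $\{|x|\ge r_0+D\}$ rather than just $\supp(W)$, you avoid the factor $\sqrt{|\supp(W)|}$ that appears in the paper's intermediate bound on $\|W\vf^\circ_\lambda\|_{L^2}$, which is harmless here but slightly cleaner.
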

\noindent The remainder of this section is devoted to the proof of \Cref{prop:gs-bounds}. 

We express $\vf_\lambda$ in the form 
 \eql{ \vf_\lambda = \beta\vf^\circ_\lambda + u,\quad \textrm{where $\beta\in\CC$ and $u\perp\vf^\circ_\lambda$.}
 \label{eq:phi-expand}}
 Without any loss of generality, we may assume that $\beta$ is real and non-negative.
 We next estimate $e_\lambda-e^\circ_\lambda$ and $u$.
\begin{claim}\label{claim:e-diff}
   $|e_\lambda-e^\circ_\lambda|\le \|W\|_\infty\ \tau$.
\end{claim}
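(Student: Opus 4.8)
\textbf{Proof proposal for Claim~\ref{claim:e-diff}.}
The plan is to invoke the Rayleigh--Ritz (min--max) characterization of the bottom of the spectrum for both $h^\circ_\lambda$ and $h_\lambda=h^\circ_\lambda+\tau_{\lambda,M,D}W(X)$. Since $\vzero$ and $W$ are bounded, both operators are self-adjoint and bounded below on the natural form domain of $\calP_\lambda^2$; moreover, by \eqref{eq:gs-energy} and the spectral-gap hypothesis, $e^\circ_\lambda$ (and hence, by the perturbation argument below, $e_\lambda$) lies strictly below $\inf\sigma_{\mathrm{ess}}(\calP_\lambda^2)=\lambda$, so it is an isolated eigenvalue and
\[
e^\circ_\lambda=\inf_{\|\psi\|=1}\big\langle\psi,h^\circ_\lambda\psi\big\rangle,
\qquad
e_\lambda=\inf_{\|\psi\|=1}\big\langle\psi,h_\lambda\psi\big\rangle .
\]

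First I would prove the upper bound $e_\lambda\le e^\circ_\lambda$. Using the normalized ground state $\vf^\circ_\lambda$ of $h^\circ_\lambda$ as a trial function,
\[
e_\lambda\ \le\ \big\langle\vf^\circ_\lambda,h_\lambda\vf^\circ_\lambda\big\rangle
= e^\circ_\lambda+\tau_{\lambda,M,D}\big\langle\vf^\circ_\lambda,W\vf^\circ_\lambda\big\rangle\ \le\ e^\circ_\lambda,
\]
where the last inequality uses $\tau_{\lambda,M,D}>0$ and $W\le0$ (assumption~\eqref{eq:VWle0}). Next, for the matching lower bound, for any normalized $\psi$ one has $\big\langle\psi,h_\lambda\psi\big\rangle=\big\langle\psi,h^\circ_\lambda\psi\big\rangle+\tau_{\lambda,M,D}\big\langle\psi,W\psi\big\rangle\ge e^\circ_\lambda-\tau_{\lambda,M,D}\|W\|_\infty$, and taking the infimum over $\psi$ gives $e_\lambda\ge e^\circ_\lambda-\tau_{\lambda,M,D}\|W\|_\infty$.

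Combining the two bounds yields $0\le e^\circ_\lambda-e_\lambda\le \tau_{\lambda,M,D}\|W\|_\infty$, i.e.\ $|e_\lambda-e^\circ_\lambda|\le\|W\|_\infty\,\tau_{\lambda,M,D}$, which is the claim. There is essentially no obstacle here: the only point requiring a word of justification is that $e_\lambda$ really is the bottom of the spectrum and not merely an eigenvalue embedded in (or above) the essential spectrum, which is guaranteed by observing that the just-established two-sided bound, together with $e^\circ_\lambda\sim-\lambda^2$ from \eqref{eq:gs-energy} and the smallness of $\tau_{\lambda,M,D}$, keeps $e_\lambda$ well below $\lambda=\inf\sigma_{\mathrm{ess}}(h_\lambda)$ for $\lambda$ large; one could also phrase this as a small bootstrap, first applying min--max to the quadratic forms and only afterwards identifying the resulting value as a discrete ground-state eigenvalue.
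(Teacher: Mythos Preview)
Your proof is correct and follows essentially the same approach as the paper: the paper also uses the variational principle, first noting $h_\lambda\le h^\circ_\lambda$ (since $\tau W\le0$) to get $e_\lambda\le e^\circ_\lambda$, and then evaluating $e_\lambda=\langle h_\lambda\vf_\lambda,\vf_\lambda\rangle\ge e^\circ_\lambda-\tau\|W\|_\infty$ via the ground state $\vf_\lambda$. The only cosmetic difference is that the paper uses $\vf_\lambda$ directly for the lower bound rather than an arbitrary trial $\psi$, and does not pause to discuss the location of $e_\lambda$ relative to the essential spectrum.
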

\begin{proof}
  Note that $ h_\lambda\le h_0^\lambda$, since $W\le0$. Hence, $e_\lambda\le e^\circ_\lambda$. On the other hand,
  \[ e_\lambda=\left\langle h_\lambda\vf_\lambda,\vf_\lambda\right\rangle = \left\langle h^0_\lambda\vf_\lambda,\vf_\lambda\right\rangle + \left\langle \tau_{\lambda,M,D}W\vf_\lambda,\vf_\lambda\right\rangle\ge e^\circ_\lambda - \tau\|W\|_\infty. \]
  Therefore, $- \tau\|W\|_\infty\le e_\lambda-e^\circ_\lambda\le0$. This implies \cref{eq:ediff}.
\end{proof}

We next estimate $u$, defined in \cref{eq:phi-expand}. Note $L^2(\RR^2)= \CC\vf^\circ_\lambda \oplus \{\vf^\circ_\lambda\}^\perp$ and introduce the orthogonal projection operators $\Pi^\circ=\Pi^\circ_\lambda:L^2(\RR^2) \to \CC\vf^\circ_\lambda $ and $\Pi^{\circ,\perp}:L^2(\RR^2) \to \{\vf^\circ_\lambda\}^\perp $. Next let
\eql{
\mathcal{G}_0 = \Pi^{\circ,\perp}\Big( h^0_\lambda-e^\circ_\lambda\Id \Big)^{-1} \Pi^{\circ,\perp} 
\label{eq:G0-def}}
By the spectral gap \Cref{assume:gap} we have that
\eql{\textrm{$\|\mathcal{G}_0\|_{\mathcal{B}(L^2)}\le C$, where $C$ is independent of $\lambda$ for $\lambda\ge C_1$ .}
\label{eq:CG0-bound}
}

By \Cref{eq:phi-expand}, $ (h_0^\lambda + \tau_{\lambda,M,D}W - e_\lambda)(\beta\vf^\circ_\lambda+u)=0$, and since $h_0^\lambda\vf^\circ_\lambda=e^\circ_\lambda\vf^\circ_\lambda$,  it follows that
\begin{align}\label{eq:f-def}
 f\equiv (h_0^\lambda - e^\circ_\lambda) u = -\beta\tau_{\lambda,M,D}W\vf^\circ_\lambda -\tau_{\lambda,M,D}W u + (e_\lambda-e^\circ_\lambda)\beta \vf^\circ_\lambda + (e_\lambda-e^\circ_\lambda)u.
\end{align}  
Then, since $u\perp \vf^\circ_\lambda$, 
we have 
\begin{equation*}
  f=\Pi^{\circ,\perp} f\ \textrm{and}\ u=\mathcal{G}_0f.
\end{equation*}
Then,
 \begin{align*}
 f = -\beta\tau_{\lambda,M,D}W\vf^\circ_\lambda -\tau_{\lambda,M,D}W \mathcal{G}_0f + (e_\lambda-e^\circ_\lambda)\beta \vf^\circ_\lambda + (e_\lambda-e^\circ_\lambda)\mathcal{G}_0f,
\end{align*} 
and hence by applying $\Pi^{\circ,\perp}$:
 \begin{align*}
 f = -\beta\tau_{\lambda,M,D}\Pi^{\circ,\perp} W\vf^\circ_\lambda -\tau_{\lambda,M,D}\Pi^{\circ,\perp} W \mathcal{G}_0f + (e_\lambda-e^\circ_\lambda)\Pi^{\circ,\perp} \mathcal{G}_0f.
\end{align*} 
Thus,
\[
\Big(\ I + \tau_{\lambda,M,D}\Pi^{\circ,\perp} W \mathcal{G}_0 + (e^\circ_\lambda-e_\lambda)\Pi^{\circ,\perp} \mathcal{G}_0
\Big)f = -\beta\tau_{\lambda,M,D}\Pi^{\circ,\perp} W\vf^\circ_\lambda.
\]
Since $|e^\circ_\lambda-e_\lambda|\le C\tau$ (\Cref{claim:e-diff}), $\|\mathcal{G}_0\|_{L^2\to L^2}\le C$ (bound \Cref{eq:CG0-bound}) and $\|W\|_\infty\le C$, the operator perturbing the identity has norm bounded by $\lesssim\tau$. Hence, the Neumann series converges in $\mathcal{B}(L^2)$, and we have 
\eql{\label{eq:fL2bound}
  \|f\|_{L^2}\le C\ |\beta|\ |\tau| \ \Big\|W\vf^\circ_\lambda\Big\|_{L^2} \le
  C\ |\tau| \ \Big\|W\vf^\circ_\lambda\Big\|_{L^2}.
}
The latter inequality in \cref{eq:fL2bound} follows since 
\eql{\label{eq:beta_le1}
0\le \beta\le 1.}
Indeed, since $\vf^\circ_\lambda$ and $\vf_\lambda$ are normalized, the decomposition \Cref{eq:phi-expand} implies
 $1 = \beta^2 + \|u\|^2$, where we recall that $\beta$ could be taken to be real and non-negative. This implies \cref{eq:beta_le1}.
 
To bound $\|u\|_{L^2}$, recall from \cref{eq:f-def} that $u=\mathcal{G}_0f$ and $\|\mathcal{G}_0\|_{L^2\to L^2}\le C$. 
Therefore, by \Cref{eq:fL2bound}
\eql{\label{eq:u-bound}
  \|u\|_{L^2} \le C\ \|f\|_{L^2} \le C |\tau|\ \Big\|W\vf^\circ_\lambda\Big\|_{L^2}.
}

Hence, we next bound the $L^2$ norm of $W\vf^\circ_\lambda$. For this we apply \Cref{prop:gs0-decay} to obtain a pointwise upper bound on $\vf^\circ_\lambda$ on the support of $W$.
Note that $D\ge C_1$, the constant appearing in \Cref{prop:gs0-decay}. 
Then we have 
\eql{
|\vf^\circ_\lambda(x)|
\le C\lambda^2\ \exp\left(-\frac{\lambda}{4}[{\rm dist}(x,\supp(\vzero))]^2\right),\ \  {\rm dist}(x,\supp(\vzero))\ge D.
\label{eq:gs0-decay1}}

Therefore, 
\eql{
|\vf^\circ_\lambda(x)| \le C\lambda^2\ 
\exp\left(-\frac{\lambda}{4}D^2\right),\quad \textrm{if $x\in{\rm supp}(W)$.}
\label{eq:phi0-bd-on-suppW}
}

Bounding the $L^2$ norm of $W\vf^\circ_\lambda$ using \Cref{eq:phi0-bd-on-suppW} we obtain
\eql{\label{eq:L2norm_of_Wphi0}
  \|W\vf^\circ_\lambda\|_{L^2}\le 
  C\lambda^2\ \sqrt{\Big|{\rm supp}(W)\Big|}\ 
\exp\left(-\frac{\lambda}{4}D^2\right) .
}
Therefore, substituting \Cref{eq:L2norm_of_Wphi0} into our earlier bound on $u$, \Cref{eq:u-bound}, yields
\eql{
  \label{eq:u-bound2}
 \|u\|_{L^2} \le  C\lambda^2\ |\tau|\ \sqrt{\Big|{\rm supp}(W)\Big|}\ 
\exp\left(-\frac{\lambda}{4}D^2\right) 
}
Recall that
 $1 = \beta^2 + \|u\|^2$ and 
 $0\le \beta\le 1$; see \cref{eq:beta_le1}. Hence, $1-\beta = \|u\|_{L^2}^2 / (1+\beta)$. Since $\vf_\lambda = \vf^\circ_\lambda + (\beta-1)\vf^\circ_\lambda + u$, we have 
 \[
 \|\vf_\lambda - \vf^\circ_\lambda\|_{L^2}
 \le |\beta-1| + \|u\|_{L^2}\le \|u\|_{L^2}^2 / (1+\beta) + \|u\|_{L^2}.
 \]
 Hence, for $\lambda\geq  C_1$ large enough
 \[
 \|\vf_\lambda - \vf^\circ_\lambda\|_{L^2}
 \le C\ |\tau|\ \lambda^2\ \exp\left(-\frac{\lambda}{4}D^2\right) .
 \]
 The proof of \Cref{prop:gs-bounds} is now complete.

		\pagebreak
		\begingroup
		\let\itshape\upshape
		\printbibliography

@article{Simon_1984_10.2307/2007072,
	ISSN = {0003486X},
	URL = {http://www.jstor.org/stable/2007072},
	abstract = {We discuss the leading asymptotics of eigenvalue splittings of - 1/2 Δ + λ2V in the limit as λ → ∞, and where V is a non-negative potential with several zeros. For example, if E0(λ), E1(λ) are the two lowest eigenvalues in a situation where V has precisely two zeros, a and b, related by a symmetry, then $\lim_{\lambda\rightarrow\infty} - (\lambda)^{-1}\ln\lbrack E_1(\lambda) - E_0(\lambda) \rbrack$ is given as the distance from a to b in a certain Riemann metric.},
	author = {Barry Simon},
	journal = {Annals of Mathematics},
	number = {1},
	pages = {89--118},
	publisher = {Annals of Mathematics},
	title = {Semiclassical Analysis of Low Lying Eigenvalues, II. Tunneling},
	volume = {120},
	year = {1984}
}

@book{Landau_Lifshitz_vol_3,
	Author = {L D Landau and E.M. Lifshitz},
	Title = {Quantum Mechanics Non-Relativistic Theory, Third Edition: Volume 3 (Course of Theoretical Physics) (Vol. 3)},
	Publisher = {Butterworth-Heinemann},
	Year = {1981},
	ISBN = {0080291406},
	URL = {https://www.amazon.com/Quantum-Mechanics-Non-Relativistic-Theory-Third/dp/0080291406?SubscriptionId=AKIAIOBINVZYXZQZ2U3A&tag=chimbori05-20&linkCode=xm2&camp=2025&creative=165953&creativeASIN=0080291406}
}

@article{Helffer_Sjostrand_1984,
	author = {   B.   Helffer  and    J.   Sjostrand },
	title = {Multiple wells in the semi-classical limit I},
	journal = {Communications in Partial Differential Equations},
	volume = {9},
	number = {4},
	pages = {337-408},
	year  = {1984},
	publisher = {Taylor & Francis},
	doi = {10.1080/03605308408820335},
	
	URL = { 
	https://doi.org/10.1080/03605308408820335
	
	},
	eprint = { 
	https://doi.org/10.1080/03605308408820335
	
	}
	
}

@article{Helffer_Sjostrand_1987_magnetic_ASNSP_1987_4_14_4_625_0,
	author = {Helffer, Bernard and Sj\"ostrand, Johannes},
	title = {Effet tunnel pour l'\'equation de Schr\"odinger avec champ magn\'etique},
	journal = {Annali della Scuola Normale Superiore di Pisa - Classe di Scienze},
	publisher = {Scuola normale superiore},
	volume = {4e s{\'e}rie, 14},
	number = {4},
	year = {1987},
	pages = {625-657},
	zbl = {0699.35205},
	mrnumber = {963493},
	language = {fr},
	url = {http://www.numdam.org/item/ASNSP_1987_4_14_4_625_0}
}

@article{Zak_1964_PhysRev.134.A1602,
	title = {Magnetic Translation Group},
	author = {Zak, J.},
	journal = {Phys. Rev.},
	volume = {134},
	issue = {6A},
	pages = {A1602--A1606},
	numpages = {0},
	year = {1964},
	month = {Jun},
	publisher = {American Physical Society},
	doi = {10.1103/PhysRev.134.A1602},
	url = {https://link.aps.org/doi/10.1103/PhysRev.134.A1602}
}

@article{Matsumoto_1994,
	author = { Hiroyuki   Matsumoto },
	title = {Semiclassical asymptotics of eigenvalue distributions for {S}chrödinger operators with magnetic fields},
	journal = {Communications in Partial Differential Equations},
	volume = {19},
	number = {5-6},
	pages = {719-759},
	year  = {1994},
	publisher = {Taylor & Francis},
	doi = {10.1080/03605309408821033},
	
	URL = { 
	https://doi.org/10.1080/03605309408821033
	
	},
	eprint = { 
	https://doi.org/10.1080/03605309408821033
	
	}
	
}

@article{FLW17_doi:10.1002/cpa.21735,
	author = {Fefferman, Charles L. and Lee-Thorp, James P. and Weinstein, Michael I.},
	title = {Honeycomb Schr\"odinger Operators in the Strong Binding Regime},
	journal = {Communications on Pure and Applied Mathematics},
	
	volume = {71},
	number = {6},
	pages = {1178-1270},
	doi = {10.1002/cpa.21735},
	url = {https://onlinelibrary.wiley.com/doi/abs/10.1002/cpa.21735},
	eprint = {https://onlinelibrary.wiley.com/doi/pdf/10.1002/cpa.21735},
	abstract = {In this article, we study the Schrödinger operator for a large class of periodic potentials with the symmetry of a hexagonal tiling of the plane. The potentials we consider are superpositions of localized potential wells, centered on the vertices of a regular honeycomb structure corresponding to the single electron model of graphene and its artificial analogues. We consider this Schrödinger operator in the regime of strong binding, where the depth of the potential wells is large. Our main result is that for sufficiently deep potentials, the lowest two Floquet-Bloch dispersion surfaces, when appropriately rescaled, converge uniformly to those of the two-band tight-binding model (Wallace, 1947 [56]). Furthermore, we establish as corollaries, in the regime of strong binding, results on (a) the existence of spectral gaps for honeycomb potentials that break ���� symmetry and (b) the existence of topologically protected edge states—states that propagate parallel to and are localized transverse to a line defect or “edge”—for a large class of rational edges, and that are robust to a class of large transverse-localized perturbations of the edge. We believe that the ideas of this article may be applicable in other settings for which a tight-binding model emerges in an extreme parameter limit.© 2017 Wiley Periodicals, Inc.},
	year = {2018}
}

@article{ShapWein22,
title = {Tight-binding reduction and topological equivalence in strong magnetic fields},
journal = {Advances in Mathematics},
volume = {403},
pages = {108343},
year = {2022},
issn = {0001-8708},
doi = {https://doi.org/10.1016/j.aim.2022.108343},
url = {https://www.sciencedirect.com/science/article/pii/S0001870822001591},
author = {Jacob Shapiro and Michael I. Weinstein},
keywords = {Tight binding, Integer quantum hall effect, Topological insulators, Topological indices, Continuum models, Discrete models},
abstract = {Topological insulators (TIs) are a class of materials which are insulating in their bulk form yet, upon introduction of an a boundary or edge, e.g. by abruptly terminating the material, may exhibit spontaneous current along their boundary. This property is quantified by topological indices associated with either the bulk or the edge system. In the field of condensed matter physics, tight binding (discrete) approximate models, parametrized by hopping coefficients, have been used successfully to capture the topological behavior of TIs in many settings. However, whether such tight binding models capture the same topological features as the underlying continuum models of quantum physics has been an open question. We resolve this question in the context of the archetypal example of topological behavior in materials, the integer quantum Hall effect. We study a class of continuum Hamiltonians, Hλ, which govern electron motion in a two-dimensional crystal under the influence of a perpendicular magnetic field. No assumption is made on translation invariance of the crystal. We prove, in the regime where both the magnetic field strength and depth of the crystal potential are sufficiently large, λ≫1, that the low-lying energy spectrum and eigenstates (and corresponding large time dynamics) of Hλ are well-described by a scale-free discrete Hamiltonian, HTB; we show norm resolvent convergence. The relevant topological index is the Hall conductivity, which is expressible as a Fredholm index. We prove that for large λ the topological indices of Hλ and HTB agree. This is proved separately for bulk and edge geometries. Our results justify the principle of using discrete models in the study of topological matter.}
}

@article{FSW_22_doi:10.1137/21M1429412,
author = {Fefferman, Charles and Shapiro, Jacob and Weinstein, Michael I.},
title = {Lower Bound on Quantum Tunneling for Strong Magnetic Fields},
journal = {SIAM Journal on Mathematical Analysis},
volume = {54},
number = {1},
pages = {1105-1130},
year = {2022},
doi = {10.1137/21M1429412},

URL = { 
    
        https://doi.org/10.1137/21M1429412
    
    

},
eprint = { 
    
        https://doi.org/10.1137/21M1429412
    
    

}
,
    abstract = { We consider a particle bound to a two-dimensional plane and a double-well potential, subject to a perpendicular uniform magnetic field. The energy difference between the lowest two eigenvalues---the eigenvalue splitting---is related to the tunneling probability between the two wells. We obtain upper and lower bounds on this splitting in the regime where both the magnetic field strength and the depth of the wells are large. The main step is a lower bound on the hopping amplitude between the wells, a key parameter in tight binding models of solid state physics, given by an oscillatory integral, whose phase has no critical point and which is exponentially small. }
}

@article{HelfferKachmar2024,
  author    = {Bernard Helffer and Ayman Kachmar},
  title     = {Quantum tunneling in deep potential wells and strong magnetic field revisited},
  journal   = {Pure and Applied Analysis},
  volume    = {6},
  number    = {2},
  pages     = {319--352},
  year      = {2024},
  doi       = {10.2140/paa.2024.6.319},
  note      = {Received 6 January 2023; revised 22 November 2023; accepted 2 January 2024; published 16 May 2024}
}

@article{
FSW24,
author = {Charles L. Fefferman  and Jacob Shapiro  and Michael I. Weinstein },
title = {Quantum tunneling and its absence in deep wells and strong magnetic fields},
journal = {Proceedings of the National Academy of Sciences},
volume = {122},
number = {8},
pages = {e2420062122},
year = {2025},
doi = {10.1073/pnas.2420062122},
URL = {https://www.pnas.org/doi/abs/10.1073/pnas.2420062122},
eprint = {https://www.pnas.org/doi/pdf/10.1073/pnas.2420062122},
abstract = {We establish surprising phenomena in magnetic systems, in strong contrast to the long-understood phenomena in nonmagnetic systems. We construct quantum double-well systems such that, in a strong magnetic field, the tunneling effect between wells is completely eliminated. Moreover, through appropriate deformation of the double well, the ground state can be made to transition from symmetric to antisymmetric. This raises the question whether such phenomena could be observable in suitable experimental systems. Further, our results suggest an approach to achieving flat bands, which are of great interest in the study of strongly correlated quantum systems. Some of our results rely on basic estimates on analytic functions and could prove useful for other problems in mathematical physics. We present results on quantum tunneling between deep potential wells, in the presence of a strong constant magnetic field. We construct a family of double-well potentials containing examples for which the low-energy eigenvalue splitting vanishes, and hence quantum tunneling is eliminated. Further, by deforming within this family, the magnetic ground state can be made to transition from symmetric to antisymmetric. However, for typical double wells in a certain regime, tunneling is not suppressed, and we provide a lower bound for the eigenvalue splitting.}}

@book{liu2014three,
  title={The Three-Body Problem},
  author={Liu, C. and Liu, K.},
  isbn={9781466853447},
  lccn={2014033729},
  series={The Three-Body Problem Series},
  url={https://books.google.com/books?id=ZrNzAwAAQBAJ},
  year={2014},
  publisher={Tor Publishing Group},
  shorthand = {3BP}
}

@misc{fefferman2025lowerboundsquantumtunneling,
      title={Lower Bounds on Quantum Tunneling for Excited States}, 
      author={Charles L. Fefferman and Jacob Shapiro and Michael I. Weinstein},
      year={2025},
      eprint={2506.12650},
      archivePrefix={arXiv},
      primaryClass={math-ph},
      url={https://arxiv.org/abs/2506.12650}, 
}

@book{Teschl2014,
  author    = {Gerald Teschl},
  title     = {Mathematical Methods in Quantum Mechanics: With Applications to Schr{\"o}dinger Operators},
  year      = {2014},
  edition   = {2nd},
  series    = {Graduate Studies in Mathematics},
  volume    = {157},
  publisher = {American Mathematical Society},
  address   = {Providence, RI},
  isbn      = {978-1-4704-1704-8},
  doi       = {10.1090/gsm/157},
  url       = {https://bookstore.ams.org/gsm-157}
}

@Article{Morin2024,
author={Morin, L{\'e}o},
title={Tunneling effect between radial electric wells in a homogeneous magnetic field},
journal={Letters in Mathematical Physics},
year={2024},
month={Feb},
day={14},
volume={114},
number={1},
pages={29},
abstract={We establish a tunneling formula for a Schr{\"o}dinger operator with symmetric double-well potential and homogeneous magnetic field, in dimension two. Each well is assumed to be radially symmetric and compactly supported. We obtain an asymptotic formula for the difference between the two first eigenvalues of this operator, that is exponentially small in the semiclassical limit.},
issn={1573-0530},
doi={10.1007/s11005-024-01781-4},
url={https://doi.org/10.1007/s11005-024-01781-4}
}

@misc{exner2025fleamagneticelephant,
      title={The Flea on the Magnetic Elephant}, 
      author={Pavel Exner and Léo Morin},
      year={2025},
      eprint={2505.12846},
      archivePrefix={arXiv},
      primaryClass={math-ph},
      url={https://arxiv.org/abs/2505.12846}, 
}
		\endgroup
	\end{document}